\DeclareDocumentCommand\linearization{}{\ensuremath{\mathcal{L}}\xspace}
\DeclareDocumentCommand\singletonMonomials{}{\ensuremath{\mathcal{S}}}
\DeclareDocumentCommand\properMonomials{}{\ensuremath{\mathcal{M}^{p}}}
\DeclareDocumentCommand\allMonomials{}{\ensuremath{\mathcal{M}}}
\DeclareDocumentCommand\targetMonomials{}{\ensuremath{\mathcal{T}}}
\DeclareDocumentCommand\proj{oo}{\IfValueTF{#1}{\text{proj\,}{}_{#1}}{%
   \text{proj\,}{}}\IfValueTF{#2}{\left(#2\right)}{}}
\DeclareDocumentCommand\constraints{}{\ensuremath{\mathcal{A}}\xspace}
\DeclareDocumentCommand\constraint{}{\ensuremath{\mathcal{C}}\xspace}
\DeclareDocumentCommand\upperNodes{}{\ensuremath{U}\xspace}
\DeclareDocumentCommand\lowerNodes{}{\ensuremath{L}\xspace}
\DeclareDocumentCommand\pred{}{\ensuremath{\mathop{pred}}\xspace}
\DeclareDocumentCommand\succ{}{\ensuremath{\mathop{succ}}\xspace}
\DeclareDocumentCommand\nonemptyIntersection{}{\ensuremath{\mathcal{I}}}
\DeclareDocumentCommand\poset{}{\ensuremath{\mathcal{L}^\star}}
\DeclareDocumentCommand\AND{}{\textsc{AND}\xspace}
\newcommand{\BQP}{\mathrm{BQP}}
\newcommand{\BMP}{\mathrm{BMP}}
\newcommand{\B}[1]{\{0,1\}^{#1}}
\newcommand{\R}{\mathds{R}}
\newcommand{\Z}{\mathds{Z}}
\newcommand{\Q}{\mathds{Q}}
\newcommand{\card}[1]{\lvert{#1}\rvert}
\newcommand{\define}{\coloneqq}
\newcommand{\suchthat}{\;\colon\;}
\DeclareDocumentCommand\setdef{mo}{\left\{#1\IfNoValueTF{#2}{}{ : #2}\right\}}
\DeclareDocumentCommand\transpose{m}{#1^{\top}}
\DeclareDocumentCommand\zerovec{o}{\IfNoValueTF{#1}{\mathbb{O}}{\mathbb{O}_{#1}}}
\DeclareMathOperator{\conv}{conv}
\DeclareMathOperator{\argmin}{argmin}
\DeclareMathOperator{\cl}{cl}
\DeclareDocumentCommand\orderO{o}{\mathcal{O}\IfValueTF{#1}{\left(#1\right)}{}}
\DeclareDocumentCommand\cplxNP{}{\mathsf{NP}}
\renewcommand{\subsection}{\@startsection{subsection}{2}%
  {\z@}%
  {.7\linespacing\@plus\linespacing}%
  {.5\linespacing}%
  {\normalfont\scshape\centering}}
\theoremstyle{definition}
\newtheorem{definition}{Definition}
\newtheorem{example}[definition]{Example}
\newtheorem{remark}[definition]{Remark}
\theoremstyle{plain}
\newtheorem{proposition}[definition]{Proposition}
\newtheorem{lemma}[definition]{Lemma}
\newtheorem{theorem}[definition]{Theorem}
\newtheorem{claim}[definition]{Claim}
\crefname{Claim}{claim}{claims}
\newtheorem{corollary}[definition]{Corollary}
\newcommand*{\claimproofname}{Proof}
\newenvironment{claimproof}[1][\claimproofname]{\begin{proof}[#1]}{\end{proof}}
\setlist[enumerate]{leftmargin=5ex}
\setlist[itemize]{label={$\circ$},leftmargin=3ex}
\tikzstyle{nodeMonomial} += [circle,draw=black,thick,inner sep=0pt,minimum size=2mm]
\tikzstyle{nodeTarget} += [nodeMonomial,fill=black]
\tikzstyle{nodeSingleton} += [nodeMonomial,fill=black!30]
\tikzstyle{arc} += [->,thick]
\tikzstyle{path} += [arc,decorate,decoration={snake,amplitude=0.5pt}]
\title[Integral Linearizations of Binary Polynomials Using Additional Monomials]{Integrality of Linearizations of Polynomials over Binary Variables using Additional Monomials}
\author[Christopher Hojny]{Christopher Hojny$^1$}
\address{$^1$Combinatorial Optimization Group, Eindhoven University of
  Technology, Eindhoven, The Netherlands, {\normalfont{\textit{E-mail}: \texttt{c.hojny@tue.nl}}}}
\author[Marc E.~Pfetsch]{Marc E.~Pfetsch$^2$}
\address{$^2$Research Group Optimization, TU Darmstadt, Darmstadt, Germany, {\normalfont{\textit{E-mail}: \texttt{pfetsch@opt.tu-darmstadt.de}}}}
\author[Matthias Walter]{Matthias Walter$^3$}
\address{$^3$Department of Applied Mathematics, University of Twente,
  Enschede, The Netherlands, {\normalfont{\textit{E-mail}: \texttt{m.walter@utwente.nl}}}}
\date{November 2019}
\begin{document}

\begin{abstract}
  Polynomial optimization problems over binary variables can be expressed as integer programs using a linearization with extra monomials in addition to those arising in the given polynomial.
  We characterize when such a linearization yields an integral relaxation polytope, generalizing work by Del Pia and Khajavirad (SIAM Journal on Optimization, 2018) and Buchheim, Crama and Rodríguez-Heck (European Journal of Operations Research, 2019).
  We also present an algorithm that finds these extra monomials for a given polynomial to yield an integral relaxation polytope or determines that no such set of extra monomials exists.
  In the former case, our approach yields an algorithm to solve the given polynomial optimization problem as a compact LP, and we complement this with a purely combinatorial algorithm.
\end{abstract}
\maketitle

\keywords{\textbf{Keywords:} polynomial optimization, linearization, integer linear programming}
\bigskip

\noindent
\textcolor{red}{After the completion of this manuscript we learned that
  results proved here have essentially already been obtained by Del Pia
  and Khajavirad (``The running intersection relaxation of the multilinear
  polytope'', Optimization Online,
  \url{http://www.optimization-online.org/DB_HTML/2018/05/6618.html}). Note
  that we use a different notation.}

\section{Introduction}
\label{sec:intro}

For a positive integer $n$, the problem of minimizing a polynomial over binary variables \mbox{$x_1$, $x_2$, \dots, $x_n$} can be formulated as
\begin{equation}
   \label{eq:polyMin}
   \min ~\Big\{ \sum_{m \in \targetMonomials} a_m \cdot \prod_{i \in m} x_i \suchthat x \in \setdef{0,1}^n \Big\},
\end{equation}
where $\targetMonomials$ is a set of subsets of $[n] \define \setdef{1, \dotsc, n}$
representing \emph{target monomials} and $a_m \in \R$, $m \in \targetMonomials$, are the
corresponding monomial coefficients.
Note that due to $x^2 = x$ for binary variables~$x$, no higher order powers appear.

A common approach to solve~\eqref{eq:polyMin} is to linearize the monomials in order to obtain a binary linear program.
The most common way is the following.
For each monomial $m \in \targetMonomials$, one adds a binary variable $z_m$ representing the value of the monomimal, i.e.,
$z_m = \prod_{i \in m} x_i$.
One can view $z_m$ as the conjunction/\AND-resultant of the variables in
the monomial.
An integer formulation (the \AND-formulation/relaxation) for this relation is given by
\begin{equation}
  \label{eq:standardLinearization}
   \sum_{i \in m} x_i \leq \card{m} - 1 + z_m \qquad \text{ and } \qquad z_m \leq x_i \text{ for all } i \in m.
\end{equation}
For the special case for $|m| = 2$ the inequalities are sometimes called \emph{McCormick inequalities}~\cite{McCormick76}.
With these variables the objective function in~\eqref{eq:polyMin} then
becomes the linear function $\sum_{m \in \targetMonomials} a_m \cdot z_m$.
The resulting formulation is called the \emph{standard linearization}.
Del Pia and Khajavirad~\cite{DelPiaKhajavirad2018} and Buchheim, Crama and Rodríguez-Heck~\cite{BuchheimCR19} characterized when such a linearization yields integral polytopes.

Clearly, when constructing a linearization, one is not limited to using the target monomials.
For instance, for the target monomials $\targetMonomials = \{\{1,2,3\}, \{2,3,4\}\}$ one may add the auxiliary monomial $\{2,3\}$,
i.e., replace $z_{\{1,2,3\}} = x_1 \cdot x_2 \cdot x_3$ and $z_{\{2,3,4\}} = x_2 \cdot x_3 \cdot x_4$ by $z_{\{1,2,3\}} = x_1 \cdot z_{\{2,3\}}$, $z_{\{2,3,4\}} = z_{\{2,3\}} \cdot x_4$ and $z_{\{2,3\}} = x_2 \cdot x_3$.
Although this makes no difference for the integer solutions, it turns out that the strengths of the corresponding linearizations differ.
In this paper, we extend the integrality characterization of Del Pia and Khajavirad to the case with additional monomials.

\subsection{Literature Review}

Linearizations of multilinear expressions over binary variables have been widely studied in the literature.
To find the strongest linearization of a set of bilinear terms over a set of binary variables, Padberg~\cite{Padberg1989} introduced the \emph{Boolean Quadric Polytope}
\begin{equation*}
   \BQP_n \define \conv \setdef{ (x,y) \in \setdef{0,1}^n \times \setdef{0,1}^{\frac{n(n-1)}{2}} }[ x_i \cdot x_j = y_{i,j} \text{ for all } i, j \in [n] \text{ with } i < j ],
\end{equation*}
which allows to minimize a bilinear expression in binary variables by
integer linear programming techniques.
A standard linear programming formulation of~$\BQP_n$ is to replace~$x_i
\cdot x_j = y_{i,j}$ by the inequalities from the standard linearization~\eqref{eq:standardLinearization} for each term, i.e., adding the McCormick inequalities.
To strengthen this integer programming formulation of~$\BQP_n$, several facet defining inequalities have been derived, see, e.g., Boros and Hammer~\cite{BorosHammer1993}, Macambira and De Souza~\cite{MacambiraDeSouza20003}, and Yajima and Fujie~\cite{YajimaFujie1998}.
Boros et al.~\cite{BorosEtAl1992} completely characterize the first Chv\'atal closure of the standard formulation for $\BQP_n$.
Moreover, De Simone~\cite{DeSimone1990} showed that~$\BQP_n$ is the image of a cut polytope under a bijective linear transformation.
Thus, facet descriptions (or relaxations) of~$\BQP_n$ can be found by studying cut polytopes and applying the linear transformation.
Barahona et al.~\cite{BarahonaEtAl1989} use the relation of~$\BQP_n$ to the cut polytope to develop a branch-and-cut algorithm for solving quadratic 0-1 programs. Letchford and S\o{}rensen~\cite{LetchfordSorensen2014} present an efficient separation algorithm for certain classes of facet defining inequalities for~$\BQP_n$.

Analogously to the bilinear case, multilinear optimization problems can be solved by integer linear programming techniques by minimizing a linear function over the \emph{Boolean Multilinear Polytope}
\begin{equation*}
   \BMP_n(\targetMonomials) \define \conv \Big\{(x,y) \in \setdef{0,1}^n \times \setdef{0,1}^{\targetMonomials} \suchthat \prod_{i \in m} x_i = y_m \text{ for all } m \in \targetMonomials\Big\},
\end{equation*}
where~$\targetMonomials$ is a set of subsets of~$[n]$ encoding multilinear
expressions in a set of~$n$ binary variables.
Integer programming formulations for~$\BMP_n(\targetMonomials)$ are given by the inequalities in~\eqref{eq:standardLinearization} or similar relaxations as discussed by Watters~\cite{Watters1967} or Glover and Woolsey~\cite{GloverWoolsey1974}.
Techniques to reduce the number of constraints in a specific relaxation
of~$\BMP_n(\targetMonomials)$ are discussed by Glover and
Woolsey~\cite{GloverWoolsey1973}.
If~$\card{\targetMonomials} = 1$, the standard linearization is a complete linear description of~$\BMP_n(\targetMonomials)$, but if~$\card{\targetMonomials} \geq 2$, it is not necessarily complete.
Crama and Rodr\'iguez-Heck~\cite{CramaRodriguezHeck2017} derive a family of inequalities that completely describe $\BMP_n(\targetMonomials)$ together with inequalities from the \AND-linearization for every term in~$\targetMonomials$ if~$\card{\targetMonomials} \leq 2$.
Using a hypergraph model to encode~$\targetMonomials$, Del Pia and Khajavirad~\cite{DelPiaKhajavirad2017,DelPiaKhajavirad2018,DelPiaKhajavirad2018a} derive further facet defining inequalities of~$\BMP_n(\targetMonomials)$ by exploiting hypergraph substructures.
In particular, they show that the standard relaxation completely describes~$\BQP_n(\targetMonomials)$ if and only if the hypergraph associated with~$\targetMonomials$ is Berge-acyclic.
Balas and Mazzola~\cite{BalasMazzola1984} derive affine under- and overestimators for a multilinear expression without using the~$y$-variables of~$\BMP_n(\targetMonomials)$.
Furthermore, for a given multilinear inequality, they provide a set of linear inequalities having the same binary solutions.
Since this class of inequalities may be exponentially large, they also show dominance relations between these inequalities to reduce the number of necessary
inequalities~\cite{BalasMazzola1984a}.
To the best of our knowledge, the separation problem for these inequalities is $\cplxNP$-hard in general.

An alternative line of research is to introduce artificial variables for monomials that are not contained in~$\targetMonomials$, e.g., by splitting a monomial into several smaller ones and combining the variables of submonomials via \AND-constraints as indicated in the introduction.
This approach is widely used by practitioners and in solvers, see, e.g.,
Boros and Hammer~\cite{BorH02}, \cite{BerHP09}, or the Pseudo Boolean Competitions, e.g., \cite{PB16}.
Based on this idea, Buchheim and Rinaldi~\cite{BuchheimRinaldi2008} show how arbitrary multilinear programs can be reduced to the quadratic case by
introducing few new variables.
Alternatively, each monomial~$\setdef{i_1, \dots,
i_k }$ in~$\targetMonomials$ can be reduced iteratively by one variable by
introducing an auxiliary variable~$y$ that models the product~$x_{i_1}
\cdot \prod_{j = 2}^k x_{i_j}$ and using McCormick inequalities to link~$y$
with both factors of this product.
Since this relaxation does not exploit the structure of the set~$\targetMonomials$, the corresponding relaxation is typically weak.
Luedtke et~al.~\cite{LuedtkeEtAl2012} investigate the difference of this linearization and the concave and convex envelopes of multilinear expressions, i.e., the strongest relaxation.
In particular, they show that this difference can be arbitrarily large.

\subsection{Contribution and Outline}

Our main contribution is the analysis of polytopes of linearizations that use extra monomials in addition to the singleton and target monomials.
Preliminary results on such linearizations are derived in \cref{sec:linearizations}, in which we restrict ourselves to a subclass of linearizations that involve only a reasonable number of additional monomials (in contrast to all monomial resultants of possible \AND-constraints).
For such linearizations, an associated digraph turns out to be helpful for the analysis.
In \cref{sec:integrality}, our first main result is presented, which is an integrality characterization for the corresponding relaxation polytopes (\cref{thm:simpleLinearizationProjectionIntegral}).
The second main result, presented in \cref{sec:existence}, solves the problem of determining if extra monomials can be added in order to obtain an integral relaxation (\cref{thm:RIP}).
The latter problem is also addressed algorithmically in \cref{thm:RIPalgorithm}, which yields a polynomial-time algorithm to solve binary polynomial optimization problems, provided there exists a linearization whose digraph is acyclic in the undirected sense.
In the final section, we give an outlook on further research directions.

\section{Simple Linearizations and Their Digraphs}
\label{sec:linearizations}

We consider polynomials in $\R[x_1, x_2, \dotsc, x_n]$ with variables $x_1$, $x_2$,\dots, $x_n$.
As basic building blocks, we represent monomials by the index sets of their variables, i.e., $m \subseteq [n]$ encodes the monomial $m(x) \define \prod_{i \in m} x_i$.
The \emph{target monomials} $\targetMonomials$ are the monomials we are interested in, i.e., those appearing in the polynomial.
The variables $x_1, \dotsc, x_n$ are identified with the variables for the \emph{singleton monomials} $\singletonMonomials \define \setdef{ \setdef{i} }[ i \in [n] ]$.

A \emph{linearization}, denoted by $\linearization = (n, \allMonomials, \constraints)$, consists of a set $\allMonomials \subseteq 2^{[n]}$ of monomials as well as a set~\constraints of \AND-constraints.
In order to model optimization problems over polynomials with binary variables $x_1, \dotsc, x_n$ and target monomials $\targetMonomials$, we assume \mbox{$\allMonomials \supseteq \singletonMonomials \cup \targetMonomials$}.
Each \emph{\AND-constraint}~$\constraint \in \constraints$ combines monomials $m_1$, $m_2$, \dots, $m_k \in \allMonomials$ to a new monomial in $\allMonomials$.
This monomial evaluates to 1 if and only if all variables of the combined monomials are 1.
Thus, we write $\constraint \define \setdef{m_1, m_2, \dotsc, m_k}$ and the resulting monomial is $\bigcup \constraint \define m_1 \cup m_2 \cup \dotsb \cup m_k \in \allMonomials$.
Throughout our paper, we denote by $\properMonomials \define \allMonomials \setminus \singletonMonomials$ the \emph{proper monomials}.
Moreover, we only consider linearizations that are \emph{consistent} in the sense that each proper monomial $m \in \properMonomials$ is the resultant of an \AND-constraint, and that the monomials of that \AND-constraint are proper subsets of $m$.
In other words, there shall exist at least one constraint $\constraint \in \constraints$ with $m = \bigcup \constraint$ for each $m \in \properMonomials$ and $|m'| < |m|$ holds for each $m' \in \constraint$.

The \emph{standard linearization} mentioned in the introduction is then given by $\linearization = (n, \allMonomials, \constraints)$ with $\allMonomials = \singletonMonomials \cup \targetMonomials$ and $\constraints = \setdef{ \setdef{ \setdef{i} }[ i \in m ] }[ m \in \targetMonomials ]$.
It has one constraint per target monomial, combining the singleton monomials contained in this target monomial.

A linearization $\linearization = (n, \allMonomials, \constraints)$ induces a \emph{relaxation} $P(\linearization) \subseteq \R^{\allMonomials} $, which is the polytope defined by the following inequalities on the binary variables $y_m$ for all monomials $m \in \allMonomials$:
\begin{subequations}
  \label{EquationLinearization}
  \begin{alignat}{7}
    y_m &\in [0,1] &&\qquad \text{for all } m \in \allMonomials, \label{eq:linearizationBounds} \\
    y_{\bigcup \constraint} &\leq y_m &&\qquad \text{for all } \constraint \in \constraints \text{ and all } m \in \constraint \label{eq:linearizationPair}, \text{ and} \\
    \sum_{m \in \constraint} y_m &\leq y_{\bigcup \constraint} + \card{\constraint} - 1 &&\qquad \text{for all } \constraint \in \constraints. \label{eq:linearizationSum}
  \end{alignat}
\end{subequations}

It is well known that constraints~\eqref{eq:linearizationBounds}--\eqref{eq:linearizationSum} correctly model products of binary variables, which is a consequence of the following result, which we provide for completeness.

\begin{proposition}
   \label{thm:linearizationConsistency}
   Let $\linearization = (n, \allMonomials, \constraints)$ be a linearization, and let $\targetMonomials \subseteq \properMonomials$ be a subset of the proper monomials.
   A binary vector $y \in \setdef{0,1}^{\singletonMonomials \cup \targetMonomials}$ can be extended to $y' \in P(\linearization) \cap \Z^{\allMonomials}$ if and only if $y_m = \prod_{i \in m} y_{\setdef{i}}$ holds for each monomial $m \in \targetMonomials$.
\end{proposition}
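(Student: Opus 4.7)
The plan is to prove each implication by leveraging the consistency condition of the linearization, which guarantees that every proper monomial $m \in \properMonomials$ arises as $\bigcup \constraint$ for some $\constraint \in \constraints$ whose operands are strict subsets of $m$. This structure supports induction on $\card{m}$ in both directions.

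For the ``if'' direction, assume $y_m = \prod_{i \in m} y_{\setdef{i}}$ for every $m \in \targetMonomials$, and simply define an extension by $y'_m \define \prod_{i \in m} y_{\setdef{i}}$ for all $m \in \allMonomials$. This agrees with $y$ on $\singletonMonomials$ trivially and on $\targetMonomials$ by hypothesis, so the task reduces to verifying $y' \in P(\linearization)$. I would check the three families of constraints case-by-case: \eqref{eq:linearizationBounds} is immediate since each $y'_m \in \setdef{0,1}$; \eqref{eq:linearizationPair} follows because $m \subseteq \bigcup \constraint$ gives $\prod_{i \in \bigcup \constraint} y_{\setdef{i}} \leq \prod_{i \in m} y_{\setdef{i}}$; and for \eqref{eq:linearizationSum}, if $y'_{\bigcup \constraint} = 1$ then all operand values equal $1$ and the inequality holds with equality, while if $y'_{\bigcup \constraint} = 0$ some $y_{\setdef{i}}$ with $i \in \bigcup \constraint = \bigcup_{m' \in \constraint} m'$ is zero, which forces at least one $y'_{m'}$ in the sum to vanish and bounds the sum by $\card{\constraint} - 1$.

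For the ``only if'' direction, I would prove by induction on $\card{m}$ that any integer extension $y' \in P(\linearization) \cap \Z^{\allMonomials}$ satisfies $y'_m = \prod_{i \in m} y'_{\setdef{i}}$ for every $m \in \allMonomials$; restricting to $m \in \targetMonomials$ then yields the required identity on $y$. The base case $\card{m}=1$ is trivial. For the inductive step, pick any $\constraint \in \constraints$ witnessing consistency of $m$; the inductive hypothesis applies to each operand $m' \in \constraint$ since $\card{m'} < \card{m}$. If some $y'_{\setdef{i}} = 0$ with $i \in m$, then $i$ lies in some operand $m'$, forcing $y'_{m'} = 0$ by induction and then $y'_m = 0$ via \eqref{eq:linearizationPair}; if all singletons in $m$ equal $1$, every operand value equals $1$, so \eqref{eq:linearizationSum} together with the upper bound $y'_m \leq 1$ pins $y'_m = 1$. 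No serious obstacle is anticipated: the argument is essentially driven by consistency and the dichotomy on whether the singletons in $m$ are all active. The only point deserving mild care is ensuring, in the inductive step, that $\bigcup_{m' \in \constraint} m' = m$, which is precisely the definition $\bigcup \constraint = m$ and is what guarantees that a zero singleton is detected by some operand.
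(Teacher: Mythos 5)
Your proposal is correct and follows essentially the same route as the paper: the same extension $y'_m \define \prod_{i \in m} y_{\setdef{i}}$ with a direct check of the three constraint families for sufficiency, and for necessity an induction on $\card{m}$ driven by consistency, which is just the explicit form of the paper's minimal-counterexample argument.
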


\begin{proof}
   For sufficiency, we extend $y$ to $y'$ by setting $y'_m \define \prod_{i \in m} y_{\setdef{i}} \in \setdef{0,1}$ for all monomials $m \in \properMonomials$.
   Clearly, $y'$ extends $y$ since $y'_m = y_m$ holds for all $m \in \singletonMonomials \cup \targetMonomials$.
   Since $y'$ is   binary, it remains to verify that $y'$ satisfies constraints~\eqref{eq:linearizationPair} and~\eqref{eq:linearizationSum}.
   Let $\constraint \in \constraints$ be one of the constraints of $\linearization$ and let $m' \define \bigcup \constraint$ be the resulting monomial.
   For each $m \in \constraint$ we have $m' \supseteq m$, which proves $y'_{m'} = \prod_{i \in m'} y'_{\setdef{i}} \leq \prod_{i \in m} y'_{\setdef{i}} = y'_m$, establishing~\eqref{eq:linearizationPair}.
   To see that also~\eqref{eq:linearizationSum} is satisfied, assume, for the sake of contradiction, that $\sum_{m \in \constraint} y'_m > y'_{m'} + \card{\constraint} - 1$ holds.
   Since $y'$ is binary, this implies $y'_m = 1$ for all $m \in \constraint$.
   This in turn implies $y'_{\setdef{i}} = 1$ for all $i \in m'$, and we obtain $y'_{m'} = 1$, a contradiction.

   For necessity, let $y' \in P(\linearization) \cap \Z^{\allMonomials}$ extend a vector $y \in \setdef{0,1}^{\singletonMonomials \cup \targetMonomials}$.
   Assume, for the sake of contradiction, that there exists a proper monomial $m' \in \properMonomials$  such that $y'_{m'} \neq \prod_{i \in m'} y'_{\setdef{i}}$ holds.
   We can assume $m'$ to have minimal cardinality among all such monomials.
   By consistency of $\linearization$, there exists a constraint $\constraint \in \constraints$ with $m' = \bigcup \constraint$ such that $\card{m} < \card{m'}$ holds for each $m \in \constraint$.
   Hence, minimality of $\card{m'}$ ensures that $y'_m = \prod_{i \in m} y'_{\setdef{i}}$.
   It is easy to see that $y'_{m'} = \prod_{m \in \constraint} y'_m$ holds by constraints~\eqref{eq:linearizationBounds}--\eqref{eq:linearizationSum}.
   We obtain $y'_{m'} = \prod_{m \in \constraint} \prod_{i \in m} y'_{\setdef{i}} = \prod_{i \in m'} y'_{\setdef{i}}$, where the last equation holds due to $y'_{\setdef{i}} \cdot y'_{\setdef{i}} = y'_{\setdef{i}}$. This contradicts our assumption and concludes the proof.
\end{proof}

In principle, linearizations admit that a (proper) monomial is the result of several \AND-constraints.
However, for practical purposes (e.g., the number of constraints), it is interesting to consider those linearizations in which each proper monomial is the result of \emph{exactly one} \AND-constraint.
We call such linearizations \emph{simple}.
Note that this is equivalent to the requirement $\card{\constraints} = \card{\properMonomials}$.

The \emph{digraph} associated with a simple linearization $\linearization = (n, \allMonomials, \constraints)$ is a directed acyclic graph $D(\linearization)$ having nodes for all monomials $m \in \allMonomials$.
There is an arc from the node $\bigcup \constraint$ for each constraint $\constraint \in \constraints$ to each of its \emph{child nodes} $m \in \constraint$.
Note that a simple linearization is uniquely determined by its digraph.
The \emph{in-degree} of monomial $m \in \allMonomials$ is the number of \AND-constraints $\constraint \in \constraints$ with $m \in \constraint$.
Similarly, the \emph{out-degree} of a monomial~$\bigcup \constraint \in \properMonomials$ is $\card{\constraint}$ and $0$ otherwise.
A first simple property that can be expressed in terms of $D(\linearization)$ is the following implication of consistency.
\begin{proposition}
   \label{thm:consistencyPaths}
   Let $\linearization = (n, \allMonomials, \constraints)$ be a linearization and let $m \in \allMonomials$ be a monomial.
   Then for each $i \in m$, the digraph $D(\linearization)$ contains at least one path from node $m$ to the singleton node $\{i\} \in \singletonMonomials$.
\end{proposition}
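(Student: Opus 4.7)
The plan is to proceed by induction on the cardinality $\card{m}$ of the monomial. The base case is $\card{m}=1$, in which $m$ is itself the singleton $\setdef{i}$ for the unique $i \in m$; here the trivial (zero-length) path at the node $m = \setdef{i}$ already provides the claim.

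For the inductive step, assume the statement holds for all monomials of cardinality strictly less than $\card{m}$, and let $i \in m$ with $\card{m}\geq 2$, so that $m \in \properMonomials$. By consistency of $\linearization$, there exists an \AND-constraint $\constraint \in \constraints$ with $m = \bigcup \constraint$ such that $\card{m'} < \card{m}$ for every $m' \in \constraint$. Since $i \in m = \bigcup \constraint$, there is some $m^\star \in \constraint$ with $i \in m^\star$. By the definition of $D(\linearization)$, the constraint $\constraint$ contributes an arc from $m$ to $m^\star$. Applying the induction hypothesis to $m^\star$ (which is possible because $\card{m^\star}<\card{m}$) yields a path from $m^\star$ to $\setdef{i}$; concatenating the arc $m \to m^\star$ with this path produces the desired path from $m$ to $\setdef{i}$.

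There is essentially no obstacle here: the proposition is a direct unwinding of the consistency requirement, which has been built into the definition of a linearization precisely so that every proper monomial can be decomposed through child monomials of strictly smaller size. The only point worth being explicit about is the twofold use of consistency in the inductive step, namely the existence of a decomposing constraint $\constraint$ and the strict decrease $\card{m'}<\card{m}$ that makes the induction well-founded.
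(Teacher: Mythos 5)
Your proof is correct; the paper states this proposition without proof, presenting it as an immediate implication of consistency, and your induction on $\card{m}$ is exactly the natural formalization of that implication. The only cosmetic remark is that $D(\linearization)$ is formally introduced in the paper only for simple linearizations, but the arc set you use (an arc from $\bigcup\constraint$ to each $m'\in\constraint$ for every $\constraint\in\constraints$) is the evident general definition, so nothing in your argument is affected.
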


We close this section with an example that illustrates the structure of a
linearization's digraph. We will use this example as a running example
throughout this article to illustrate different properties of
linearizations.

\begin{example}
  \label{ex:runningExample}
  Let~$n = 6$ and consider the set of target monomials
  \[
  \targetMonomials = \big\{ \{1,2,3,4\}, \{3,4,5\}, \{4,5,6\}\big\}.
  \]
  Figure~\ref{fig:runningExample} shows the digraph of a simple linearization
  of~$\targetMonomials$, where singleton monomials are colored gray and
  target monomials are colored black.
  Throughout this article, we use the linearization corresponding to this digraph as a running example.
\end{example}

\begin{figure}[htb]
   \begin{tikzpicture}
      \node[nodeSingleton,label=below:\footnotesize{$\{1\}$}] (1) at (0,0) {};
      \node[nodeSingleton,label=below:\footnotesize{$\{2\}$}] (2) at (2,0) {};
      \node[nodeSingleton,label=below:\footnotesize{$\{3\}$}] (3) at (4,0) {};
      \node[nodeSingleton,label=below:\footnotesize{$\{4\}$}] (4) at (6,0) {};
      \node[nodeSingleton,label=below:\footnotesize{$\{5\}$}] (5) at (8,0) {};
      \node[nodeSingleton,label=below:\footnotesize{$\{6\}$}] (6) at (10,0) {};

      \node[nodeMonomial,label=left:\footnotesize{$\{1,2\}$}] (12) at (1,1) {};
      \node[nodeMonomial,label=left:\footnotesize{$\{2,3\}$}] (23) at (3,1) {};
      \node[nodeMonomial,label=left:\footnotesize{$\{3,4\}$}] (34) at (5,1) {};
      \node[nodeMonomial,label=right:\footnotesize{$\{4,6\}$}] (46) at (9,1) {};

      \node[nodeMonomial,label=left:\footnotesize{$\{1,2,3\}$}] (123) at (2,2) {};
      \node[nodeMonomial,label=left:\footnotesize{$\{2,3,4\}$}] (234) at (4,2) {};
      \node[nodeTarget,label=left:\footnotesize{$\{3,4,5\}$}] (345) at (6,2) {};
      \node[nodeTarget,label=right:\footnotesize{$\{4,5,6\}$}] (456) at (8.5,2) {};

      \node[nodeTarget,label=left:\footnotesize{$\{1,2,3,4\}$}] (1234) at (3,3) {};

      \node[nodeMonomial,label=left:\footnotesize{$\{1,2,3,4,6\}$}] (12345) at (4,4) {};

      \foreach \a/\b in {%
         12345/1234, 12345/345,
         1234/123, 1234/234,
         123/12, 123/23,
         234/23, 234/34,
         345/34, 345/5,
         456/46, 456/5,
         12/1, 12/2,
         23/2, 23/3,
         34/3, 34/4,
         46/4, 46/6%
         }%
      {
         \draw[arc] (\a) -- (\b);
      }
   \end{tikzpicture}
   \caption{The linearization graph of a simple linearization corresponding to the target monomials from \cref{ex:runningExample}.}
   \label{fig:runningExample}
\end{figure}

\section{Integrality of Linearizations}
\label{sec:integrality}

We say that a linearization $\linearization = (n, \allMonomials, \constraints)$ is \emph{integral with respect to $\targetMonomials \subseteq \properMonomials$} if the projection of its relaxation
$P(\linearization)$ onto the variables $y_m$ with $m \in \singletonMonomials \cup \targetMonomials$ is an integral polytope.
In this section, we will prove our first main result, a characterization of this property for simple linearizations in terms of $D(\linearization)$.
In fact, if the inequality system~\eqref{EquationLinearization} defines an integral polytope for a given linearization, then it is totally dual integrality (TDI).

We first consider a single \AND-constraint and establish the TDI property.
Then we provide a preprocessing algorithm that removes redundant parts of a linearization.
This algorithm motivates our integrality conditions whose sufficiency and necessity are proved in the last part of this section.

\subsection{A Single \texorpdfstring{\AND}{AND}-constraint}
\label{sec:ntegralitySingle}

It is well known that~\eqref{EquationLinearization} defines an integral polytope (see~\cite{Al-KhayyalF83,McCormick76} for the bilinear case and \cite{Crama93,LuedtkeEtAl2012} for the general case).
To the best of our knowledge, however, it is not widely known that the corresponding system is TDI.

\begin{proposition}
   \label{thm:ANDisTDI}
   Let~$m$ be a monomial, and let~$\linearization$ be the linearization that consists of the single \AND-constraint for~$m$.
   Then System~\eqref{EquationLinearization} is TDI, and thus~$P(\linearization)$ is an integral polytope.
\end{proposition}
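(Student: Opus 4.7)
The plan is to prove TDI of system~\eqref{EquationLinearization} directly; integrality of $P(\linearization)$ then follows from the Edmonds--Giles theorem. Set $k \define \card{m}$ and, without loss of generality, assume $m = [k]$ and $\allMonomials = \singletonMonomials \cup \setdef{m}$, writing $y_i$ for $y_{\setdef{i}}$; it is standard (cf.\ the references already cited in the statement) that $P(\linearization) = \conv \setdef{y \in \setdef{0,1}^{k+1}}[y_m = \prod_{i=1}^{k} y_i]$.

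Fix an integer objective $c \in \Z^{\allMonomials}$ for which the primal LP has a finite optimum, and choose a binary optimal vertex $y^\star$, parametrized by $S \define \setdef{i \in [k]}[y^\star_i = 1]$. Depending on $S$, primal optimality of $y^\star$ against neighboring vertices yields a small set of sign inequalities on $c$---for example, $c_i \ge 0$ for $i \in S$ when $S \neq [k]$ (flipping $y_i$), $c_j + c_m \le 0$ when $[k] \setminus S = \setdef{j}$ (flipping $y_j$ forces $y_m$ up), and $\sum_{i \in T} c_i + c_m \ge 0$ for every nonempty $T \subseteq [k]$ when $S = [k]$ (zeroing out the coordinates in $T$). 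These will be exactly what is needed to guarantee nonnegativity of the dual variables.

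I dualize~\eqref{EquationLinearization} with multipliers $\alpha_i, \gamma_i$ for the bounds on $y_i$, $\beta, \delta$ for those on $y_m$, $\mu_i$ for~\eqref{eq:linearizationPair}, and $\nu$ for~\eqref{eq:linearizationSum}, and construct an integer dual feasible solution in four cases: $S = \emptyset$, $\emptyset \subsetneq S$ with $\card{S} \le k-2$, $\card{S} = k-1$, and $S = [k]$. In all cases except $\card{S} = k-1$ I set $\nu = 0$; in the exceptional case I set $\nu = \max(c_j, 0)$ with $\setdef{j} = [k] \setminus S$. Complementary slackness then forces most remaining multipliers to vanish, and the two scalar dual equations determine the rest. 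For instance, when $S = [k]$ I take $\alpha_i = \max(c_i, 0)$, $\mu_i = \max(-c_i, 0)$, $\gamma_i = \delta = 0$, and $\beta = c_m + \sum_i \min(c_i, 0)$; an easy computation then gives dual objective $\sum_i c_i + c_m = c\T y^\star$, and the remaining cases are handled by analogous recipes.

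The main obstacle is verifying nonnegativity of the constructed $\beta$, $\delta$, $\mu_j$, and $\gamma_j$. Each such bound translates directly into one of the primal-optimality inequalities above: in the case $S = [k]$, $\beta \ge 0$ is exactly $c_m + \sum_{i : c_i < 0} c_i \ge 0$, obtained by comparing $y^\star = \ones$ with the vertex that zeroes every coordinate having $c_i \le 0$; and in the case $\card{S} = k-1$, the inequality $c_j + c_m \le 0$ yields both $\delta \ge 0$ and $\mu_j \ge 0$ after substitution. Once these nonnegativities are established, integrality of all dual variables is automatic from integrality of $c$, and the verification that the dual objective equals $c\T y^\star$ is a routine telescoping sum, which completes the TDI proof.
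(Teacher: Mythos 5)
Your overall strategy---an explicit case-by-case construction of integral dual multipliers, with nonnegativity certified by comparing the chosen optimal vertex against nearby vertices---is essentially the paper's approach: the paper likewise distinguishes four cases and exhibits primal/dual integral pairs with equal objective values (the main organizational difference being that you case on the support of an optimal vertex and invoke the known integrality of $P(\linearization)$ to obtain a binary optimum, while the paper cases directly on the objective and constructs the primal solution explicitly, thereby also reproving integrality).

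However, the one case you work out in full contains a genuine error. For $S=[k]$ you set $\nu=0$ and $\beta = c_m + \sum_i \min(c_i,0)$, justifying $\beta\geq 0$ by comparing $\ones$ with the vertex that zeroes all coordinates having $c_i\leq 0$. If every $c_i$ is strictly positive, that comparison vertex is $\ones$ itself and yields no inequality, and indeed $\beta = c_m$ can be negative while $\ones$ remains the unique optimum: take $c_1=\dots=c_k=5$ and $c_m=-3$. No repair with $\nu=0$ is possible there, because in any optimal dual the multiplier of the slack constraint $y_m\geq 0$ must vanish, so the dual equation for the $y_m$-column reduces to a sum of nonnegative multipliers minus $\nu$ equaling $c_m<0$, which forces $\nu\geq -c_m>0$. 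The missing idea is exactly that when the resultant's coefficient is negative but the all-ones vertex is still optimal, the dual must load weight $-c_m$ onto inequality~\eqref{eq:linearizationSum} and compensate in the upper-bound multipliers of the $y_i$; this is what the paper's Case~2 does by setting the sum-constraint multiplier to $\max\setdef{-\bar{w},0}$ and $\gamma_i = w_i + \min\setdef{\bar{w},0}$, whose nonnegativity follows from $w_i + \bar{w}\geq w_k + \bar{w}>0$, i.e., from optimality of $\ones$ against the vertices obtained by zeroing a single coordinate. So your blanket claim that $\nu=0$ outside the case $\card{S}=k-1$ is false, and the case $S=[k]$ needs a further split on the sign of $c_m$ (mirroring the paper's Cases~1 and~2); the remaining cases, as far as I can check, do go through along the lines you sketch.
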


\begin{proof}
   Let~$m = [n]$ and consider the standard linearization~$\linearization = (n, \singletonMonomials \cup \setdef{m}, \setdef{ \singletonMonomials })$.
   Let $(w, \bar{w}) \in \Z^n \times \Z$ be an arbitrary integral objective vector for~\eqref{EquationLinearization}.
   The dual of the corresponding maximization problem is
   \begin{align*}
    &&\min\quad (n-1) \beta + \sum_{i = 1}^n \gamma_i + \delta &&&\\
    &&-\alpha_i + \beta + \gamma_i &\geq w_i, &&& \text{ for all } i \in
    [n],\\
    &&\sum_{i = 1}^n \alpha_i - \beta + \delta &\geq \bar{w}, &&&\\
    &&\alpha_i,\, \beta,\, \gamma_i,\, \delta &\geq 0, &&& \text{ for all } i \in [n],
   \end{align*}
   where~$\alpha_i$ corresponds to the~$i$-th constraint of  type~\eqref{eq:linearizationPair}, $\beta$ to~\eqref{eq:linearizationSum}, and~$\gamma_i$ and~$\delta$ to the upper bounds of~$y_{\setdef{i}}$ and the resultant~$y_m$, respectively.
   We will construct feasible integral solutions for~\eqref{EquationLinearization} and its dual with the same objective value.
   To this end, define $S \define \setdef{ i \in [n] }[ w_i \geq 0 ]$ and choose $k \in \argmin\setdef{ w_i }[ i \in [n] ]$.
   All dual solutions will satisfy $\delta = 0$, and thus we omit it subsequently.
   We have the following case distinction on $(w, \bar{w})$ corresponding to different types of optimal solutions.
   \medskip

   \noindent
   \textbf{Case 1:} $S = [n]$ and $w_k + \bar{w} \leq 0$. \\
   Consider the primal solution with $y_{\setdef{k}} = y_m = 0$, $y_{\setdef{i}} = 1$ for all $i \in [n] \setminus \setdef{k}$, and the dual solution with $\alpha_i = 0$ and $\gamma_i = w_i - w_k$ for all $i \in [n]$ and $\beta = w_k$.
   Note for dual feasibility that $\sum_{i \in [n]} \alpha_i - \beta = 0 - w_k \geq \bar{w}$.
   Moreover, the common objective value is $(n-1) \beta + \sum_{i \in [n]} \gamma_i = (n-1) w_k + \sum_{i \in [n]} w_i - n \cdot w_k = \sum_{i \in [n]} w_i - w_k$.
   \medskip

   \noindent
   \textbf{Case 2:} $S = [n]$ and $w_k + \bar{w} > 0$. \\
   Consider the primal solution with $y_{\setdef{i}} = 1$ for all $i \in [n]$, $y_m = 1$, and the dual solution with $\alpha_i = 0$ and $\gamma_i = w_i + \min\setdef{\bar{w}, 0}$ for all $i \in [n] \setminus \setdef{k}$, $\alpha_k = \max\setdef{\bar{w}, 0}$, $\gamma_k = w_k + \bar{w}$ and $\beta = \max\setdef{-\bar{w}, 0}$.
   Note for dual feasibility that $\sum_{i \in [n]} \alpha_i - \beta = \max\setdef{\bar{w},0} - \max\setdef{-\bar{w}, 0} = \bar{w}$.
   Moreover, the common objective value is $(n-1) \beta + \sum_{i \in [n]} \gamma_i = (n-1) \max\setdef{-\bar{w},0} + \sum_{i \in [n]} w_i + (n-1) \min\setdef{\bar{w},0} + \bar{w} = \sum_{i \in [n]} w_i + \bar{w}$.
   \medskip

   \noindent
   \textbf{Case 3:} $S \neq [n]$, $\sum_{i \in [n] \setminus S} w_i + \bar{w} \leq 0$. \\
   Consider the primal solution with $y_{\setdef{i}} = 1$ for all $i \in S$, $y_{\setdef{i}} = 0$ for all $i \in [n] \setminus S$, $y_m = 0$, and the dual solution with $\alpha_i = \max\setdef{-w_i,0}$ and $\gamma_i = \max\setdef{w_i, 0}$ for all $i \in [n]$ and $\beta = 0$.
   Note for dual feasibility that $\sum_{i \in [n]} \alpha_i - \beta = \sum_{i \in [n] \setminus S} (-w_i) - 0 \geq \bar{w}$.
   Moreover, the common objective value is $(n-1) \beta + \sum_{i \in [n]} \gamma_i = \sum_{i \in S} w_i$.
   \medskip

   \noindent
   \textbf{Case 4:} $S \neq [n]$ and $\sum_{i \in [n] \setminus S} w_i + \bar{w} > 0$. \\
   Consider the primal solution with $y_{\setdef{i}} = 1$ for all $i \in [n]$, $y_m = 1$, and the dual solution with $\alpha_k = \bar{w} - w_k + \sum_{i \in [n] \setminus S} w_i$, $\gamma_k = \bar{w} + \sum_{i \in [n] \setminus S} w_i$, $\alpha_i = 0$ and $\gamma_i = w_i$ for all $i \in S$, $\alpha_i = -w_i$ and $\gamma_i = 0$ for $i \in [n] \setminus (S \cup \setdef{k})$ as well as $\beta = 0$.
   Note for dual feasibility that $\sum_{i \in [n]} \alpha_i - \beta = (\bar{w} - w_k + \sum_{i \in [n] \setminus S} w_i) + \sum_{i \in [n] \setminus (S \cup \setdef{k})} (-w_i) = \bar{w}$.
   Moreover, the common objective value is $(n-1) \beta + \sum_{i \in [n]} \gamma_i = 0 + \bar{w} + \sum_{i \in [n] \setminus S} w_i + \sum_{i \in S} w_i = \sum_{i \in [n]} w_i + \bar{w}$.
   \medskip

   This establishes TDI in every case and concludes the proof.
\end{proof}

\begin{remark}
   In the proof of \cref{thm:ANDisTDI}, each constructed dual solution has $\delta = 0$ and satisfies the constraints of the dual that correspond to variables $y_{\setdef{i}}$ with equality.
   Hence, the system~\eqref{EquationLinearization} for a single \AND-constraint is TDI even after removing the redundant constraints $y_{\bigcup \constraint} \leq 1$ and $y_{\setdef{i}} \geq 0$ for all $i \in [n]$.
\end{remark}

\begin{remark}
   For a single \AND-constraint, the constraint matrix of system~\eqref{EquationLinearization} is totally unimodular if and only if $\card{\constraint} = 2$.
   For $\card{\constraint} \geq 3$, it contains the submatrix
   \[
      \begin{pmatrix*}[r] 1 & -1 & 0 & 0 \\ 1 & 0 & -1 & 0 \\ 1 & 0 & 0 & -1 \\ -1 & 1 & 1 &1 \end{pmatrix*}
   \]
   with determinant $2$.
\end{remark}

\subsection{Preprocessing of Linearizations}
\label{sec:integralityPreprocessing}

A natural operation is the elimination of an auxiliary monomial.
Its implications on the induced relaxations are stated in the next lemma; see \cref{fig:preprocessing} for an illustration.
To this end, for~$M \subseteq \allMonomials$, we denote by~$\proj_{M}(P(\linearization))$ the orthogonal projection of~$P(\linearization)$ onto the variables~$y_m$ with~$m \in M$.

\begin{figure}
   \centering
   \begin{tikzpicture}
      \node[nodeSingleton,label=below:\footnotesize{$\{1\}$}] (1) at (0,0) {};
      \node[nodeSingleton,label=below:\footnotesize{$\{2\}$}] (2) at (2,0) {};
      \node[nodeSingleton,label=below:\footnotesize{$\{3\}$}] (3) at (4,0) {};
      \node[nodeSingleton,label=below:\footnotesize{$\{4\}$}] (4) at (6,0) {};
      \node[nodeSingleton,label=below:\footnotesize{$\{5\}$}] (5) at (8,0) {};
      \node[nodeSingleton,label=below:\footnotesize{$\{6\}$}] (6) at (10,0) {};

      \node[nodeMonomial,label=left:\footnotesize{$\{1,2\}$}] (12) at (1,1) {};
      \node[nodeMonomial,label=left:\footnotesize{$\{3,4\}$}] (34) at (5,1) {};
      \node[nodeMonomial,label=right:\footnotesize{$\{4,6\}$}] (46) at (9,1) {};

      \node[nodeMonomial,label=left:\footnotesize{$\{1,2,3\}$}] (123) at (2,2) {};
      \node[nodeMonomial,label=left:\footnotesize{$\{2,3,4\}$}] (234) at (4,2) {};
      \node[nodeTarget,label=left:\footnotesize{$\{3,4,5\}$}] (345) at (6,2) {};
      \node[nodeTarget,label=right:\footnotesize{$\{4,5,6\}$}] (456) at (8.5,2) {};

      \node[nodeTarget,label=left:\footnotesize{$\{1,2,3,4\}$}] (1234) at (3,3) {};

      \node[nodeMonomial,label=left:\footnotesize{$\{1,2,3,4,5\}$}] (12345) at (4,4) {};

      \foreach \a/\b in {%
         12345/1234, 12345/345,
         1234/123, 1234/234,
         123/12, 123/3,
         234/2, 234/34,
         345/34, 345/5,
         456/46, 456/5,
         12/1, 12/2,
         34/3, 34/4,
         46/4, 46/6%
      }%
      {
         \draw[arc] (\a) -- (\b);
      }
      \draw[arc] (123) to[bend left,bend angle=45] (2);
      \draw[arc] (234) to[bend right,bend angle=45] (3);
   \end{tikzpicture}
   \caption{Example of the reduction step in \cref{thm:fourierMotzkin} applied to monomial $m^\star = \setdef{2,3}$ of the linearization from \cref{ex:runningExample}.}
   \label{fig:preprocessing}
\end{figure}

\begin{lemma}\label{thm:fourierMotzkin}
   Let $\linearization = (n, \allMonomials, \constraints)$ be a simple linearization.
   Consider a proper monomial $m^\star \in \properMonomials$ and the linearization $\linearization' = (n, \allMonomials \setminus \setdef{m^\star}, \constraints')$ for which $D(\linearization')$ arises from $D(\linearization)$ by replacing every path of length $2$ containing $m^\star$ as inner node by a single arc from the start of the path to its end, and finally removing $m^\star$ and its remaining incident arcs.
   Then $\proj[\allMonomials \setminus \setdef{m^\star}][P(\linearization)] \subseteq P(\linearization')$, where equality holds if the in-degree of $m^\star$ in $D(\linearization)$ is at most $1$.
\end{lemma}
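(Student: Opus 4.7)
The plan is to treat the lemma as a Fourier–Motzkin elimination statement for the coordinate $y_{m^\star}$ in system~\eqref{EquationLinearization}. Since $\linearization$ is simple, let $\constraint^\star \in \constraints$ denote the unique AND-constraint with $\bigcup \constraint^\star = m^\star$, and write $\constraints_{m^\star} \define \setdef{\constraint \in \constraints}[m^\star \in \constraint]$; the cardinality of this set equals the in-degree of $m^\star$ in $D(\linearization)$. The only inequalities of~\eqref{EquationLinearization} that involve $y_{m^\star}$ are its bounds, the pair and sum inequalities for $\constraint^\star$, and the pair and sum inequalities for each $\constraint \in \constraints_{m^\star}$. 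Solving them for $y_{m^\star}$ gives upper bounds $y_{m^\star} \leq 1$, $y_{m^\star} \leq y_m$ for $m \in \constraint^\star$, and $y_{m^\star} \leq y_{\bigcup \constraint} + \card{\constraint} - 1 - \sum_{m \in \constraint \setminus \{m^\star\}} y_m$ for $\constraint \in \constraints_{m^\star}$, together with lower bounds $y_{m^\star} \geq 0$, $y_{m^\star} \geq \sum_{m \in \constraint^\star} y_m - \card{\constraint^\star} + 1$, and $y_{m^\star} \geq y_{\bigcup \constraint}$ for $\constraint \in \constraints_{m^\star}$.

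For the inclusion $\proj[\allMonomials \setminus \{m^\star\}][P(\linearization)] \subseteq P(\linearization')$, I would take any $y \in P(\linearization)$ and verify the inequalities defining $P(\linearization')$ on its projection. Inequalities associated with constraints in $\constraints' \cap \constraints$ are inherited unchanged. The new inequalities come from the contracted paths: for each $\constraint \in \constraints_{m^\star}$, the corresponding $\constraint' \in \constraints'$ (obtained from $\constraint$ by substituting $m^\star$ with the members of $\constraint^\star$) satisfies $\bigcup \constraint' = \bigcup \constraint$. Its pair inequalities $y_{\bigcup \constraint} \leq y_m$ with $m \in \constraint^\star$ follow by chaining $y_{\bigcup \constraint} \leq y_{m^\star} \leq y_m$ through the pair inequalities of $\constraint$ and $\constraint^\star$; its sum inequality is obtained by adding the sum inequalities for $\constraint$ and $\constraint^\star$, which cancels $y_{m^\star}$ on both sides.

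For the reverse inclusion under $\card{\constraints_{m^\star}} \leq 1$, I would take $\tilde y \in P(\linearization')$ and produce a preimage in $P(\linearization)$ by choosing $y_{m^\star} \in [L(\tilde y), U(\tilde y)]$, where $L$ and $U$ are the maximum and minimum over the above lower and upper bounds, respectively. Non-emptiness of this interval reduces to checking, for every pair (lower bound, upper bound), a Fourier–Motzkin comparison $L_\beta \leq U_\alpha$. Each such comparison is either trivial from $\tilde y_m \in [0,1]$ or coincides with (respectively, is implied by) an inequality already present in the $\linearization'$-system associated with the unique contracted constraint $\constraint'$; the crucial one is $\sum_{m \in \constraint^\star} y_m - \card{\constraint^\star} + 1 \leq y_{\bigcup \constraint} + \card{\constraint} - 1 - \sum_{m \in \constraint \setminus \{m^\star\}} y_m$, which matches the sum inequality of $\constraint'$. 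Without the in-degree assumption, a comparison between bounds coming from two distinct $\constraint_1, \constraint_2 \in \constraints_{m^\star}$ would produce a cut of the form $y_{\bigcup \constraint_1} \leq y_{\bigcup \constraint_2} + \card{\constraint_2} - 1 - \sum_{m \in \constraint_2 \setminus \{m^\star\}} y_m$, which is in general not implied by $P(\linearization')$; this is precisely why strict inclusion may occur when $\card{\constraints_{m^\star}} \geq 2$.

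The main obstacle I anticipate is the careful bookkeeping for the sum inequality when $\constraint^\star$ and $\constraint \setminus \{m^\star\}$ share monomials: after the substitution, duplicated terms must be accounted for by either tracking multiplicities in $\constraint'$ or absorbing the slack via the variable bounds $y_m \leq 1$. Once this is handled consistently in both inclusions, the rest of the argument is a straightforward enumeration of Fourier–Motzkin comparisons.
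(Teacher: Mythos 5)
Your proposal follows essentially the same route as the paper: Fourier--Motzkin elimination of $y_{m^\star}$, the same six families of lower and upper bounds, the same classification of which pairwise combinations are redundant, which reproduce the pair and sum inequalities of $\linearization'$, and the same identification of the cross-term $y_{\bigcup \constraint_i} \leq y_{\bigcup \constraint_j} + \card{\constraint_j} - 1 - \sum_{m \in \constraint_j \setminus \setdef{m^\star}} y_m$ for $i \neq j$ as the sole obstruction to equality when the in-degree exceeds one. The bookkeeping issue you flag (monomials shared between $\constraint^\star$ and $\constraint_i \setminus \setdef{m^\star}$) is silently assumed away in the paper's proof as well, so relative to its argument you are not missing anything.
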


\begin{proof}
  Let $\constraint_1, \dotsc, \constraint_k$ be all the constraints that contain $m^\star$.
  Since $\linearization$ is simple and consistent, there exists a unique constraint $\constraint^\star$  with $m^\star = \bigcup \constraint^\star$.
  In $\linearization'$, the constraints $\constraint_i$ are replaced with $\constraint'_i = \constraint^\star \cup \constraint_i \setminus \setdef{m^\star}$ for $i \in [k]$ and $\constraint^\star$ is removed.
  The only inequalities from~\eqref{EquationLinearization} that constrain~$y_{m^\star}$ are
  \begin{align*}
    0 &\leq y_{m^\star},  \\
    y_{\bigcup \constraint_i} &\leq y_{m^\star}       && \text{ for all } i \in [k], \\
    \sum_{m \in \constraint^\star} y_m + 1 - \card{\constraint^\star} &\leq y_{m^\star}, \\
    y_{m^\star} &\leq 1,                          && \\
    y_{m^\star} &\leq y_m                         && \text{ for all } m \in \constraint^\star, \text{ and } \\
    y_{m^\star} &\leq y_{\bigcup \constraint_i} + \card{\constraint_i} - 1 - \sum_{m \in \constraint_i \setminus \setdef{m^\star}} y_m && \text{ for all } i \in [k].
  \end{align*}
  By Fourier-Motzkin elimination~\cite{Dines19,Fourier27,Motzkin36}, $\proj[\allMonomials \setminus \setdef{m^\star}][P(\linearization)]$ is obtained by keeping all inequalities of $P(\linearization)$ that do not involve $y_{m^\star}$ and by combining each of the first three inequalities above with each of the last three inequalities.

   First, the combination of $y_{m^\star} \geq 0$ with one of the last three inequalities yields a redundant inequality.
   To see this for the combination with the last inequality (for some $i \in [k]$), observe that $y_{\bigcup \constraint_i} + \card{\constraint_i} - 1 - \sum_{m \in \constraint_i \setminus \setdef{m^\star}} y_m \geq 0 + \card{\constraint_i} - 1 - \card{\constraint_i \setminus \setdef{m^\star}} = 0$ holds due to the bound inequalities of the variables that appear.
   Second, the combination of $y_{m^\star} \leq 1$ with one of the first three inequalities yields a redundant inequality as well.
   To see this for the combination with the third inequality, observe that $\sum_{m \in \constraint^\star} y_m + 1 - \card{\constraint^\star} \leq 1$ holds due to the bound inequalities of the variables that appear.
   Third, the combined inequality
   \[
      \sum_{m' \in \constraint^\star} y_{m'} + 1 - \card{\constraint^\star} \leq y_m
   \]
   for $m \in \constraint^\star$ is implied by inequalities $y_{m'} \leq 1$ for all $m' \in \constraint^\star \setminus \setdef{m}$. 
   Fourth, the combined inequality $y_{\bigcup \constraint_i} \leq y_m$ for $i \in [k]$ and $m \in \constraint^\star$ corresponds to inequality~\eqref{eq:linearizationPair} for $\linearization'$, and the combined inequality 
   \begin{align*}
      && \sum_{m \in \constraint^\star} y_m + 1 - \card{\constraint^\star}
         &\leq y_{\bigcup \constraint_i} + \card{\constraint_i} - 1 - \sum_{m \in \constraint_i \setminus \setdef{m^\star}} y_m && \\
      &\iff& \sum_{m \in \constraint^\star \cup \constraint_i \setminus \setdef{m^\star}} y_m
         &\leq y_{\bigcup \constraint_i} + \card{\constraint^\star} + \card{\constraint_i} - 2 \\
      &\iff& \sum_{m \in \constraint'_i} y_m
         &\leq y_{\bigcup \constraint'_i} + \card{\constraint'_i} - 1
   \end{align*}
   corresponds to inequality~\eqref{eq:linearizationSum} for
   $\linearization'$ since~$y_{\bigcup \constraint_i}$ is replaced by~$y_{\bigcup \constraint'_i}$.
   This shows that all constraints in~\eqref{EquationLinearization} for $\linearization'$ are implied. This establishes $\proj[\allMonomials \setminus \setdef{m^\star}][P(\linearization)] \subseteq P(\linearization')$.
   The only remaining combined inequality is
   \[
      y_{\bigcup \constraint_i} \leq y_{\bigcup \constraint_j} + \card{\constraint_j} - 1 - \sum_{m \in \constraint_j \setminus \setdef{m^\star}} y_m
   \]
   for $i$, $j \in [k]$.
   For $i = j$, it is equivalent to the sum of the inequalities $y_m \leq 1$ for all $m \in \constraint_i \setminus \setdef{m^\star}$, and thus redundant.
   This proves $\proj[\allMonomials \setminus \setdef{m^\star}][P(\linearization)] = P(\linearization')$ for $k \leq 1$ since $i = j$ must hold (if such constraints exist at all) in this case.
   This proves the second statement of the lemma.
\end{proof}

One application of \cref{thm:fourierMotzkin} is a preprocessing step that eliminates redundant parts of a simple linearization if the set of target monomials is a proper subset of~$\properMonomials$.
To this end, given a digraph $D = (V,A)$ and~$W \subseteq V$, $\succ(W)$ shall denote the set of nodes reachable from any node in~$W$ including the nodes from~$W$.
For $w \in V$ we abbreviate $\succ(w) \define \succ(\setdef{w})$.
Similarly, $\pred(W)$ (resp.\ $\pred(w)$) shall denote the set of nodes from which one can reach a node in $W$ (resp.\ node~$w$) including~$W$ (resp.~$w$).

\begin{corollary}
   \label{thm:redundantMonomials}
   Let $\linearization = (n, \allMonomials, \constraints)$ be a simple linearization, and let $\targetMonomials \subseteq \properMonomials$ be a subset of the proper monomials.
   Then
   \begin{equation*}
      \linearization' \define (n, \allMonomials', \constraints') \text{ with } 
      \allMonomials' \define \singletonMonomials \cup \succ(\targetMonomials) \text{ and }
      \constraints' \define \setdef{ \constraint \in \constraints }[ \bigcup \constraint \in \allMonomials' ]
   \end{equation*}
   is a simple linearization that satisfies $P(\linearization') = \proj[\allMonomials'][P(\linearization)]$.
   Moreover, $\linearization'$ can be computed in linear time in the size of $\linearization$.
\end{corollary}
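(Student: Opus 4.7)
The plan is to obtain $P(\linearization') = \proj[\allMonomials'][P(\linearization)]$ by iteratively applying \cref{thm:fourierMotzkin} to each monomial in $R \define \allMonomials \setminus \allMonomials' = \properMonomials \setminus \succ(\targetMonomials)$, processed in a topological order of $D(\linearization)$ so that parents come before children. The crucial structural observation I would establish first is that $R$ is closed under taking parents in $D(\linearization)$: if $m^\star \in R$ and $p$ is a parent of $m^\star$, then $m^\star \in \succ(p)$, so $p \in \succ(\targetMonomials)$ would force $m^\star \in \succ(\targetMonomials)$, contradicting $m^\star \in R$. Every ancestor of an $R$-node therefore also lies in $R$.

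Because of this closedness, when the iteration reaches $m^\star \in R$, all of its parents have already been eliminated in earlier steps, so its current in-degree is $0$. Since $0 \leq 1$, the equality case of \cref{thm:fourierMotzkin} applies and the projection onto the surviving coordinates equals the polytope of the new linearization. In-degree zero also implies that there are no length-two paths through $m^\star$, so the elimination introduces no new arcs and merely deletes $m^\star$ together with its unique defining constraint $\constraint^\star$. Simplicity and consistency are preserved throughout, because each surviving proper monomial retains its original unique defining constraint, whose members are still present (they are either in $\allMonomials'$ or appear later in the topological order and have therefore not yet been removed).

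Composing these equalities yields $\proj[\allMonomials'][P(\linearization)] = P(\widetilde{\linearization})$, where the terminal linearization $\widetilde{\linearization}$ has monomial set $\allMonomials \setminus R = \allMonomials'$ and constraint set exactly $\setdef{\constraint \in \constraints}[ \bigcup \constraint \in \allMonomials' ] = \constraints'$; hence $\widetilde{\linearization} = \linearization'$. This gives both the polytope identity and the assertion that $\linearization'$ is a simple linearization. For the runtime, $\succ(\targetMonomials)$ can be computed by a single BFS or DFS from the target nodes in $D(\linearization)$, after which filtering the monomials and constraints takes time linear in the size of $\linearization$. The main technical point — and the step I expect to require the most care — is the parent-closedness of $R$, since this is precisely what guarantees that the in-degree hypothesis of \cref{thm:fourierMotzkin} is uniformly satisfied, turning the iterated Fourier–Motzkin elimination into a single linear-time sweep rather than an amortized blow-up.
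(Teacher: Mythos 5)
Your proof is correct and follows essentially the same route as the paper's: iterated application of \cref{thm:fourierMotzkin} to monomials outside $\succ(\targetMonomials)$ of in-degree $0$, plus a BFS for the runtime. The parent-closedness of $R$, which you rightly flag as the key point guaranteeing that such a monomial always exists, is exactly what the paper's (much terser) proof leaves implicit.
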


\begin{proof}
   It is easy to check that $\linearization'$ is indeed a linearization.
   The fact that it is simple is inherited from $\linearization$.
   Moreover, the projection of $P(\linearization')$ onto the variables $\allMonomials'$ is equal to $P(\linearization)$ by \cref{thm:fourierMotzkin}, since we can carry out the removal of monomials sequentially, always considering a monomial $m^{\star} \notin \succ(\targetMonomials)$ that has in-degree~$0$.
   Finally, $\linearization'$ can be computed in linear time, since $\succ(\targetMonomials)$ can be computed by breadth-first search in $D(\linearization)$.
\end{proof}

By \cref{thm:linearizationConsistency}, a simple linearization~$\linearization = (n, \allMonomials, \constraints)$ yields an integer programming formulation of~$\BMP_n(\targetMonomials)$ with auxiliary variables.
If we are interested in deciding whether the relaxation in fact projects to a complete linear description of~$\BMP_n(\targetMonomials)$, \cref{thm:redundantMonomials} shows that it is sufficient to consider linearizations in which every monomial is a successor of a target monomial.
Thus, we can avoid unnecessarily complicated linearizations.

\subsection{Characterization of Integrality}
\label{sec:integralityCharacterization}

Let~$G(D)$ denote the undirected version of a digraph~$D$.
Moreover, for a digraph~$D$ and a graph~$G$, we denote their node sets by~$V(D)$ and~$V(G)$, respectively.
In order to characterize integral simple linearizations, we consider subgraphs $Z$ of $D$ whose underlying undirected graph~$G(Z)$ is a cycle.
Note that in this article, cycles are always \emph{simple}, i.e., each node appears in exactly two edges of the cycle.

We can now state our first main result.

\begin{theorem}
   \label{thm:simpleLinearizationProjectionIntegral}
   Let~$\linearization = (n, \allMonomials, \constraints)$ be a simple linearization, and let~$\targetMonomials \subseteq \properMonomials$ be a subset of the proper monomials.
   Then~$\proj[\singletonMonomials \cup \targetMonomials][P(\linearization)]$ is integral if and only if~$D(\linearization)$ does not contain a subgraph~$Z$ that satisfies
   \begin{enumerate}[label=(\alph{*})]
   \item
      \label{PropertyCycle}
      $G(Z)$ is a cycle, and
   \item
      \label{PropertyPathAbove}
      $V(Z) \subseteq \succ(\targetMonomials)$.
   \end{enumerate}
   Moreover, if the projection is integral and $\properMonomials \subseteq \succ(\targetMonomials)$ holds, then system~\eqref{eq:linearizationBounds}--\eqref{eq:linearizationSum} is TDI.
\end{theorem}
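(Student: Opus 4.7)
My plan is to prove the two directions of the biconditional separately; the TDI conclusion will emerge from the sufficiency argument.

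\textbf{Necessity.} I argue by contrapositive. Suppose $D(\linearization)$ contains a subgraph $Z$ with $G(Z)$ a cycle and $V(Z) \subseteq \succ(\targetMonomials)$; I construct a point in $P(\linearization)$ whose projection to $\singletonMonomials \cup \targetMonomials$ is not integral and not a convex combination of integer points. Since $D(\linearization)$ is a DAG, walking around $Z$ balances arcs traversed with the direction against those traversed against it, so $G(Z)$ has even length. The paradigmatic case is the triangle hypergraph $\targetMonomials = \setdef{\setdef{1,2}, \setdef{1,3}, \setdef{2,3}}$, where the standard linearization admits the fractional vertex $y_{\setdef{i}} = 1/2$, $y_{\setdef{i,j}} = 0$: the singletons play the role of ``bottom'' cycle nodes and the pairs the role of ``top'' nodes. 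I would mimic this for a general cycle $Z$ by classifying cycle nodes as tops (both incident cycle-arcs outgoing), bottoms (both incoming), and through-nodes (mixed), setting the tops to $0$, bottoms to $1/2$, and assigning through-nodes intermediate values consistent with inequalities~\eqref{eq:linearizationPair} along $Z$; off-cycle variables are set to $0$ or $1$ as forced by the surrounding \AND-constraints (in particular, off-cycle singletons go to $1$). A direct check shows each sum constraint~\eqref{eq:linearizationSum} at a top is tight because it has exactly two on-cycle children contributing $1/2$ each plus off-cycle children contributing $1$. Finally, since $V(Z) \subseteq \succ(\targetMonomials)$, every cycle node is reachable from some target $t \in \targetMonomials$, and chasing the chain $y^\star_t \leq y^\star_{\text{child}} \leq \cdots \leq y^\star_m$ down to a cycle node of value $1/2$ forces a target coordinate to be non-integer.

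\textbf{Sufficiency and TDI.} First apply \cref{thm:redundantMonomials} to reduce to the case $\properMonomials = \succ(\targetMonomials)$; the hypothesis then reads ``$G(D(\linearization))$ is a forest''. I proceed by induction on $\card{\constraints}$. The base $\card{\constraints} = 1$ is exactly \cref{thm:ANDisTDI}. For the inductive step, the forest condition lets me pick a constraint $\constraint^\star$ whose resultant $m^\star$ has in-degree $0$ in $D(\linearization)$, so $m^\star$ appears in no other constraint. By \cref{thm:fourierMotzkin} (in-degree $0 \leq 1$ case), $\proj[\allMonomials \setminus \setdef{m^\star}][P(\linearization)] = P(\linearization')$ for $\linearization' \define (n, \allMonomials \setminus \setdef{m^\star}, \constraints \setminus \setdef{\constraint^\star})$, and $D(\linearization')$ is still free of undirected cycles. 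Induction gives TDI of the system for $\linearization'$, while \cref{thm:ANDisTDI} gives TDI of the single-\AND{} system for $\constraint^\star$. The forest condition further guarantees that after removing $m^\star$ the shared variables $\setdef{y_m : m \in \constraint^\star}$ lie in pairwise disjoint subtrees of $\linearization'$. A standard TDI-gluing argument --- decompose any integer objective $w$ for $\linearization$ into a contribution supported on $\setdef{m^\star} \cup \constraint^\star$ and one supported on $\linearization'$, extract integer dual optima for the two pieces using the two TDI properties, and patch them along the disjoint subtrees --- yields an integer dual optimum for the full system of $\linearization$.

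\textbf{Main obstacle.} The most delicate step is the necessity construction: one must ensure the fractional coordinate lands on a \emph{target} variable rather than on an intermediate monomial that is projected away. This requires carefully walking from some target $t \in \targetMonomials$ down through $D(\linearization)$ to a cycle node, using the tightness of~\eqref{eq:linearizationPair} along the way. Handling through-nodes of $Z$ in the explicit fractional assignment, and propagating values to all off-cycle monomials consistent with every constraint of $\linearization$ (not just those on $Z$), are the points where I expect the bulk of the technical work. In the sufficiency direction, justifying that the ``pendant'' decomposition preserves TDI across several simultaneously shared variables --- rather than the textbook single-variable gluing --- is the subtlety, and relies crucially on the forest structure to ensure the shared subtrees are independent.
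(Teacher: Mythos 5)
Your sufficiency argument is essentially the paper's: reduce via \cref{thm:redundantMonomials} to the acyclic case, peel off an in-degree-$0$ constraint, and glue the single-\AND{} system (\cref{thm:ANDisTDI}) to the rest by induction. The one point you flag as a subtlety --- gluing across several shared variables at once --- is resolved in the paper by a proposition that glues two TDI systems sharing a \emph{single} $0/1$-variable (\cref{thm:glueBinaryTDI}) and applying it repeatedly: after removing $m^\star$, each $m \in \constraint^\star$ lies in its own connected component, so the single-\AND{} polytope shares exactly one variable with each component and the components are attached one at a time. Your multi-variable version would work but needs its own proof; the iterated single-variable version is the cleaner route.

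The necessity direction has a genuine gap, and the construction you propose is in fact incorrect. Setting \emph{all} upper nodes of the cycle to $0$ and all lower nodes to $\tfrac{1}{2}$ can produce a point that lies \emph{inside} the integer hull: for the $4$-cycle of targets $\setdef{1,2},\setdef{2,3},\setdef{3,4},\setdef{1,4}$ under the standard linearization, the point with all singletons at $\tfrac12$ and all pair-monomials at $0$ is exactly the midpoint of the two integer points induced by $x=(1,0,1,0)$ and $x=(0,1,0,1)$. The paper's construction (\cref{thm:SimpleLinearizationCyclicCardinalityHigher}) deliberately breaks this symmetry: exactly one upper node $u_k$ and its predecessors get value $0$, while the remaining upper nodes and everything above the distinguished singletons get $\tfrac12$. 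Your final step is also flawed in two ways. First, ``chasing the pair inequalities down from a target to a $\tfrac12$-valued cycle node'' only yields $y_t \leq \tfrac12$, and indeed in the paper's constructions the distinguished target receives the \emph{integer} value $0$; the fractional coordinate sits on a singleton. Second, exhibiting a fractional coordinate does not show the projection is non-integral --- you must show the point is not a convex combination of integer points of the projection. The paper does this with a face argument ($y_{t_k}=0$, $y_{s_i}=y_{t_i}=y_{s_{i+1}}$, all other singletons at $1$), and that argument only works after the cycle has been normalized by \cref{thm:SimpleLinearizationCyclicExtraProperties} (no shortcut paths, each target reaches at most one upper node, each singleton is reached from at most one lower node). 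Your proposal contains no analogue of this normalization, and without it both the feasibility of the off-cycle assignment and the non-membership argument can fail.
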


\begin{figure}[tb]
   \begin{subfigure}[t]{0.45\textwidth}
      \centering
      \begin{tikzpicture}
         \node (1) at (-1,0) [nodeSingleton,label=below:\footnotesize{$\{1\}$}] {};
         \node (2) at (1,0) [nodeSingleton,label=below:\footnotesize{$\{2\}$}] {};
         \node (3) at (2,0) [nodeSingleton,label=below:\footnotesize{$\{3\}$}] {};
         \node (4) at (3,0) [nodeSingleton,label=below:\footnotesize{$\{4\}$}] {};
         \node (5) at (4,0) [nodeSingleton,label=below:\footnotesize{$\{5\}$}] {};

         \node (13) at (-0.5,1.5) [nodeTarget,label=left:\footnotesize{$\{1,3\}$}] {};
         \node (12) at (0.1,0.8) [nodeTarget,label=below:\footnotesize{$\{1,2\}$}] {};
         \node (24) at (2.5,1) [nodeMonomial,label=right:\footnotesize{$\{2,4\}$}] {};
         \node (234) at (2.5,2) [nodeMonomial,label=left:\footnotesize{$\{2,3,4\}$}] {};
         \node (2345) at (3.2,3) [nodeTarget,label=left:\footnotesize{$\{2,3,4,5\}$}] {};

         \draw[arc, dashed, ultra thick] (13) -- (1);
         \draw[arc, dashed, ultra thick] (12) to[bend right] (1);
         \draw[arc, dashed,ultra thick] (12) to[bend left] (2);
         \draw[arc, dashed,ultra thick] (13) to[bend left] (3);
         \draw[arc, dashed,ultra thick] (24) -- (2);
         \draw[arc] (24) -- (4);
         \draw[arc, dashed,ultra thick] (234) -- (24);
         \draw[arc, dashed,ultra thick] (234) to[bend right] (3);
         \draw[arc] (2345) -- (5);
         \draw[arc] (2345) -- (234);
      \end{tikzpicture}
      \caption{%
         The dashed cycle destroys integrality according to \cref{thm:simpleLinearizationProjectionIntegral}.
         Note that $\setdef{2,3,4}$ and $\setdef{2,4}$ are in $\succ(\targetMonomials)$
         due to $\setdef{2,3,4,5} \in \targetMonomials$.
      }
      \label{fig:simpleLinearizationProjectionIntegralEvil}
   \end{subfigure}
   \hspace{1mm}
   \begin{subfigure}[t]{0.45\textwidth}
      \centering
      \begin{tikzpicture}
         \node (1) at (0,0) [nodeSingleton,label=below:\footnotesize{ $\{1\}$}] {};
         \node (2) at (1,0) [nodeSingleton,label=below:\footnotesize{ $\{2\}$}] {};
         \node (3) at (2,0) [nodeSingleton,label=below:\footnotesize{ $\{3\}$}] {};
         \node (4) at (3,0) [nodeSingleton,label=below:\footnotesize{ $\{4\}$}] {};
         \node (5) at (4,0) [nodeSingleton,label=below:\footnotesize{ $\{5\}$}] {};
         \node (6) at (5,0) [nodeSingleton,label=below:\footnotesize{ $\{6\}$}] {};

         \node (23) at (1.5,1) [nodeTarget,label=left:\footnotesize{ $\{2,3\}$}] {};
         \node (123) at (0.5,2) [nodeTarget,label=above:\footnotesize{ $\{1,2,3\}$}] {};
         \node (34) at (2.5,1) [nodeMonomial,label=right:\footnotesize{ $\{3,4\}$}] {};
         \node (234) at (2.0,2) [nodeMonomial,label=above:\footnotesize{ $\{2,3,4\}$}] {};
         \node (56) at (4.5,1) [nodeMonomial,label=right:\footnotesize{ $\{5,6\}$}] {};
         \node (3456) at (3.5,2) [nodeTarget,label=above:\footnotesize{ $\{3,4,5,6\}$}] {};

         \draw[arc] (23) -- (2);
         \draw[arc, dashed, ultra thick] (23) -- (3);
         \draw[arc, dashed, ultra thick] (34) -- (3);
         \draw[arc] (34) -- (4);
         \draw[arc] (56) -- (5);
         \draw[arc] (56) -- (6);
         \draw[arc] (123) -- (1);
         \draw[arc] (123) -- (23);
         \draw[arc, dashed, ultra thick] (234) -- (23);
         \draw[arc, dashed, ultra thick] (234) -- (34);
         \draw[arc] (3456) -- (34);
         \draw[arc] (3456) -- (56);
      \end{tikzpicture}
      \caption{%
         The only (undirected) cycle is dashed.
         Due to $\setdef{2,3,4} \notin \succ(\targetMonomials)$, the projection onto the variables for $\singletonMonomials$ and $\targetMonomials$ is integral.
      }
      \label{fig:simpleLinearizationProjectionIntegralNice}
   \end{subfigure}
   \caption{%
      The role of cycles in \cref{thm:simpleLinearizationProjectionIntegral}.
      Black nodes are those in $\targetMonomials$.
      Both graphs contain exactly one cycle.
      However, the projection is not integral for the linearization in \cref{fig:simpleLinearizationProjectionIntegralEvil}, while it is integral for that in \cref{fig:simpleLinearizationProjectionIntegralNice}.
      }
    \label{fig:simpleLinearizationIntegral}
\end{figure}

The role of the subgraph $Z$ in \cref{thm:simpleLinearizationProjectionIntegral} is illustrated in \cref{fig:simpleLinearizationIntegral}.
As an immediate consequence we obtain the following result.

\begin{corollary}
   \label{thm:simpleLinearizationIntegral}
   Let~$\linearization = (n, \allMonomials, \constraints)$ be a simple linearization.
   Then the following statements are equivalent:
   \begin{enumerate}[label=(\roman{*})]
   \item
      \label{thm:simpleLinearizationIntegralIntegrality}
      $P(\linearization)$ is integral.
   \item
      \label{thm:simpleLinearizationIntegralAcyclicity}
      $G(D(\linearization))$ is acyclic.
   \item
      \label{thm:simpleLinearizationIntegralTDI}
      System~\eqref{eq:linearizationBounds}--\eqref{eq:linearizationSum} is TDI.
   \end{enumerate}
\end{corollary}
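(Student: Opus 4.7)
The plan is to obtain the corollary as a direct specialization of \cref{thm:simpleLinearizationProjectionIntegral} to the case $\targetMonomials = \properMonomials$. With this choice, $\singletonMonomials \cup \targetMonomials = \allMonomials$, so the projection in the theorem is the identity and ``$\proj[\singletonMonomials \cup \targetMonomials][P(\linearization)]$ is integral'' becomes exactly $P(\linearization)$ is integral, i.e.\ statement~\ref{thm:simpleLinearizationIntegralIntegrality}.

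Next I would translate the forbidden-subgraph condition. With $\targetMonomials = \properMonomials$, the requirement $V(Z)\subseteq \succ(\targetMonomials)$ turns out to be automatic for any subgraph $Z$ of $D(\linearization)$ whose underlying undirected graph is a cycle. Indeed, every node $v \in V(Z)$ has degree at least two in $G(D(\linearization))$; if $v$ is a proper monomial then $v \in \properMonomials \subseteq \succ(\properMonomials)$, while if $v = \{i\}$ is a singleton then $v$ has a neighbor in $D(\linearization)$, which must be a proper monomial $m\supseteq\{i\}$, and by \cref{thm:consistencyPaths} $m$ has a directed path to $\{i\}$, so $\{i\}\in\succ(\properMonomials)$. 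Hence the forbidden condition collapses to: $D(\linearization)$ contains no subgraph $Z$ with $G(Z)$ a cycle, which is precisely the statement that $G(D(\linearization))$ is acyclic. This yields the equivalence \ref{thm:simpleLinearizationIntegralIntegrality}$\Leftrightarrow$\ref{thm:simpleLinearizationIntegralAcyclicity}.

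Finally, for \ref{thm:simpleLinearizationIntegralTDI} I would argue both directions. The implication \ref{thm:simpleLinearizationIntegralTDI}$\Rightarrow$\ref{thm:simpleLinearizationIntegralIntegrality} is the standard fact that a TDI rational system with integer right-hand side defines an integral polytope. For the converse, the ``moreover'' part of \cref{thm:simpleLinearizationProjectionIntegral} applies since $\properMonomials \subseteq \succ(\properMonomials) = \succ(\targetMonomials)$ (trivially, as each proper monomial is a successor of itself). Thus integrality of $P(\linearization)$ yields TDI of system~\eqref{eq:linearizationBounds}--\eqref{eq:linearizationSum}, closing the cycle of implications.

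The only subtlety, and the step most worth stating carefully, is verifying that the condition $V(Z)\subseteq\succ(\properMonomials)$ is vacuous for cycle-subgraphs; once this observation is in place, the corollary follows from the theorem with essentially no further work.
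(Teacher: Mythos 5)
Your proposal is correct and follows essentially the same route as the paper: specialize \cref{thm:simpleLinearizationProjectionIntegral} to $\targetMonomials = \properMonomials$, observe that Property~\ref{PropertyPathAbove} is then vacuous for any cycle-subgraph, obtain (i)$\Rightarrow$(iii) from the ``moreover'' clause since $\properMonomials \subseteq \succ(\properMonomials)$, and close the loop with the standard fact that a TDI system with integral right-hand side defines an integral polytope. Your explicit verification that singleton nodes of a cycle lie in $\succ(\properMonomials)$ is a welcome bit of care that the paper's own proof glosses over (it only records $V(Z)\setminus\singletonMonomials \subseteq \succ(\targetMonomials)$, implicitly relying on $V(Z)\subseteq\succ(\upperNodes(Z))$).
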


\begin{proof}
   By choosing $\targetMonomials \define \properMonomials$ we have $\proj[\singletonMonomials \cup \targetMonomials][P(\linearization)] = P(\linearization)$.
   In this case, Property~\ref{PropertyPathAbove} of Theorem~\ref{thm:simpleLinearizationProjectionIntegral} is always satisfied due to $V(Z) \setminus \singletonMonomials \subseteq \properMonomials = \targetMonomials \subseteq \succ(\targetMonomials)$.
   The theorem then implies the equivalence of~\ref{thm:simpleLinearizationIntegralIntegrality} and~\ref{thm:simpleLinearizationIntegralAcyclicity} and the implication of~\ref{thm:simpleLinearizationIntegralTDI} by~\ref{thm:simpleLinearizationIntegralIntegrality}.
   The reverse implication is clear due to the integral right-hand side of system~\eqref{eq:linearizationBounds}--\eqref{eq:linearizationSum}.
\end{proof}

\begin{remark}
   Another consequence of \cref{thm:simpleLinearizationProjectionIntegral} is that we can test in linear time whether a given simple linearization is integral, or, more generally, whether a given orthogonal projection onto a subset of variables (that contains all singletons) is integral.
\end{remark}

The next two subsections are dedicated to the proof of \cref{thm:simpleLinearizationProjectionIntegral}.

\subsection{Sufficiency}
\label{sec:integralitySufficiency}

In this section, we prove one direction of \cref{thm:simpleLinearizationProjectionIntegral} by showing that Properties~\ref{PropertyCycle} and~\ref{PropertyPathAbove} are sufficient for integrality of the corresponding projection.
We first prove an auxiliary result, showing that a combination of certain TDI systems results in a TDI system.
Since \AND-relaxations of single monomials are TDI by \cref{thm:ANDisTDI}, this result will allow us to deduce integrality of~$P(\linearization)$, provided~$G(D(\linearization))$ is essentially acyclic.

\begin{proposition}
   \label{thm:glueBinaryTDI}
   For $i \in [2]$, let $A^ix^i + b^i y \leq d^i$ be
   totally dual integral inequality systems that each imply $0 \leq y \leq 1$,
   where
   $A^i \in \Z^{m_i \times n_i}$ and $b^i$, $d^i \in \Z^{m_i}$.
   Then also the combined inequality system
   \begin{alignat*}{8}
      A^1 x^1  &  &         &\,+\,& b^1 y &\leq\,d^1 \\
               &\,& A^2 x^2 &\,+\,& b^2 y &\leq\,d^2
   \end{alignat*}
   is TDI.
\end{proposition}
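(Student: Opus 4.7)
The plan is to start from an integer objective $(c^1, c^2, c_y) \in \Z^{n_1} \times \Z^{n_2} \times \Z$ for which the combined LP has finite optimum~$F$, and to exhibit an integer optimal dual solution. The dual of the combined LP requires $u^1, u^2 \geq 0$ with $(A^i)^\top u^i = c^i$ for $i \in [2]$ and $(b^1)^\top u^1 + (b^2)^\top u^2 = c_y$. Writing $\alpha_i \define (b^i)^\top u^i$, any feasible dual induces a split $\alpha_1 + \alpha_2 = c_y$ such that each $u^i$ is dual feasible for the subproblem $\max\{(c^i)^\top x^i + \alpha_i y \suchthat A^i x^i + b^i y \leq d^i\}$; conversely, concatenating optimal duals for the two subproblems gives a feasible solution for the combined dual. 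Together with LP duality applied separately to each subproblem, this yields
\[
   F \;=\; \min_{\alpha_1 + \alpha_2 = c_y} \bigl[f_1(\alpha_1) + f_2(\alpha_2)\bigr],
\]
where $f_i(\alpha) \define \max\{(c^i)^\top x^i + \alpha y \suchthat A^i x^i + b^i y \leq d^i\}$.

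The crux is to show this outer minimum is attained at an integer split. Since each subsystem is TDI with integer right-hand side, the polyhedron $P_i \define \{(x^i, y) \suchthat A^i x^i + b^i y \leq d^i\}$ is integral; combined with the hypothesis that $0 \leq y \leq 1$ is implied by each subsystem, every minimal face of~$P_i$ has $y \in \{0, 1\}$. Maximizing the linear objective of $f_i(\alpha)$ over~$P_i$ therefore reduces to the better of the two integral faces $\{y = 0\} \cap P_i$ and $\{y = 1\} \cap P_i$, so
\[
   f_i(\alpha) \;=\; \max\bigl(f_i^0,\; f_i^1 + \alpha\bigr),
\]
where $f_i^c \define \max\{(c^i)^\top x^i \suchthat A^i x^i \leq d^i - c\, b^i\}$ for $c \in \{0, 1\}$ is the maximum of an integer linear functional over an integral polyhedron and hence an integer (or $-\infty$, ignoring degenerate unbounded cases that contradict finiteness of~$F$). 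Each $f_i$ is thus piecewise linear convex with an integer breakpoint and slopes in $\{0, 1\}$, so $\alpha_1 \mapsto f_1(\alpha_1) + f_2(c_y - \alpha_1)$ is piecewise linear convex with integer breakpoints, and its minimum is attained at some $\alpha_1^\star \in \Z$; set $\alpha_2^\star \define c_y - \alpha_1^\star \in \Z$.

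Finally, applying the TDI hypothesis for subsystem $i$ to the integer objective $(c^i, \alpha_i^\star)$ yields an integer optimal dual solution $u^{i,\star}$ of value $f_i(\alpha_i^\star)$. The concatenation $(u^{1,\star}, u^{2,\star})$ is integer, dual feasible for the combined LP, and attains objective value $f_1(\alpha_1^\star) + f_2(\alpha_2^\star) = F$; hence it is an integer optimal dual solution, establishing TDI. I expect the main obstacles to be a clean derivation of the duality identity for~$F$ (effectively an infimal-convolution formula one can obtain from Lagrangian relaxation of the shared variable~$y$) and the bookkeeping needed to handle edge cases where a subsystem is infeasible at $y = 0$ or $y = 1$; both are technical rather than conceptual.
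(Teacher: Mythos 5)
Your proposal is correct and follows essentially the same route as the paper's proof: split the shared objective coefficient $\delta = \delta_1 + \delta_2$, invoke TDI of each subsystem for the split objectives, and concatenate the resulting integral dual multipliers. Your convexity argument for locating an integral optimal split (the minimizer of the piecewise-linear function $\alpha_1 \mapsto f_1(\alpha_1) + f_2(c_y - \alpha_1)$ with integer breakpoints) is just a repackaging of the paper's explicit choice $\delta_1 \define z^1_0 - z^1_1$, and the degenerate cases you defer (a subsystem infeasible or unbounded on a face $y = k$) are exactly the ones the paper disposes of at the start by projecting out $y$ or exhibiting an unbounded ray with zero $y$-component.
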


\begin{proof}
   Consider an objective vector $(c^1, c^2, \delta) \in \Z^{n_1} \times \Z^{n_2} \times \Z$ for which the combined inequality system attains an optimal solution.
   Let, for $i \in [2]$ and $k \in \B{}$,
   \begin{gather*}
      z^i_k \define \sup \setdef{ \transpose{(c^i)} x }[ A^i x + b^i y \leq d^i \text{ and } y = k ] \in \R \cup \setdef{\pm \infty}.
   \end{gather*}
   Due to $0 \leq y \leq 1$, the feasible region of each of these four LPs is a face of an integral polyhedron, which implies $z^i_k \in \Z \cup \setdef{\pm \infty}$.
   
   Suppose (at least) one of the values is $-\infty$, say $z^1_0$.
   If also $z^1_1 = -\infty$, the first system is infeasible and thus the same holds for the combined system.
   Otherwise, the first system implies $y = 1$.
   We can now project $y$ out of the first system and consider the face defined by $y = 1$ for the second system.
   This maintains TDIness as well as the feasible region of the combined system.
   The resulting polyhedron is a Cartesian product of two polyhedra that are defined by TDI inequality systems, proving that the combined system is TDI.

   Consequently, we can assume that $z^i_k \neq -\infty$ for all $i$ and $k$ in the following.
   If one of the values, say $z^1_k$, is $\infty$, then there exists an unbounded ray $(r,p) \in \R^{n_1} \times \R$ with $\transpose{(c^1)} r + \delta p > 0$.
   From $0 \leq y \leq 1$, we obtain $p = 0$. This shows that $(r,\zerovec,0)$ is an unbounded ray in the combined system and satisfies $\transpose{(c^1)} r + \transpose{(c^2)} \zerovec + \delta 0 > 0$.
   This contradicts our assumption that the objective $(c^1, c^2, \delta)$ is bounded over the combined system.

   Otherwise, all four values are finite. Then, for $\delta_1$, $\delta_2 \in
   \R$, consider the finite values
   \begin{equation}\label{eq:TDIsubmax}
     \bar{z}^i \define \max\big\{\transpose{(c^i)} x + \delta_i \cdot y \suchthat A^i x + b^i y \leq d^i\big\}.
   \end{equation}
   Note that if $\delta_i \geq z^i_0 - z^i_1$, then $\bar{z}^i$
   is attained at a vector with $y=1$, because we have $\bar{z}^i \geq
   \max\{z^i_0, z^i_1 + \delta_i\} \geq z^i_0$. Similarly, if
   $\delta_i \leq z^i_0 - z^i_1$, then $\bar{z}^i$ is attained
   at a vector with $y=0$.

   If $z^1_0 - z^1_1 + z^2_0 - z^2_1 \geq \delta$ holds, then choose $\delta_1 \define z^1_0 - z^1_1$ and $\delta_2
   \define \delta - \delta_1 \leq z^2_0 - z^2_1$.
   Hence, $\bar{z}^i$ is attained by $(\bar{x}^i, \bar{y})$ with
   $\bar{x}^i \in \Z^{n_i}$ and $\bar{y} = 0$ for both $i \in [2]$.

   Otherwise, i.e., if $z^1_0 - z^1_1 + z^2_0 - z^2_1 <
   \delta$ holds, then choose $\delta_1 \define z^1_0 - z^1_1$ and $\delta_2 \define \delta - \delta_1 \geq z^2_0 - z^2_1$. 
   Then $\bar{z}^i$ is attained by $(\bar{x}^i, \bar{y})$ with
   $\bar{x}^i \in \Z^{n_i}$ and $\bar{y} = 1$ for both $i \in [2]$.

   In both cases, this shows that $(\bar{x}^1, \bar{x}^2, \bar{y})$ is a maximum for the combined system with respect to $(c^1, c^2, \delta)$.
   Hence, by integrality of the data and total dual integrality of each of the systems, there exist dual multipliers $\lambda^i \in \Z_+^{m_i}$ with
   \[
   \transpose{(\lambda^i)} A^i = \transpose{(c^i)},\quad
   \transpose{(\lambda^i)} b^i = \delta_i,\quad
   \transpose{(\lambda^i)} d^i = \transpose{(c^i)}
   \bar{x}^i + \delta_i \cdot \bar{y}.
   \]
   Thus, $\transpose{(\lambda^1)} d^1 + \transpose{(\lambda^2)} d^2 = \transpose{(c^1)} \bar{x}^1 + \transpose{(c^2)} \bar{x}^2 + \delta \cdot \bar{y}$, i.e., $(\lambda^1, \lambda^2)$ is an integral optimum for the dual of the LP over the combined system.
   This concludes the proof.
\end{proof}

\begin{lemma}
   \label{thm:acyclicTDI}
   Let~$\linearization = (n, \allMonomials, \constraints)$ be a simple linearization.
   If $G(D(\linearization))$ is acyclic, then system~\eqref{eq:linearizationBounds}--\eqref{eq:linearizationSum} is TDI.
\end{lemma}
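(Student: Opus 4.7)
My plan is to prove TDIness by iteratively gluing the single-\AND-constraint systems (each TDI by \cref{thm:ANDisTDI}) via \cref{thm:glueBinaryTDI}. Since the gluing proposition only handles a single shared variable, the heart of the argument is to order the constraints $\constraint_1, \dotsc, \constraint_K$ of $\constraints$ in such a way that for every $k \geq 2$, the scope $\{\bigcup \constraint_k\} \cup \constraint_k$ meets the union of the scopes of $\constraint_1, \dotsc, \constraint_{k-1}$ in exactly one monomial. Acyclicity of $G(D(\linearization))$ is precisely what guarantees that such an ordering exists.

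To this end I would introduce the bipartite incidence graph $B(\linearization)$ with vertex classes $\constraints$ and $\allMonomials$, where $\constraint$ and $m$ are adjacent whenever $m \in \{\bigcup \constraint\} \cup \constraint$. The first claim is: if $G(D(\linearization))$ is acyclic, so is $B(\linearization)$. Suppose for contradiction that $B(\linearization)$ contains a cycle $\constraint_1, m_1, \constraint_2, m_2, \dotsc, \constraint_k, m_k, \constraint_1$. At each $\constraint_i$, the cycle enters via $m_{i-1}$ and leaves via $m_i$, two distinct monomials in $\{\bigcup \constraint_i\} \cup \constraint_i$. Traversing from $m_{i-1}$ to $m_i$ inside $\constraint_i$ can be realized in $G(D(\linearization))$ using at most two edges incident to $\bigcup \constraint_i$: two edges when both $m_{i-1}$ and $m_i$ are leaves of $\constraint_i$, one edge when exactly one of them equals $\bigcup \constraint_i$, and (since $m_{i-1} \neq m_i$) never zero. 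Concatenating these local passes over $i = 1, \dotsc, k$ yields a closed walk in $G(D(\linearization))$ whose edges are pairwise distinct, since edges incident to $\bigcup \constraint_i$ belong to $\constraint_i$ alone, and within a single pass the at most two edges used are distinct because $m_{i-1} \neq m_i$. A nontrivial closed walk without repeated edges contains a cycle, contradicting the acyclicity of $G(D(\linearization))$.

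Armed with the fact that $B(\linearization)$ is a forest, I would root each connected component at an arbitrary \AND-constraint vertex and enumerate the constraints in breadth-first order. For $k \geq 2$, the scope of $\constraint_k$ meets the union of the earlier scopes exactly in the single monomial $m^{\star}$ that is the BFS parent of $\constraint_k$, since a second shared monomial would immediately create a cycle in $B(\linearization)$. Then I would proceed by induction on $k$: the system of $\constraint_1$ together with bounds on all variables in its scope is TDI by \cref{thm:ANDisTDI}; at step $k$ the system for $\constraint_1, \dotsc, \constraint_{k-1}$ (TDI by the inductive hypothesis) and the system for $\constraint_k$ (TDI by \cref{thm:ANDisTDI}) share only the variable $y_{m^{\star}}$, so \cref{thm:glueBinaryTDI} yields TDIness of the combined system. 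Different connected components of $B(\linearization)$ use disjoint variable sets and thus combine by Cartesian product, while singletons absent from every scope contribute just their bound inequalities and can be appended as a trivial TDI factor.

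The main obstacle is the single-shared-variable restriction of \cref{thm:glueBinaryTDI}: naively peeling off an \AND-constraint at a time would typically leave $\card{\constraint^\star} \geq 2$ shared variables between the constraint and the remaining system, which is out of reach of the gluing proposition. The bipartite-forest argument above is exactly the device that reduces each inductive gluing step to a single shared variable and thereby makes the acyclicity hypothesis effective.
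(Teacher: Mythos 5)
Your proof is correct and rests on the same two pillars as the paper's: \cref{thm:ANDisTDI} for the single-\AND{} building blocks and \cref{thm:glueBinaryTDI} for merging them along one shared variable at a time. The difference is purely in the combinatorial scaffolding that certifies the ``exactly one shared variable'' condition. The paper argues top-down by induction on $\card{\constraints}$: it picks a constraint $\constraint^\star$ whose resultant $m^\star$ has in-degree $0$, observes that deleting $m^\star$ from the (connected, acyclic) graph $G(D(\linearization))$ splits it into $\card{\constraint^\star}$ components each meeting the star of $\constraint^\star$ in a single node, and glues the single-\AND{} system for $\constraint^\star$ to each component's (inductively TDI) system. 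You instead prove that the bipartite constraint--monomial incidence graph is a forest whenever $G(D(\linearization))$ is acyclic, and then glue one constraint at a time in BFS order, the BFS parent being the unique shared monomial; this is essentially the junction-tree viewpoint, and your reduction of a bipartite cycle to an edge-disjoint closed walk in $G(D(\linearization))$ is sound (each local pass uses only star edges of the corresponding constraint, and distinct constraints have distinct stars since the linearization is simple). One phrasing to tighten: ``edges incident to $\bigcup \constraint_i$ belong to $\constraint_i$ alone'' is not literally true, since $\bigcup\constraint_i$ may also be a child in another constraint's star; what you actually use, correctly, is that the edges traversed in the pass through $\constraint_i$ are of the form $\{\bigcup\constraint_i, m\}$ with $m \in \constraint_i$, and each such edge lies in exactly one star. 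Both organizations deliver the same result with comparable effort; yours makes the single-sharing structure explicit once and for all, while the paper's recursion avoids introducing the auxiliary bipartite graph.
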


\begin{proof}
   We prove the statement by induction on the number $\card{\constraints}$ of constraints.
   For $\constraints = \varnothing$, consistency of $\linearization$ implies that $\allMonomials = \singletonMonomials$, and hence $P(\linearization) = [0,1]^n$.
   
   If $G(D(\linearization))$ has more than one connected component, $P(\linearization)$ is the Cartesian product of polytopes of smaller linearizations, whose inequality systems~\eqref{eq:linearizationBounds}--\eqref{eq:linearizationSum} are TDI as well as the combined inequality system.
   Thus, we assume that $G(D(\linearization))$ is connected.

   If $\constraints \neq \varnothing$, we choose a constraint $\constraint^\star \in \constraints$ such that its corresponding monomial $m^\star \define \bigcup \constraint^\star$ has in-degree~$0$.
   We consider the linearization $\linearization' \define (n, \allMonomials', \constraints')$ with $\allMonomials' \define \allMonomials \setminus \setdef{m^\star}$ and $\constraints' \define \constraints \setminus \setdef{\constraint^\star}$.
   Clearly, $\linearization'$ satisfies the assumption of the lemma, since $D(\linearization')$ is a subgraph of $D(\linearization)$.
   By connectivity and acyclicity of $G(D(\linearization))$, the graph obtained from $G(D(\linearization))$ by removing node $m^\star$ consists of $\card{\constraint^\star}$-many connected components, each of which contains a node corresponding to some monomial $m \in \constraint^\star$.
   Let $\linearization_m$ be the linearization whose graph $G(D(\linearization_m))$ is the connected component containing node $m \in \constraint^\star$.
   By the induction hypothesis, the inequality system~\eqref{eq:linearizationBounds}--\eqref{eq:linearizationSum} for $\linearization_m$ is TDI.

   Consider the polytope $P \subseteq \R^{\constraint^\star} \times \R$ with variables $y_m \in [0,1]$ for each $m \in \constraint^\star$ as well as $y_{m^\star} \in [0,1]$ defined by inequalities~\eqref{eq:linearizationPair} and~\eqref{eq:linearizationSum} for constraint $\constraint^\star$.
   By \cref{thm:ANDisTDI}, the system is TDI and $P$ is integral.
   Observe that $P$ has exactly one variable in common with the polytopes $P(\linearization_m)$
   for each $m \in \constraint^\star$.
   By repeated application of Proposition~\ref{thm:glueBinaryTDI}, the combined system (which is system~\eqref{eq:linearizationBounds}--\eqref{eq:linearizationSum} for $\linearization$) is TDI.
   This concludes the proof.
\end{proof}

\cref{thm:acyclicTDI} is the main ingredient for the sufficiency direction of \cref{thm:simpleLinearizationProjectionIntegral}.
Property~\ref{PropertyPathAbove} of the theorem is used to replace the considered linearization $\linearization$ by an equivalent one whose graph is acyclic.

\begin{proof}[Sufficiency proof for \cref{thm:simpleLinearizationProjectionIntegral}.]
   Let~$\linearization = (n, \allMonomials, \constraints)$ be a simple linearization and~$\targetMonomials \subseteq \properMonomials$ be a subset of the proper monomials.
   Assume that $D(\linearization)$ does not contain a subgraph~$Z$ satisfying Properties~\ref{PropertyCycle} and~\ref{PropertyPathAbove}.
   We have to show that $\proj[\singletonMonomials \cup \targetMonomials][P(\linearization)]$ is integral, and, if in addition $\properMonomials \subseteq \succ(\targetMonomials)$ holds, that system~\eqref{eq:linearizationBounds}--\eqref{eq:linearizationSum} is TDI.

   To this end, consider the linearization $\linearization' = (n, \allMonomials', \constraints')$ with $\allMonomials' \define \singletonMonomials \cup \succ(\targetMonomials)$ defined in \cref{thm:redundantMonomials}.
   Note that the construction is such that $\allMonomials'$ contains all target monomials~$\targetMonomials$.
   The corollary states that $P(\linearization') = \proj[\allMonomials'][P(\linearization)]$ holds.
   Moreover, our assumption that $D(\linearization)$ does not contain a subgraph $Z$ satisfying Properties~\ref{PropertyCycle} and~\ref{PropertyPathAbove} implies that $G(D(\linearization'))$ is acyclic.
   Thus, by \cref{thm:acyclicTDI}, system~\eqref{eq:linearizationBounds}--\eqref{eq:linearizationSum} for $\linearization'$ is TDI.
   This already proves the second statement, since $\allMonomials' = \allMonomials$ holds in case $\properMonomials \subseteq \succ(\targetMonomials)$ holds.

   Since the right-hand side of the system is integer, $P(\linearization')$ is integral, which implies integrality of $\proj[\singletonMonomials \cup \targetMonomials][P(\linearization')]$.
   We obtain
   \begin{align*}
      \proj[\singletonMonomials \cup \targetMonomials][P(\linearization)]
      &= \proj[\singletonMonomials \cup \targetMonomials][\proj[\allMonomials'][P(\linearization)]]
      = \proj[\singletonMonomials \cup \targetMonomials][P(\linearization')],
   \end{align*}
   where the second equation holds by \cref{thm:redundantMonomials}, noting that $\singletonMonomials \cup \targetMonomials \subseteq \allMonomials'$ holds.
   This establishes integrality of $\proj[\singletonMonomials \cup \targetMonomials][P(\linearization)]$ and concludes the proof.
\end{proof}

\begin{remark}
   The integrality implications of \cref{thm:ANDisTDI} and of \cref{thm:glueBinaryTDI} are well known (see~\cite{ErmelW18} for a short direct proof and~\cite{ConfortiP16,Margot95} for the more general concept of the ``projected faces property'').
   Hence, the sufficiency proof for the integrality property essentially reduces to \cref{thm:redundantMonomials} and \cref{thm:acyclicTDI} if one exploits this knowledge.
   However, we established the TDI property of the corresponding inequality systems, which is a stronger result.
\end{remark}

\subsection{Necessity}
\label{sec:integralityNecessity}

We now present the other direction of the proof of \cref{thm:simpleLinearizationProjectionIntegral}.
To this end we show that the presence of subgraphs $Z$ of $D(\linearization)$ fulfilling Properties~\ref{PropertyCycle} and~\ref{PropertyPathAbove} in \cref{thm:simpleLinearizationProjectionIntegral} implies fractionality of~$P(\linearization)$. We will consider
subgraphs of~$D(\linearization)$ and~$G(D(\linearization))$.
If we are given a path in~$D(\linearization)$ or~$G(D(\linearization))$, we always assume that it is given by its arcs or edges, respectively.

\begin{figure}[tb]
   \centering
   \begin{tikzpicture}
      \node[nodeMonomial,label={left:$u_1$}] at (0,4) (u1) {};
      \node[nodeMonomial] at (-1,3) (v1) {};
      \node[nodeMonomial,label={left:$u_2$}] at (2,4) (u2) {};
      \node[nodeMonomial] at (1,3) (v2) {};
      \node[nodeMonomial,label={left:$m$}] at (3,3) (v3) {};
      \node[nodeMonomial] at (4,2) (v4) {};
      \node[nodeMonomial,label={left:$u_3$}] at (5,4) (u3) {};
      \node[nodeMonomial] at (6,3) (v5) {};
      \node[nodeMonomial,label={below:$m'$}] at (3,1) (v6) {};

      \node at (2,2) {$Z_1^\star$};
      \node at (5,2) {$Z_2^\star$};

      \draw[path] (u1) -> (v1);
      \draw[path] (u1) -> (v2);
      \draw[path] (u2) -> (v2);
      \draw[path] (u2) -> (v3);
      \draw[path] (v3) -> (v4);
      \draw[path] (u3) -> (v4);
      \draw[path] (u3) -> (v5);
      \draw[path] (v1) .. controls +(down:2) and +(left:2) .. (v6);
      \draw[path] (v5) .. controls +(down:2) and +(right:2) .. (v6);

      \draw[path] (v3) -> (v6) node[pos=0.5,right]{$P$};
   \end{tikzpicture}
   \caption{%
      Illustration of a shortcut path in the proof of \cref{thm:SimpleLinearizationCyclicExtraProperties}. Here, $\upperNodes(Z_1^{\star}) = \upperNodes(Z_1^{\star} \cup P) = \setdef{u_1,u_2}$, $\upperNodes(Z_2^{\star}) = \setdef{u_3}$ and $\upperNodes(Z_2^{\star} \cup P) = \setdef{u_3,m}$ hold.}
   \label{fig:shortcut}
\end{figure}

For subgraphs $Z$ of $D$, let~$\upperNodes(Z) \subseteq V(Z)$ be the set of nodes with out-degree at least~$2$ in the subgraph $Z$, called \emph{upper nodes}, and let~$\lowerNodes(Z) \subseteq V(Z)$ be the nodes with in-degree at least~$2$ in the subgraph~$Z$, called \emph{lower nodes}.
Note that if $G(Z)$ is a cycle, then \mbox{$\card{\upperNodes(Z)} = \card{\lowerNodes(Z)}$} holds.
Moreover, in this case the requirement ``$V(Z) \subseteq \succ(\targetMonomials)$'' in \cref{thm:simpleLinearizationProjectionIntegral} is equivalent to ``$\upperNodes(Z) \subseteq \succ(\targetMonomials)$'' since $V(Z) \subseteq \succ(\upperNodes(Z))$ holds.
In \cref{fig:simpleLinearizationProjectionIntegralEvil}, $\upperNodes(Z) = \{ \{1,2\}, \{1,3\}, \{2,3,4\} \}$ and $\lowerNodes(Z) = \{ \{1\}, \{2\}, \{3\} \}$ hold, while in \cref{fig:simpleLinearizationProjectionIntegralNice}, we have $\upperNodes(Z) = \{ \{2,3,4\} \}$ and $\lowerNodes(Z) = \{ \{3\} \}$.

\begin{lemma}
   \label{thm:SimpleLinearizationCyclicExtraProperties}
   Let $\linearization = (n, \allMonomials, \constraints)$ be a simple linearization, and let $\targetMonomials \subseteq \properMonomials$ be a subset of the proper monomials.
   If $D(\linearization)$ has a subgraph that satisfies Properties~\ref{PropertyCycle} and~\ref{PropertyPathAbove} of \cref{thm:simpleLinearizationProjectionIntegral}, then $D(\linearization)$ has a subgraph $Z$ satisfying Properties~\ref{PropertyCycle} and~\ref{PropertyPathAbove} and $\card{\upperNodes(Z)} = 1$ or the following three properties:
   \begin{enumerate}[label=(\alph{*}),start=3]
   \item
      \label{PropertyUpperLowerPath}
      for all $s \in \singletonMonomials$, $t \in \targetMonomials$, there is at most one $t$-$s$-path,
   \item\label{PropertyLowerEmptyIntersection} $\card{\pred(s) \cap \lowerNodes(Z)} \leq 1$ for all $s \in \singletonMonomials$,
   \item\label{PropertyUpperEmptyIntersection} $\card{\succ(t) \cap \upperNodes(Z)} \leq 1$ for all $t \in \targetMonomials$.
   \end{enumerate}
\end{lemma}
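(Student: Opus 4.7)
The plan is to choose a subgraph $Z$ of $D(\linearization)$ satisfying Properties~\ref{PropertyCycle} and~\ref{PropertyPathAbove} that is minimal with respect to $\card{V(Z)}$ (with ties broken lexicographically if necessary), and then to show that either $\card{\upperNodes(Z)} = 1$ or each of (c), (d), (e) holds. Every potential failure is handled either by contradicting minimality of $Z$ or by producing a new cycle with $\card{\upperNodes} = 1$ that may be output in place of $Z$.

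For (c), suppose there exist distinct directed $t$-$s$-paths $P_1, P_2$ in $D(\linearization)$ for some $t \in \targetMonomials$ and $s \in \singletonMonomials$. Let $u$ be the first vertex where $P_1$ and $P_2$ diverge and $v$ the first subsequent vertex where they reconverge. The two sub-paths from $u$ to $v$ form a subgraph $Z'$ whose undirected version is a simple cycle with $\upperNodes(Z') = \{u\}$, $\lowerNodes(Z') = \{v\}$, and $V(Z') \subseteq \succ(t) \subseteq \succ(\targetMonomials)$; thus $Z'$ satisfies Properties~\ref{PropertyCycle} and~\ref{PropertyPathAbove} with $\card{\upperNodes(Z')} = 1$, and we return $Z'$.

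Assuming henceforth that (c) holds, I address (d). Suppose $\ell_1, \ell_2 \in \lowerNodes(Z) \cap \pred(s)$ are distinct for some $s \in \singletonMonomials$. By consistency of $\linearization$ there exist directed paths $Q_i : \ell_i \to s$ in $D(\linearization)$, whose concatenation is an undirected $\ell_1$-$\ell_2$-path in $G(D(\linearization))$ passing through $s$. Truncating this path at the first re-entry into $V(Z)$ from each end produces a shortcut $R$ whose interior is disjoint from $V(Z)$; combining $R$ with an appropriate sub-arc of $G(Z)$ between its endpoints (cf.\ \cref{fig:shortcut}) yields a subgraph $Z'$ whose undirected version is a simple cycle with $V(Z') \subseteq \succ(\targetMonomials)$. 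The argument then branches: either $\card{V(Z')} < \card{V(Z)}$, contradicting minimality of $Z$, or the surgery forces $\card{\upperNodes(Z')} = 1$, in which case $Z'$ is returned. Property (e) follows by a symmetric argument applied to a target $t \in \targetMonomials$ whose set of successors meets $\upperNodes(Z)$ in at least two nodes, using directed paths from $t$ to those upper nodes.

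The main obstacle is ensuring that the shortcut surgery for (d) and (e) genuinely yields a strictly smaller cycle or one with a single upper node: the truncated shortcut $R$ may be long, and when $\ell_1, \ell_2$ are non-adjacent among the corners of $Z$ the resulting $Z'$ may retain multiple upper nodes. A resolution requires a careful case analysis, exploiting (c) to argue that either one of $\ell_1, \ell_2$ is a descendant of the other in $D$ (yielding a true directed shortcut inside $Z$), or that target monomials reaching $\ell_1, \ell_2$ can be chosen distinctly (enabling a dual shortcut at the upper end). Refining the minimization criterion to a lexicographic pair such as $(\card{\upperNodes(Z)}, \card{V(Z)})$ may also be required to close the argument cleanly.
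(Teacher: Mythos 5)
Your argument for Property~\ref{PropertyUpperLowerPath} is fine and matches the paper's. The genuine gap is exactly the one you flag yourself at the end: with $\card{V(Z)}$ as the minimization criterion, the surgery for Properties~\ref{PropertyLowerEmptyIntersection} and~\ref{PropertyUpperEmptyIntersection} does not close, because the auxiliary paths (your $Q_1,Q_2$, and likewise the $t$-$u_i$-paths) can be arbitrarily long, so the new cycle $Z'$ need not have fewer vertices, and nothing in your construction forces $\card{\upperNodes(Z')}=1$ either. The paper's resolution is precisely the one you only gesture at: minimize $\card{\upperNodes(Z)}$ rather than $\card{V(Z)}$. This measure is the right one because every surgery changes the upper-node count in a controlled way: a \emph{directed} auxiliary path can contribute at most one new upper node (its start), two directed paths converging at a common node $m$ below two lower nodes $\ell_1,\ell_2$ contribute \emph{no} new upper nodes, and two directed paths diverging from a common node $m$ above two upper nodes $u_1,u_2$ remove both $u_1$ and $u_2$ from $\upperNodes$ while adding only $m$. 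Combined with the fact that one of the two arcs of the cycle between the two attachment points contains at most half of $\upperNodes(Z^\star)$, and the standing assumption $\card{\upperNodes(Z^\star)}\ge 2$, each case yields a cycle with strictly fewer upper nodes.

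Two further points you would need to repair. First, before doing the surgery for~\ref{PropertyLowerEmptyIntersection} and~\ref{PropertyUpperEmptyIntersection}, the paper proves a separate \emph{no-shortcut-path} lemma: there is no directed path between two distinct nodes of $V(Z^\star)$ that is arc-disjoint from $Z^\star$. This is what guarantees that the converging paths $P_1',P_2'$ (from $\ell_1,\ell_2$ to their first common node $m$) are arc-disjoint from $Z^\star$, so that the glued object is genuinely a simple cycle; your ``truncate at the first re-entry into $V(Z)$'' produces an undirected walk whose endpoints may be arbitrary nodes of $Z$ and whose structure you cannot control without this lemma. Second, you never verify that the new cycle $Z'$ still satisfies Property~\ref{PropertyPathAbove}, i.e., $\upperNodes(Z')\subseteq\succ(\targetMonomials)$; this needs an explicit argument in each case (e.g., in the case of~\ref{PropertyUpperEmptyIntersection} the new upper node $m$ lies on a path from $t\in\targetMonomials$, and in the shortcut case the new upper node $m$ has a predecessor in $\upperNodes(Z^\star)\subseteq\succ(\targetMonomials)$). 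Your fallback suggestion of the lexicographic pair $(\card{\upperNodes(Z)},\card{V(Z)})$ points in the right direction, but the second component is in fact unnecessary once the upper-node count is shown to drop strictly in every case.
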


\begin{proof}
   Let $Z^\star$ be a subgraph satisfying Properties~\ref{PropertyCycle}
   and~\ref{PropertyPathAbove} with minimal $\card{\upperNodes(Z^\star)}$.
   Since $D(\linearization)$ is acyclic and $G(Z^\star)$ is a cycle, we have $\card{\upperNodes(Z^\star)} \geq 1$.
   If $\card{\upperNodes(Z^\star)} = 1$, we are done, so we assume
   from now on that $\card{\upperNodes(Z^\star)} \geq 2$ holds.
   We claim that $Z^\star$ also satisfies Properties~\ref{PropertyUpperLowerPath}--\ref{PropertyUpperEmptyIntersection}.

   To prove Property~\ref{PropertyUpperLowerPath}, assume that for some $t \in \targetMonomials$ and some $s \in \singletonMonomials$, there are two distinct $t$-$s$-paths $P_1$ and $P_2$ in $D(\linearization)$.
   Then the symmetric difference of $P_1$ and $P_2$ contains a subgraph $Z'$ that consists of
   two internally node-disjoint paths having the same start node and the same terminal node.
   Clearly, $Z'$ satisfies Property~\ref{PropertyCycle} and~\ref{PropertyPathAbove}, since $t$ is a predecessor of all nodes in $V(Z') \supseteq \upperNodes(Z')$.
   The construction implies $\card{\upperNodes(Z')} = 1$, contradicting our assumption.

   Before proving Property~\ref{PropertyLowerEmptyIntersection} and~\ref{PropertyUpperEmptyIntersection}, we show that there is no directed path in $D(\linearization)$ between two distinct nodes of $V(Z^\star)$ that is arc-disjoint with $Z^\star$, see \cref{fig:shortcut} for an illustration.
   We will subsequently call such a path a \emph{shortcut path}.
   Assume for the sake of contradiction that $P$ is such a path from node $m$ to $m'$.
   The two nodes $m$ and $m'$ induce a partition of $Z^\star$ into two distinct subgraphs $Z_1^\star$ and $Z_2^\star$ such that $G(Z_1^\star)$ and $G(Z_2^\star)$ are paths in $G(Z^\star)$ connecting $m$ and $m'$.
   Then, for~$i \in [2]$, $G(Z_i^\star \cup P)$ is a cycle.
   Since $P$ is a directed path, it can only induce one more node (namely $m$) into $\upperNodes(Z^\star)$, i.e. $\card{\upperNodes(Z_i^\star \cup P)} \leq \card{\upperNodes(Z_i^\star)} + 1$, but only one case can attain equality.
   Moreover, one $Z_i^\star$ contains at most half of the nodes of $\upperNodes(Z^\star)$.
   Since $\card{\upperNodes(Z^\star)} \geq 2$ it follows that $\card{\upperNodes(Z_i^\star \cup P)} < \card{\upperNodes(Z^\star)}$ for one $i \in [2]$.
   Moreover, when following~$m$ backwards in $Z^\star$, we reach a node $u \in \upperNodes(Z^\star)$.
   By assumption on $Z^\star$, $\upperNodes(Z^\star) \subseteq \succ(\targetMonomials)$.
   Thus, the same holds for $Z_i^\star \cup P$, i.e., Property~\ref{PropertyPathAbove} holds for $Z_i^\star \cup P$.
   This contradicts our minimality assumption on the choice of $Z^\star$.
   
   To prove Property~\ref{PropertyLowerEmptyIntersection}, assume that there exists a singleton $s \in \singletonMonomials$
   and two distinct monomials $\ell_1$, $\ell_2 \in \pred(s) \cap \lowerNodes(Z^\star)$ with
   corresponding $\ell_1$-$s$-path $P_1$ and $\ell_2$-$s$-path $P_2$ in~$D(\linearization)$, respectively.
   There exists a node $m \in V(P_1) \cap V(P_2)$ such that $P_1$ and $P_2$ contain arc-disjoint
   $\ell_1$-$m$- and $\ell_2$-$m$-paths $P_1'$ and $P_2'$, respectively.
   Moreover, $P'_1$ and~$P_2'$ are arc-disjoint from $Z^\star$, since they would otherwise induce a shortcut path.
   The two nodes $\ell_1$, $\ell_2 \in V(Z^\star)$ induce a partition of $Z^\star$
   into two distinct subgraphs $Z_1^\star$ and $Z_2^\star$ such that $G(Z_1^\star)$ and $G(Z_2^\star)$ are
   paths in $G(Z^\star)$ connecting $\ell_1$ and $\ell_2$.
   Clearly, $\upperNodes(Z_1^\star) \cap \upperNodes(Z_2^\star) = \varnothing$ holds. We assume w.l.o.g.\ $\upperNodes(Z_1^\star) \subset \upperNodes(Z^\star)$.
   Consider the subgraph $Z' \define P'_1 \cup P'_2 \cup Z_1^\star$, and observe that $G(Z')$ is a cycle, since $P'_1$, $P'_2$ and $Z_1^\star$ are disjoint by construction.
   Since $\ell_1$, $\ell_2 \in \lowerNodes(Z^\star)$, $Z'$ does not add nodes to $\upperNodes(Z_1^\star)$. Thus, $\upperNodes(Z') = \upperNodes(Z_1^{\star})\subset \upperNodes(Z^\star)$ holds, which implies Property~\ref{PropertyPathAbove} for $Z'$.
   Then $\card{\upperNodes(Z')} < \card{\upperNodes(Z^\star)}$ contradicts our minimality assumption on $\card{\upperNodes(Z^\star)}$ and establishes Property~\ref{PropertyLowerEmptyIntersection} for $Z^\star$.

   The proof for Property~\ref{PropertyUpperEmptyIntersection} is similar.
   Assume that there exists a target monomial $t \in \targetMonomials$ and two distinct monomials $u_1$, $u_2 \in \succ(t) \cap \upperNodes(Z^\star)$ with
   corresponding $t$-$u_1$-path $P_1$ and $t$-$u_2$-path $P_2$ in $D(\linearization)$.
   There exists a node $m \in V(P_1) \cap V(P_2)$ such that $P_1$ and $P_2$ contain arc-disjoint
   $m$-$u_1$- and $m$-$u_2$-paths $P_1'$ and $P_2'$, respectively.
   Moreover, $P_1'$ and~$P_2'$ are arc-disjoint from $Z^\star$, since they would otherwise induce a shortcut path.
   The two nodes $u_1$, $u_2 \in V(Z^\star)$ induce a partition of $Z^\star$ into two
   disjoint subgraphs $Z_1^\star$ and $Z_2^\star$ such that $G(Z_1^\star)$ and $G(Z_2^\star)$ are paths in
   $G(Z^\star)$ connecting $u_1$ and $u_2$.
   Observe that $G(Z')$ is a cycle for the subgraph $Z' \define Z^\star_1 \cup P'_1 \cup P'_2$, since $P'_1$, $P'_2$ and~$Z^\star_1$ are disjoint by construction.
   Since~$m$ is a predecessor of both~$u_1$ and~$u_2$ in~$D(\linearization)$, neither~$u_1$ nor~$u_2$ can be in~$\upperNodes(Z')$.
   Thus, $\upperNodes(Z') \subseteq (\upperNodes(Z^\star) \cup \setdef{m}) \setminus \setdef{u_1, u_2}$ holds,
   which implies Property~\ref{PropertyPathAbove} for $Z'$, since $t \in \pred(m) \cap \targetMonomials$.
   It also implies $\card{\upperNodes(Z')} < \card{\upperNodes(Z^\star)}$, which contradicts our minimality assumption on $\card{\upperNodes(Z^\star)}$ and establishes Property~\ref{PropertyUpperEmptyIntersection} for~$Z^\star$.
\end{proof}

\begin{figure}[tb]
   \begin{tikzpicture}
      \node[nodeSingleton,label=right:1] (1) at (0,0) {};
      \node[nodeSingleton,label=right:1] (2) at (2,0) {};
      \node[nodeSingleton,label=right:$\frac{2}{3}$] (3) at (4,0) {};
      \node[nodeSingleton,label=left:1] (4) at (6,0) {};
      \node[nodeSingleton,label=right:1] (6) at (10,0) {};
      \node[nodeSingleton,label=right:1] (5) at (8,0) {};

      \node[nodeMonomial,label=left:1] (12) at (1,1) {};
      \node[nodeMonomial,label=left:$\frac{2}{3}$] (23) at (3,1) {};
      \node[nodeMonomial,label=left:$\frac{2}{3}$] (34) at (5,1) {};
      \node[nodeMonomial,label=right:1] (46) at (9,1) {};

      \node[nodeMonomial,label=left:$\frac{2}{3}$] (123) at (2,2) {};
      \node[nodeMonomial,label=left:$\frac{1}{3}$] (234) at (4,2) {};
      \node[nodeTarget,label=left:$\frac{2}{3}$,fill=black] (345) at (6,2) {};
      \node[nodeTarget,label=right:1] (456) at (8.5,2) {};

      \node[nodeTarget,label=left:0] (1234) at (3,3) {};

      \node[nodeMonomial,label=left:0] (12345) at (4,4) {};

      \foreach \a/\b in {%
         12345/1234, 12345/345,
         123/12,
         234/23,
         345/34, 345/5,
         456/46, 456/5,
         12/1, 12/2,
         23/2,
         34/4,
         46/4,
         46/6%
         }%
      {
         \draw[arc] (\a) -- (\b);
      }

      \foreach \a/\b in {%
         1234/123, 1234/234,
         123/23,
         234/34,
         23/3,
         34/3%
         }%
      {
         \draw[arc, dashed, ultra thick] (\a) -- (\b);
      }
   \end{tikzpicture}
   \caption{The point~$y$ constructed in the proof of \cref{thm:simpleLinearizationCyclicCardinalityOne} w.r.t.\ the dashed cycle in the linearization of \cref{ex:runningExample}.}
   \label{fig:runningExampleKmConstruction}
\end{figure}

\begin{lemma}
   \label{thm:simpleLinearizationCyclicCardinalityOne}
   Let $\linearization = (n, \allMonomials, \constraints)$ be a simple linearization and let $\targetMonomials \subseteq \properMonomials$ be a subset of the proper monomials.
   If $D(\linearization)$ has a subgraph $Z$ that satisfies~\ref{PropertyCycle}, \ref{PropertyPathAbove} and $\card{\upperNodes(Z)} = 1$, then $\proj[\singletonMonomials \cup \targetMonomials][P(\linearization)]$ is not integral.
\end{lemma}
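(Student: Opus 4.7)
The plan is to exhibit a fractional point $y \in P(\linearization)$ whose projection onto $\singletonMonomials \cup \targetMonomials$ violates a standard valid inequality of the integer hull. Let $u$ denote the unique upper node of $Z$ and $\ell$ its unique lower node. Property~\ref{PropertyPathAbove} together with $u \in V(Z)$ lets me fix a target monomial $t^* \in \targetMonomials$ with $u \in \succ(t^*)$. Since arcs in $D(\linearization)$ go from a resultant to a strictly smaller factor (by consistency of $\linearization$), along either of the two $u$-to-$\ell$-paths forming $Z$ the monomials are strictly shrinking supersets of $\ell$; fixing any $s^* \in \ell$ therefore yields $s^* \in m$ for every $m \in V(Z)$, and $u \subseteq t^*$ gives $s^* \in t^*$.

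Let $A \define \pred(u)$ denote the set of ancestors of $u$ in $D(\linearization)$ and define $y \in \R^{\allMonomials}$ by
\begin{align*}
   y_m \define \begin{cases}
      0 & \text{if } m \in A, \\[2pt]
      \tfrac{1}{2} & \text{if } m \notin A \text{ and } s^* \in m, \\[2pt]
      1 & \text{otherwise.}
   \end{cases}
\end{align*}
I would verify $y \in P(\linearization)$ by examining every constraint $\constraint \in \constraints$ with resultant $m^\star \define \bigcup \constraint$. A crucial observation is that whenever $m^\star \notin A$, no child $m' \in \constraint$ lies in $A$: otherwise $u \subseteq m' \subsetneq m^\star$ would force $m^\star \in A$. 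Together with the definition of $y$, this makes~\eqref{eq:linearizationBounds} and~\eqref{eq:linearizationPair} routine. For~\eqref{eq:linearizationSum}, the decisive case is $m^\star = u$: because $u$ has out-degree $2$ in $Z$, both cycle children of $u$ lie in $\constraint^u$, and since they are nodes of $Z$ they contain $s^*$; thus at least two of the $y_{m'}$ equal $\tfrac{1}{2}$ and the rest are at most $1$, yielding $\sum_{m' \in \constraint^u} y_{m'} \leq \card{\constraint^u} - 1 = y_u + \card{\constraint^u} - 1$. When $m^\star \in A \setminus \setdef{u}$, the first arc of any $m^\star$-to-$u$-path produces a child in $A$ with $y$-value $0$, which again suffices. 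The cases $m^\star \notin A$ follow from $s^* \in m^\star = \bigcup_{m' \in \constraint} m'$ guaranteeing at least one child contains $s^*$.

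The projection $\bar{y}$ of $y$ then satisfies $\bar{y}_{t^*} = 0$, $\bar{y}_{\setdef{s^*}} = \tfrac{1}{2}$, and $\bar{y}_{\setdef{i}} = 1$ for $i \neq s^*$. The inequality $y_{t} \geq \sum_{i \in t} y_{\setdef{i}} - \card{t} + 1$ is valid for the integer hull because $\prod_{i \in t} x_i \geq \sum_{i \in t} x_i - \card{t} + 1$ holds pointwise for binary $x$ and is preserved under convex combinations. Evaluating at $\bar{y}$ and $t = t^*$ gives right-hand side $\tfrac{1}{2} > 0 = \bar{y}_{t^*}$, hence $\bar{y}$ lies outside the integer hull and $\proj[\singletonMonomials \cup \targetMonomials][P(\linearization)]$ is not integral. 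The main obstacle is the feasibility verification at $m^\star = u$, where the very reason the cycle has an upper node---the presence of two cycle children of $u$, both necessarily containing $s^*$---is exactly what creates the slack needed in~\eqref{eq:linearizationSum} to allow $y_u = 0$.
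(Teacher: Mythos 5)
Your proof is correct, but it takes a genuinely different route from the paper's. The paper counts, for each monomial $m$, the number $k(m)$ of directed $m$-$s$-paths to a fixed singleton $s$ below the lower node, sets $y_m = \max\setdef{0, 1 - k(m)/k(u)}$, verifies feasibility via the identity $k(\bigcup\constraint) = \sum_{m\in\constraint} k(m)$, and concludes by showing the projected point cannot be written as a convex combination of integer points. You instead build a half-integral point governed by $\pred(u)$ and by containment of a fixed $s^\star \in \ell$, and conclude by exhibiting an explicitly violated valid inequality $y_{t^\star} \geq \sum_{i\in t^\star} y_{\setdef{i}} - \card{t^\star} + 1$. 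Your construction is essentially the $\card{\upperNodes(Z)}=1$ specialization of the $0/\tfrac12/1$ point the paper uses only for the case $\card{\upperNodes(Z)}\geq 2$ (\cref{thm:SimpleLinearizationCyclicCardinalityHigher}), and it is arguably more uniform and more transparent: the separating inequality is explicit, whereas the paper's path-counting point takes genuinely non-dyadic values (e.g., $\tfrac23,\tfrac13$ in \cref{fig:runningExampleKmConstruction}) and needs the convex-combination argument. Two small points to tighten. First, your justification that a child $m'$ of a node $m^\star \notin \pred(u)$ cannot lie in $\pred(u)$ should not go through set inclusion ($u \subseteq m'$ does not by itself make $m'$ a predecessor of $u$); the correct and immediate reason is that the arc $(m^\star,m')$ concatenated with an $m'$-$u$-path would put $m^\star$ in $\pred(u)$. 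Second, validity of your concluding inequality for $\conv(Q'\cap\Z^{\singletonMonomials\cup\targetMonomials})$, where $Q'$ is the projection, requires that \emph{binary points of the projection} satisfy $y_{t^\star} = \prod_{i\in t^\star} y_{\setdef{i}}$; since an integer point of a projection need not be the projection of an integer point of $P(\linearization)$, this deserves the short chaining argument (inequalities~\eqref{eq:linearizationPair} along $t^\star$-$\setdef{i}$-paths force $y_{t^\star}=0$ when some $y_{\setdef{i}}=0$, and inequalities~\eqref{eq:linearizationSum} force $y_{t^\star}=1$ when all $y_{\setdef{i}}=1$). The paper makes the same implicit step in its own argument, so this is a matter of polish rather than a gap.
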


\begin{proof}
   Let $Z$ be as stated in the lemma, and let $t \in \targetMonomials$ be such that there exists a $t$-$u$-path in $D(\linearization)$, where $u \in \upperNodes(Z)$.
   Let $\ell \in \lowerNodes(Z)$ and $s \in \singletonMonomials \cap \succ(\ell)$.
   Note that since $\card{\upperNodes(Z)} = \card{\lowerNodes(Z)} = 1$ holds, $u$ and $\ell$ are unique.
   For $m \in \allMonomials$, we denote by $k(m) \in \Z_+$ the number of distinct $m$-$s$-paths in $D(\linearization)$.
   Clearly, $Z$ consists of two directed $u$-$\ell$-paths that can be extended to $u$-$s$-paths in $D(\linearization)$, which proves $k(u) \geq 2$.
   Note that all $\constraint \in \constraints$ satisfy
   \begin{equation}
      k\big(\bigcup \constraint\big) = \sum_{m \in \constraint} k(m), \label{EquationNumPathCount}
   \end{equation}
   since the $\bigcup \constraint$-$s$ paths are exactly those paths that start with some arc $(\bigcup \constraint, m)$ (with $m \in \constraint$) and continue with some $m$-$s$-path.
   Moreover, we have $k(\hat{s}) = 0$ for each $\hat{s} \in \singletonMonomials \setminus \{s\}$ as well as $k(s) = 1$.
   We define the point $y \in \R^{\allMonomials}$ via 
   \begin{align*}
      y_m & \define \max \setdef{0,\; 1 - \tfrac{k(m)}{k(u)} }\quad \text{ for each monomial } m \in \allMonomials.
   \end{align*}
   \Cref{fig:runningExampleKmConstruction} shows the constructed point~$y$
   for the running \cref{ex:runningExample}.
   Observe that $y_{\hat{s}} = 1$ for each $\hat{s} \in \singletonMonomials \setminus \{s\}$, since $k(\hat{s}) = 0$.
   Moreover, $y_{s} = 1 - 1/k(u) \in (0,1)$ holds.
   Finally, since the $k(u)$-many distinct $u$-$s$-paths induce at least $k(u)$-many $t$-$s$-paths, we have $k(t) \geq k(u)$.
   Thus, $y_t$ is the maximum of $0$ and $1 - k(t) / k(u) \leq 1 - k(u) / k(u) = 0$.
   Hence, $y_t = 0$.

   We now prove that $y \in P(\linearization)$.
   By construction, $y \in [0,1]^{\allMonomials}$.
   Consider a constraint $\constraint \in \constraints$, and let $m' \define \bigcup \constraint$.
   For each $m \in \constraint$, we have $k(m') \geq k(m)$ since $(m',m)$ is an arc in $D(\linearization)$.
   If $y_m = 0$, we have $k(m') \geq k(m) \geq k(u)$, which implies $y_{m'} = 0$.
   Otherwise, if $y_{m'} > 0$ then $y_{m'} = 1 - k(m') / k(u) \leq 1 - k(m) / k(u) = y_m$.
   We conclude that~\eqref{eq:linearizationPair} holds for each $m \in \constraint$.
   If $y_m = 0$ for some $m \in \constraint$, then~\eqref{eq:linearizationSum} is clearly satisfied.
   Otherwise we have
   \begin{align*}
      \sum_{m \in \constraint} y_m
      &= \card{\constraint} - \sum_{m \in \constraint} \frac{k(m)}{k(u)}
      \underset{\eqref{EquationNumPathCount}}{=} \card{\constraint} - \frac{k(m')}{k(u)}
      = 1 - \frac{k(m')}{k(u)} + \card{\constraint} - 1
      = y_{m'} + \card{\constraint} - 1.
   \end{align*}
   Thus, \eqref{eq:linearizationSum} is satisfied.

   Consider the projection  $y' \in \proj[\singletonMonomials \cup \targetMonomials][y]$ of $y$ and assume that $y'$ lies in the polytope $Q \define \conv(\proj[\singletonMonomials \cup \targetMonomials][P(\linearization)] \cap \Z^{\singletonMonomials \cup \targetMonomials})$.
   Then $y'$ is the convex combination of integer points $y^{(1)}$, $y^{(2)}$, \dots, $y^{(q)} \in Q \cap \Z^{\singletonMonomials \cup \targetMonomials}$.
   By construction of~$y$, the integer points have to satisfy $\smash{y^{(k)}_t} = 0$ and $\smash{y^{(k)}_{\hat{s}}} = 1$ for all $\hat{s} \in \singletonMonomials \setminus \{s\}$ for all $k \in [q]$.
   This implies $\smash{y^{(k)}_s = 0}$ for every~$k \in [q]$, because
   otherwise, $\smash{y^{(k)}_t}$ has to be~1. Thus, $y'_s = 0$, a contradiction.
   This proves that the fractional point $y'$ does not lie in $Q$, i.e., $\proj[\singletonMonomials \cup \targetMonomials][P(\linearization)]$ is not integral.
\end{proof}

\begin{figure}[tb]
   \centering
   \begin{tikzpicture}
      \node[nodeSingleton,label=right:1] (1) at (0,0) {};
      \node[nodeSingleton,label=right:1] (2) at (2,0) {};
      \node[nodeSingleton,label=right:1] (3) at (4,0) {};
      \node[nodeSingleton,label=left:$\frac{1}{2}$] (4) at (6,0) {};
      \node[nodeSingleton,label=right:1] (6) at (10,0) {};
      \node[nodeSingleton,label=right:$\frac{1}{2}$] (5) at (8,0) {};

      \node[nodeMonomial,label=left:1] (12) at (1,1) {};
      \node[nodeMonomial,label=left:1] (23) at (3,1) {};
      \node[nodeMonomial,label=left:$\frac{1}{2}$] (34) at (5,1) {};
      \node[nodeMonomial,label=right:$\frac{1}{2}$] (46) at (9,1) {};

      \node[nodeMonomial,label=left:1] (123) at (2,2) {};
      \node[nodeMonomial,label=left:$\frac{1}{2}$] (234) at (4,2) {};
      \node[nodeTarget,label=left:0,label=right:$u_2$] (345) at (6,2) {};
      \node[nodeTarget,label=left:$\frac{1}{2}$,label=right:$u_1$] (456) at (8.5,2) {};

      \node[nodeTarget,label=left:$\frac{1}{2}$] (1234) at (3,3) {};

      \node[nodeMonomial,label=left:0] (12345) at (4,4) {};

      \foreach \a/\b in {%
         12345/1234, 12345/345,
         1234/123, 1234/234,
         123/12, 123/23,
         234/23, 234/34,
         12/1, 12/2,
         23/2, 23/3,
         34/3,
         46/6%
         }%
      {
         \draw[arc] (\a) -- (\b);
      }

      \foreach \a/\b in {%
         345/34, 345/5,
         456/46, 456/5,
         34/4,
         46/4%
         }%
      {
         \draw[arc, dashed, ultra thick] (\a) -- (\b);
      }
   \end{tikzpicture}
   \caption{%
      The point~$y$ constructed in the proof of \cref{thm:SimpleLinearizationCyclicCardinalityHigher} w.r.t.\ the dashed cycle in the linearization of \cref{ex:runningExample}.}
   \label{fig:runningExampleKmConstruction2}
\end{figure}

\begin{lemma}
   \label{thm:SimpleLinearizationCyclicCardinalityHigher}
   Let $\linearization = (n, \allMonomials, \constraints)$ be a simple linearization and let $\targetMonomials \subseteq \properMonomials$ be a subset of the proper monomials.
   If $D(\linearization)$ has a subgraph $Z$ that satisfies Properties~\ref{PropertyCycle}--\ref{PropertyUpperEmptyIntersection} and $\card{\upperNodes(Z)} > 1$, then $\proj[\singletonMonomials \cup \targetMonomials][P(\linearization)]$ is not integral.
\end{lemma}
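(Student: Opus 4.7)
The plan is to construct an explicit fractional point $y \in P(\linearization)$ whose projection onto $\R^{\singletonMonomials \cup \targetMonomials}$ is not a convex combination of integer projections. Traverse the cycle $G(Z)$ and list its upper and lower nodes in the order they appear as $u_1, \ell_1, u_2, \ell_2, \dotsc, u_k, \ell_k$ (with $k \define \card{\upperNodes(Z)} \geq 2$), so that within $Z$ each $u_i$ reaches both $\ell_{i-1}$ and $\ell_i$ through directed paths; in particular $u_i \supseteq \ell_{i-1} \cup \ell_i$ (indices mod~$k$). Set $L \define \bigcup_i \ell_i$, which is a disjoint union by Property~\ref{PropertyLowerEmptyIntersection}. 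By Property~\ref{PropertyPathAbove} pick $t_i \in \pred(u_i) \cap \targetMonomials$ for each~$i$; Property~\ref{PropertyUpperEmptyIntersection} makes the $t_i$ distinct and, for $i \neq 1$, forbids $t_i \in \pred(u_1)$ (otherwise $\setdef{u_1, u_i} \subseteq \succ(t_i) \cap \upperNodes(Z)$). Abbreviate $S_i \define t_i \cap L$, so $\ell_{i-1} \cup \ell_i \subseteq S_i \subseteq L$.

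I would define $y_{\setdef{j}} \define \tfrac{1}{2}$ for $j \in L$ and $y_{\setdef{j}} \define 1$ for $j \notin L$; for each proper monomial $m$, set $y_m \define 0$ if $m \in \pred(u_1)$, and otherwise $y_m \define \tfrac{1}{2}$ or $1$ according to whether $m \cap L$ is non-empty or empty. (The $k = 2$ instance of this assignment is shown in \cref{fig:runningExampleKmConstruction2}.) Checking $y \in P(\linearization)$ proceeds per \AND-constraint with resultant $m = \bigcup \constraint$: if $m \notin \pred(u_1)$, then no child of $m$ can lie in $\pred(u_1) \cup \setdef{u_1}$ (otherwise $m$ itself would), so every child carries value in $\setdef{1/2, 1}$ and both the pair and sum inequalities follow immediately from $\card{\constraint} \geq 2$; if $m \in \pred(u_1)$, then some child $c_0 \in \pred(u_1) \cup \setdef{u_1}$ has $y_{c_0} = 0$, and both inequality types are trivial. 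At $m = u_1$ the sum inequality is tight because the two cycle-children of $u_1$ each contribute $\tfrac{1}{2}$ while remaining children contribute at most $1$.

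For non-integrality, suppose the projection $y'$ of $y$ equals $\sum_\alpha p_\alpha \hat{y}^{(\alpha)}$ for integer projections $\hat{y}^{(\alpha)}$. Since $y_{\setdef{j}} = 1$ for $j \notin L$, every $\hat{y}^{(\alpha)}$ in the support assigns $1$ to those singletons, so $(p_\alpha)_\alpha$ induces a probability distribution on $\B{L}$ under which $E[x_j] = \tfrac{1}{2}$ for $j \in L$ and $E\bigl[\prod_{j \in t \cap L} x_j\bigr] = y_t$ for each target $t$. For $i \geq 2$ and any $a \in \ell_{i-1} \subseteq S_i$, the inclusion $\setdef{x_j = 1 \text{ for all } j \in S_i} \subseteq \setdef{x_a = 1}$ between events of probability $\tfrac{1}{2}$ forces equality almost surely, giving the implication ``$x_a = 1 \Rightarrow x_j = 1$ for all $j \in S_i$''. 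Choosing $a_i \in \ell_i$ for each $i$ and iterating this implication for $i = 2, 3, \dotsc, k$ yields $x_{a_1} = 1 \Rightarrow x_j = 1$ for all $j \in \bigcup_{i=2}^k S_i$. Because $\bigcup_{i=2}^k S_i \supseteq \bigcup_{i=1}^k \ell_i = L \supseteq S_1$, we obtain $P(x_j = 1 \text{ for all } j \in S_1) \geq P(x_{a_1} = 1) = \tfrac{1}{2}$, contradicting $y_{t_1} = 0$.

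The main obstacle is the cyclic closure of the chain argument: each step exploits an almost-sure equality between $\setdef{x_{a_{i-1}} = 1}$ and $\setdef{x_j = 1 \text{ for all } j \in S_i}$, and closing the loop back to $S_1$ relies on the combinatorial identity $\bigcup_{i=2}^k(\ell_{i-1} \cup \ell_i) = L$ together with $S_1 \subseteq L$. The feasibility check for $y$ at the override boundary is more routine, but requires explicit use of $\card{\constraint_{u_1}} \geq 2$ and the fact that both cycle-children of $u_1$ touch~$L$.
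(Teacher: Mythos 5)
Your proof is correct, and its second half takes a genuinely different route from the paper's. The fractional point you build is a close relative of the paper's: there, the value $0$ is assigned on $\pred(u_k)$ and $\tfrac{1}{2}$ on $\bigcup_i \pred(s_i)$ for one chosen singleton $s_i \in \succ(\ell_i)$ per lower node, whereas you assign $\tfrac{1}{2}$ to every monomial meeting $L = \bigcup_i \ell_i$; the feasibility checks are the same case analysis on whether the resultant lies in $\pred(u_1)$ (resp.\ $\pred(u_k)$), and your use of Property~\ref{PropertyUpperEmptyIntersection} to force $y_{t_i} = \tfrac{1}{2}$ for $i \neq 1$ matches the paper's. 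The real divergence is in how non-integrality is extracted. The paper exhibits a face $F$ of the projected polytope, cut out by the valid inequalities $y_{t_i} \leq y_{s_i}$ and $y_{t_i} \leq y_{s_{i+1}}$ together with $y_{t_k} = 0$ and $y_{\hat{s}} = 1$, shows that $F$ contains at most one integer point, and observes that the fractional projection lies in $F$. You instead decompose the projection as a convex combination of integer points, read the weights as a probability distribution, and chain almost-sure implications $x_{a_{i-1}} = 1 \Rightarrow (x_j = 1 \text{ for all } j \in S_i)$ around the cycle, closing the loop via $\bigcup_{i \geq 2} S_i \supseteq L \supseteq S_1$ to contradict $y_{t_1} = 0$. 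The two mechanisms are morally the same chain of tight relations around the cycle, but yours avoids having to certify that $F$ is a face and works directly with the convex decomposition, while the paper's version is more geometric. One step you leave implicit---that an integer point of the \emph{projection} already satisfies $\hat{y}_t = \prod_{j \in t} \hat{y}_{\setdef{j}}$ even though its lift to $P(\linearization)$ need not be integral---does follow by telescoping~\eqref{eq:linearizationPair} and~\eqref{eq:linearizationSum} along paths of $D(\linearization)$, and the paper relies on the same fact when it pins down the unique integer point of $F$, so this is not a gap; it would just be worth a sentence. Two cosmetic quibbles: the sum inequality at $u_1$ need not be tight (the non-cycle children may also meet $L$), and the pair/sum inequalities for $m \notin \pred(u_1)$ follow from some child of $m$ inheriting $m \cap L \neq \varnothing$ rather than from $\card{\constraint} \geq 2$; neither affects correctness.
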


\begin{proof}
   Let $Z$ be a subgraph of $D$ as stated in the lemma.
   Let $\upperNodes(Z) = \setdef{u_1, u_2, \dotsc, u_k}$ and $\lowerNodes(Z) = \setdef{\ell_1, \ell_2, \dotsc, \ell_k}$ with $k > 1$.
   We assume the these nodes are ordered such that $Z$ contains a path from $u_i$ to $\ell_i$ for $i \in [k]$, paths from $u_i$ to $\ell_{i+1}$ for $i \in [k-1]$ as well as a path from $u_k$ to $\ell_1$.
   For $i \in [k]$, choose $s_i \in \singletonMonomials \cap \succ(\ell_i)$, which exists by \cref{thm:consistencyPaths}.
   Similarly, for $i \in [k]$, choose $t_i \in \targetMonomials \cap \pred(u_i)$, which is guaranteed to exist by Property~\ref{PropertyPathAbove}.
   Due to $\targetMonomials \cap \singletonMonomials = \varnothing$ and since $Z$ satisfies Properties~\ref{PropertyLowerEmptyIntersection} and~\ref{PropertyUpperEmptyIntersection}, the monomials $s_i$ and $t_i$ are distinct for all $i \in [k]$.

   We define the point $y \in \R^{\allMonomials}$ via
   \begin{gather*}
      y_m \define \begin{cases} 
         0 & \text{ if } m \in \pred(u_k),\\
         \tfrac{1}{2} & \text{ if } m \in \pred(s_i) \text{ for some $i \in [k]$ and } m \notin \pred(u_k), \\
         1 & \text{ otherwise},
       \end{cases}
     \end{gather*}
   for all $m \in \allMonomials$.
   \Cref{fig:runningExampleKmConstruction2} shows the constructed point for
   the running \cref{ex:runningExample}.
   We first show $y \in P(\linearization)$.
   By construction, the bounds~\eqref{eq:linearizationBounds} are satisfied.

   Consider an inequality~\eqref{eq:linearizationPair} $y_{m'} \leq y_m$ for $m \in \constraint \in \constraints$ and let $m' \define \bigcup \constraint$.
   Thus, $(m', m)$ is an arc of $D(\linearization)$.
   If $y_m = 0$, then $m \in \pred(u_k)$ and also $m' \in \pred(u_k)$, which implies $y_{m'} = 0$, and the inequality holds.
   If $y_m = \tfrac{1}{2}$, then $m \in \pred(s_i)$ for some $i \in [k]$, which implies $m' \in \pred(s_i)$.
   Thus, $y_{m'} = 0$ if $m \in \pred(u_k)$ or $y_{m'} = \tfrac{1}{2}$ otherwise. Again the inequality holds, which is
   also trivially true for $y_m = 1$.
   This shows that~\eqref{eq:linearizationPair} is satisfied.

   Consider an inequality~\eqref{eq:linearizationSum} $\sum_{m \in \constraint} y_m \leq y_{m'} + \card{\constraint} - 1$ for $\constraint \in \constraints$
   with $m' \define \bigcup \constraint$.
   If $y_{m'} = \tfrac{1}{2}$, then $m' \in \pred(s_i)$ for some $i \in [k]$.
   Hence, also $m \in \pred(s_i)$ for some monomial $m \in \constraint$ which shows $y_m = \tfrac{1}{2}$ or $y_m = 0$.
   Thus, the inequality is satisfied.
   If $y_{m'} = 0$, then $m' \in \pred(u_k)$.
   If, for some $m \in \constraint$, $m \in \pred(u_k)$ holds, we have $y_m = 0$ as well.
   Otherwise, $m' = u_k$ holds, and there exist monomials $m_1$, $m_2 \in \constraint$ with $m_1 \in \pred(\ell_k) \subseteq \pred(s_k)$ and $m_2 \in \pred(\ell_1) \subseteq \pred(s_1)$.
   Hence, $y_{m_1} = y_{m_2} = \tfrac{1}{2}$ in this case.
   Thus, in both cases, the inequality holds.
   The only remaining case is $y_{m'} = 1$, in which the inequality 
   follows from $y_m \leq 1$ for all $m \in \constraint$.
   This proves that~\eqref{eq:linearizationSum} holds, and we conclude that $y \in P(\linearization)$.

   We now show $y_{t_i} = \tfrac{1}{2}$ for $i \in [k-1]$.
   From $t_i \in \pred(u_i) \subseteq \pred(\ell_i) \subseteq \pred(s_i)$ it follows that $y_{t_i} \neq 1$.
   We also have $y_{t_i} \neq 0$, since otherwise $u_k \in \succ(t_i) \cap \succ(u_i)$ would contradict Property~\ref{PropertyUpperEmptyIntersection}.
   Hence, $y_{t_i} = \tfrac{1}{2}$.

   Let $Q' \define \proj[\singletonMonomials \cup
   \targetMonomials][P(\linearization)]$, and let $y'$ be the projection of $y$.
   Assume, for the sake of contradiction, that $Q'$ is integral.
   Consider the set~$F$ of points~$y \in Q'$ that satisfy
   \begin{itemize}
   \item
      $y_{\hat{s}} = 1$ for all $\hat{s} \in \singletonMonomials \setminus \setdef{ s_i }[ i \in [k] ]$,
   \item
      $y_{t_k} = 0$,
   \item
      $y_{s_i} = y_{t_i}$ and $y_{s_{i+1}} = y_{t_i}$ for $i \in [k-1]$.
   \end{itemize}
   Note that $F$ is a face of $Q'$, since for $i \in [k-1]$, there exist $t_i$-$s_i$-paths and $t_i$-$s_{i+1}$-paths in $D(\linearization)$, showing $s_i$, $s_{i+1} \subseteq \succ(t_i)$, which in turn shows that the inequalities $y_{t_i} \leq y_{s_i}$ and $y_{t_i} \leq y_{s_{i+1}}$ are valid for $Q'$.
   Thus, since $Q'$ is integral, $F$ is integral as well.

   Let $y^\star \in F \cap \Z^{\singletonMonomials \cup \targetMonomials}$ be an arbitrary integer point in $F$.
   By definition of $F$, there exists a value $\beta \in \setdef{0,1}$ such that $y^\star_{s_i} = \beta$ for all $i \in [k]$ and $y^\star_{t_i} = \beta$ for all $i \in [k-1]$.
   Suppose, $\beta = 1$ holds, i.e., $y^\star_{\hat{s}} = 1$ for all $\hat{s} \in \singletonMonomials$ holds.
   Then \eqref{eq:linearizationSum} and consistency
   of~$\linearization$ imply $y^\star_m = 1$ for all $m \in \allMonomials$, in particular for $m = t_k$, which is a contradiction to $y^\star \in F$.
   Hence, $\beta = 0$ must hold.
   Again by \eqref{eq:linearizationBounds}--\eqref{eq:linearizationSum} and consistency of~$\linearization$ the remaining entries of $y^\star$ are uniquely determined.
   This shows that $F$ contains at most one integer point, namely $y^{\star}$ with $\beta = 0$.
   Since also $y' \in F$ holds, and since $y' \notin \Z^{\singletonMonomials \cup \targetMonomials}$, we have a contradiction to the integrality of $F$.
   This in turn shows that $Q'$ is not integral.
\end{proof}

It now remains to combine the previous lemmas for our necessity proof.

\begin{proof}[Necessity proof for \cref{thm:simpleLinearizationProjectionIntegral}.]
   Let~$\linearization = (n, \allMonomials, \constraints)$ be a simple linearization and~$\targetMonomials \subseteq \properMonomials$ be a subset of the proper monomials.
   Assume that $D(\linearization)$ contains a subgraph~$Z$ satisfying Properties~\ref{PropertyCycle} and~\ref{PropertyPathAbove}.
   We have to show that $\proj[\singletonMonomials \cup \targetMonomials][P(\linearization)]$ is not integral.

   Using the assumptions, \cref{thm:SimpleLinearizationCyclicExtraProperties} implies that we can assume $\card{\upperNodes(Z)} = 1$ or that $Z$ satisfies Properties~\ref{PropertyUpperLowerPath}--\ref{PropertyUpperEmptyIntersection}.
   The fact that $\proj[\singletonMonomials \cup \targetMonomials][P(\linearization)]$ is not integral is implied by \cref{thm:simpleLinearizationCyclicCardinalityOne} in the former case and by \cref{thm:SimpleLinearizationCyclicCardinalityHigher} in the latter case.
   This concludes the proof.
\end{proof}

\section{Existence of Integral Simple Linearizations}
\label{sec:existence}

In the previous section, we were able to characterize whether a given simple linearization induces an integral relaxation of the multilinear polytope.
Extending this investigation, a natural question is whether one can characterize whether a given polynomial admits an integral simple linearization or whether every simple linearization leads to a fractional polytope.
In this section, we provide the following complete answer to this question.

\begin{theorem}
   \label{thm:RIP}
   Let $\targetMonomials$ be a set of monomials with singletons $\singletonMonomials$.
   Then there exists a simple linearization $\linearization = (n, \allMonomials, \constraints)$ with $\singletonMonomials \cup \targetMonomials \subseteq \allMonomials$ such that $\proj[\singletonMonomials \cup \targetMonomials][P(\linearization)]$ is integral if and only if none of the following properties holds:
   \begin{enumerate}[label=(\Alph{*})]
   \item
      \label{MIP1}
      There exist~$m_1$, $m_2$, $m_3 \in \targetMonomials$ with~$m_1 \cap m_2 \cap m_3 \neq \varnothing$ such that $m_3 \cap (m_1 \cup m_2)$ is a proper superset of both~$m_3 \cap m_1$ and~$m_3 \cap m_2$.
   \item
      \label{MIP2}
      There exist pairwise different~$m_1, \dots, m_k \in \targetMonomials$, $k \geq 3$, such that for all~$i, j \in [k]$, we have~$m_i \cap m_j \neq \varnothing$ if and only if~$i$ and~$j$ differ by at most~$1$ modulo~$k$.
   \end{enumerate}
\end{theorem}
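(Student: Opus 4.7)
The plan is to invoke Theorem~\ref{thm:simpleLinearizationProjectionIntegral} in both directions: integrality of $\proj[\singletonMonomials\cup\targetMonomials][P(\linearization)]$ is equivalent to the absence, in $D(\linearization)$, of a subgraph~$Z$ with $G(Z)$ a cycle and $V(Z)\subseteq\succ(\targetMonomials)$. Throughout, let $G_{I}$ denote the \emph{intersection graph} of $\targetMonomials$, whose vertices are the target monomials and whose edges join pairs of target monomials with non-empty intersection.

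For the sufficiency direction, suppose that neither~\ref{MIP1} nor~\ref{MIP2} holds. I would first argue that $G_{I}$ is a forest: a shortest cycle of $G_{I}$ is necessarily induced, and hence provides an instance of~\ref{MIP2}. In particular $G_{I}$ contains no triangle, so no three target monomials share a common element. Root each tree-component of $G_{I}$ arbitrarily and write $p(m)$ for the parent of a non-root target monomial~$m$. I would then construct $\linearization$ as follows: for every tree edge $(m, p(m))$ with $\card{m\cap p(m)}\geq 2$ add the auxiliary monomial $S_{m}\define m\cap p(m)$ to $\allMonomials$ with the \AND-constraint decomposing $S_{m}$ into its singletons, and for every target monomial~$m$ take the \AND-constraint whose children are the sets $m\cap m'$ (taken as~$S_{m}$ or~$S_{m'}$ when of size at least~$2$, as the corresponding singleton otherwise) for every tree-neighbour~$m'$ of~$m$, together with the remaining singletons of~$m$. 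The absence of triangles in $G_{I}$ is precisely what makes these children a partition of~$m$, so the linearization is well-defined and simple. A short case analysis then shows that $D(\linearization)$ is itself a forest: two distinct nodes sharing at least two out-neighbours would give an element contained in four target monomials, forcing a triangle in $G_{I}$. Theorem~\ref{thm:simpleLinearizationProjectionIntegral} then yields the desired integrality.

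For the necessity direction, fix any simple linearization $\linearization$ with $\singletonMonomials\cup\targetMonomials\subseteq\allMonomials$; the goal is to exhibit a subgraph~$Z$ of $D(\linearization)$ satisfying~\ref{PropertyCycle} and~\ref{PropertyPathAbove}. For~\ref{MIP2}, I would pick elements $i_{j}\in m_{j}\cap m_{j+1}$ (indices modulo~$k$), which are pairwise distinct because the ``only if'' part of~\ref{MIP2} forbids $i_{j}$ from belonging to $m_{\ell}$ for $\ell\notin\{j, j+1\}$. Proposition~\ref{thm:consistencyPaths} supplies, for each~$j$, a directed path from~$m_{j}$ to $\{i_{j-1}\}$ and one from~$m_{j}$ to $\{i_{j}\}$ in $D(\linearization)$; concatenating these $2k$~paths produces a closed walk in $G(D(\linearization))$ traversing the sequence $m_{1}, \{i_{1}\}, m_{2}, \{i_{2}\}, \dots, m_{k}, \{i_{k}\}, m_{1}$. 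Since every node of the walk is a successor of a target monomial, the walk lies entirely in $\succ(\targetMonomials)$, and shortening it at the first repeated node gives the required~$Z$. For~\ref{MIP1}, pick $i_{0}\in m_{1}\cap m_{2}\cap m_{3}$, $i_{1}\in (m_{3}\cap m_{1})\setminus m_{2}$, and $i_{2}\in (m_{3}\cap m_{2})\setminus m_{1}$, obtain the six paths from Proposition~\ref{thm:consistencyPaths} linking each~$m_{j}$ to the two $\{i_{l}\}$ with $i_{l}\in m_{j}$, and apply the same shortcutting to the closed walk traversing $\{i_{0}\}, m_{1}, \{i_{1}\}, m_{3}, \{i_{2}\}, m_{2}, \{i_{0}\}$.

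The main obstacle will be the shortcutting step in the necessity direction: because the chosen $m$-to-singleton paths may overlap internally, some care is required to extract a \emph{simple} undirected cycle rather than just a closed walk. This is handled by the elementary fact that every non-trivial closed walk in an undirected graph contains a simple cycle, together with the observation that every node on every chosen path is a successor of a target monomial and therefore automatically lies in $\succ(\targetMonomials)$, so the extracted cycle satisfies~\ref{PropertyPathAbove}.
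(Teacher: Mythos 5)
There are two genuine gaps. The most serious one is in your necessity direction: exhibiting a closed walk in $G(D(\linearization))$ proves nothing, because the ``elementary fact'' you invoke is false --- the walk $a,b,a$ is a non-trivial closed walk containing no cycle, and, more to the point, a \emph{tree} supports closed walks of arbitrary even length visiting arbitrarily many distinct vertices. Your walk is a concatenation of monomial-to-singleton paths supplied by \cref{thm:consistencyPaths}, and if $G(D(\linearization))$ happened to be a tree, exactly the same concatenation would exist (each leg being the unique tree path), so nothing in your argument forces a cycle. The actual content of the necessity direction is what the paper isolates in \cref{thm:intersectionLinearization}: either some intersection $m_i\cap m_j$ fails to be a common successor of $m_i$ and $m_j$, in which case a cycle is extracted by a careful ``last common node / first common node'' argument establishing four pairwise distinct branch points, or all relevant intersections are present as nodes, and then the two paths descending to the node $\ell_i=m_i\cap m_{i+1}$ are internally disjoint because any common node $v$ satisfies $\ell_i\subseteq v\subseteq m_i\cap m_{i+1}=\ell_i$. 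Your singletons $\{i_j\}$ do not enjoy this property when $\card{m_j\cap m_{j+1}}\geq 2$, so without some such set-theoretic argument the walk may collapse and ``shortening at the first repeated node'' can leave nothing.

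The second gap is your treatment of triangles in $G_{I}$, i.e., the case $k=3$ of~\ref{MIP2}. For $k=3$ every pair of indices differs by at most~$1$ modulo~$3$, so the ``only if'' clause is vacuous and does \emph{not} make your representatives $i_1,i_2,i_3$ pairwise distinct. Your two directions then contradict each other on $\targetMonomials=\{\{1,2\},\{1,3\},\{1,4\}\}$: its intersection graph is a triangle, so your sufficiency argument declares~\ref{MIP2} satisfied and builds nothing, while your necessity argument would have to exhibit a cycle in the standard linearization of this instance --- which is a tree, hence integral. (This instance also exposes an imprecision in the paper's own phrasing of~\ref{MIP2} for $k=3$; note that the paper's construction $\poset$ via the Hasse diagram of \emph{all} intersections handles such triangles by introducing the node $\{1\}=m_1\cap m_2\cap m_3$, whereas your spanning-forest construction is predicated on $G_{I}$ containing no triangle at all and therefore cannot.) Two smaller issues in the sufficiency direction: ruling out that two nodes share two out-neighbours does not exclude longer cycles in $G(D(\linearization))$, and nested target monomials $m\subseteq m'$ make $S_m=m$ the resultant of two \AND-constraints, breaking simplicity.
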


Note that the monomials~$m_1,m_2,m_3$ in \ref{MIP1} need to be pairwise different,
since otherwise, $m_3 \cup (m_1 \cup m_2)$ cannot be a proper superset of
$m_3 \cap m_i$ for both $i \in [2]$.

In the following, we say that a set of monomials~$\targetMonomials$ has the \emph{monomial intersection property} if it neither satisfies~\ref{MIP1} nor~\ref{MIP2}.
It is intuitive that Property~\ref{MIP2} induces a cycle in $G(D(\linearization))$ for any suitable linearization $\linearization$.
However, seeing that Property~\ref{MIP1} has the same consequence is not as easy, and we start with an example. We use the following notation:
Given two distinct nodes~$u$ and~$v$ of a digraph, we denote by~$P(u,v)$ a path connecting~$u$ and~$v$ (if it exists).
Moreover, if~$v = \setdef{i}$ for some~$i \in [n]$, we let~$P(u,i) \coloneqq P(u,\setdef{i})$ whenever the meaning of~$i$ is clear from the context.

\begin{example}
   \label{ex:tripleMonomial}
   Consider~$\targetMonomials = \setdef{m_1, m_2, m_3}$ with $m_1 = \setdef{1,2,3,4}$, $m_2 = \setdef{4,5,6}$, and $m_3 = \setdef{3,4,5}$ with singletons $\singletonMonomials = \setdef{1,2,3,4,5,6}$.
   The relevant intersections are given by $m_3 \cap m_1 = \setdef{3,4}$, $m_3 \cap m_2 = \setdef{4,5}$, $m_3 \cap (m_1 \cup m_2) = m_3$ and $m_1 \cap m_2 \cap m_3 = \setdef{4}$.
   Since $m_3 \cap (m_1 \cup m_2)$ is a proper superset of both~$m_1 \cap m_3$ and~$m_2 \cap m_3$, $\targetMonomials$ has Property~\ref{MIP1}.
   
   First observe that this property implies that neither~$m_1 \cap m_3 \subseteq m_2 \cap m_3$ nor that the reverse inclusion holds.
   By \cref{thm:consistencyPaths}, there exist paths~$P(m_2,4)$ and~$P(m_3,4)$ connecting~$m_2$ and~$m_3$ with~$\{4\}$ as well as paths~$P(m_2,5)$ and~$P(m_3,5)$ connecting~$m_2$ and~$m_3$ with~$\{5\}$ in~$D(\linearization)$.
   
   Suppose~$m_2 \cap m_3 = \{4,5\}$ is not a successor of~$m_2$.
   Then, at least one of~$P(m_2,4)$ or~$P(m_2,5)$ consists of the single arc~$(m_2, \{4\})$ or~$(m_2,\{5\})$, respectively. Hence, the concatenation of all four paths contains a cycle in~$G(D(\linearization))$, because neither~$m_2$ can be a successor of~$m_3$ nor vice versa, see \cref{fig:tripleMonomialMissing}.
   Thus, such a linearization cannot be integral by \cref{thm:simpleLinearizationProjectionIntegral}.
   Similarly, we can argue if~$m_2 \cap m_3$ is not a successor of~$m_3$ or~$m_1 \cap m_2$ is neither a successor of~$m_1$ nor~$m_2$.

   For this reason, any integral linearization contains~$m_1 \cap m_3$ and~$m_2 \cap m_3$ as monomials and both have to be successors of~$m_1$ and~$m_3$ as well as~$m_2$ and~$m_3$, respectively.
   However, since~$m_3 \cap m_1 \nsubseteq m_3 \cap m_2$ and~$m_3 \cap m_1 \nsupseteq m_3 \cap m_2$, consistency of~$\linearization$ implies that there exist two distinct
  paths connecting~$m_3$ and~$m_1 \cap m_2 \cap m_3$, one
  using~$m_3 \cap m_1$ as intermediate node and the other
  using~$m_3 \cap m_2$. Thus, $G(D(\linearization))$ contains a cycle, see
  \cref{fig:tripleMonomialsPresent},
  showing that no linearization of~$\targetMonomials$ can be integral.
\end{example}

\begin{figure}[tb]
   \begin{subfigure}[b]{0.36\textwidth}
      \centering
      \begin{tikzpicture}
         \node (345) at (0,0) [nodeMonomial,label=above:\footnotesize{ $\{3,4,5\}$}] {};
         \node (456) at (2,0) [nodeMonomial,label=above:\footnotesize{ $\{4,5,6\}$}] {};
         \node (4) at (0.5,-2) [nodeSingleton,label=below:\footnotesize{ $\{4\}$}] {};
         \node (5) at (1.5,-2) [nodeSingleton,label=below:\footnotesize{ $\{5\}$}] {};

         \draw[path,dashed,ultra thick] (345) -- (4);
         \draw[path,dashed,ultra thick] (345) -- (5);
         \draw[path,dashed,ultra thick] (456) -- (4);
         \draw[path,dashed,ultra thick] (456) -- (5);
      \end{tikzpicture}
      \caption{%
         Monomial $\setdef{4,5}$ is not present.
      }
      \label{fig:tripleMonomialMissing}
   \end{subfigure}
   \begin{subfigure}[b]{0.55\textwidth}
      \centering
      \begin{tikzpicture}
         \node (m1) at (5.5,0) [nodeMonomial,label=above:\footnotesize{ $m_1 = \{1,2,3,4\}$}] {};
         \node (m3) at (8,0) [nodeMonomial,label=above:\footnotesize{ $m_3 = \{3,4,5\}$}] {};
         \node (m2) at (10.5,0) [nodeMonomial,label=above:\footnotesize{ $m_2 = \{4,5,6\}$}] {};

         \node (m13) at (7,-1) [nodeMonomial,label=right:\footnotesize{ $\{3,4\}$}] {};
         \node (m23) at (9,-1) [nodeMonomial,label=right:\footnotesize{ $\{4,5\}$}] {};
         \node (1) at (5.5,-2) [nodeSingleton,label=below:\footnotesize{ $\{1\}$}] {};
         \node (2) at (6.5,-2) [nodeSingleton,label=below:\footnotesize{ $\{2\}$}] {};
         \node (3) at (7.5,-2) [nodeSingleton,label=below:\footnotesize{ $\{3\}$}] {};
         \node (4) at (8.5,-2) [nodeSingleton,label=below:\footnotesize{ $\{4\}$}] {};
         \node (5) at (9.5,-2) [nodeSingleton,label=below:\footnotesize{ $\{5\}$}] {};
         \node (6) at (10.5,-2) [nodeSingleton,label=below:\footnotesize{ $\{6\}$}] {};

         \draw[arc] (m1) -- (1);
         \draw[arc] (m1) -- (2);
         \draw[arc] (m1) -- (m13);
         \draw[arc] (m2) -- (m23);
         \draw[arc] (m2) -- (6);
         \draw[arc,dashed,ultra thick] (m3) -- (m13);
         \draw[arc,dashed,ultra thick] (m3) -- (m23);
         \draw[arc] (m13) -- (3);
         \draw[arc,dashed,ultra thick] (m13) -- (4);
         \draw[arc,dashed,ultra thick] (m23) -- (4);
         \draw[arc] (m23) -- (5);
      \end{tikzpicture} 
      \caption{%
         Monomials $\setdef{3,4}$ and $\setdef{4,5}$ are present.
      }
      \label{fig:tripleMonomialsPresent}
   \end{subfigure}
   \caption{%
      Illustration of different situations in \cref{ex:tripleMonomial}.
      Induced cycles of $G(D(\linearization))$ are shown dashed.
   }
   \label{fig:tripleMonomial}
\end{figure}

The outline of the proof of \cref{thm:RIP} is as follows.
First, we show for an integeral simple linearization and distinct monomials~$m_i$ and~$m_j$ that~$m_i \cap m_j$ is a common successor of~$m_i$ and~$m_j$ if~$m_i \cap m_j \neq \varnothing$ as illustrated in \cref{ex:tripleMonomial}.
This allows to prove that every linearization graph contains a cycle if~$\targetMonomials$ does not have the monomial intersection property.
For \ref{MIP1} this proof was already sketched in \cref{ex:tripleMonomial};
for \ref{MIP2} we show that there exists a cycle by concatenating paths connecting~$m_i$ and~$m_{i+1}$
with their intersection~$m_{i} \cap m_{i+1}$.
Afterwards, we construct a particular linearization that is acyclic if~$\targetMonomials$ has the monomial intersection property.

\subsection{Properties of \texorpdfstring{$G(D(\linearization))$}{G(D(L))}}

Throughout this section, we assume that~$\targetMonomials$ is a set of monomials with singletons $\singletonMonomials$ and that~$\linearization = (n, \allMonomials, \constraints)$ is a simple linearization with $\singletonMonomials \cup \targetMonomials \subseteq \allMonomials$.
Moreover, we assume that every node of $D(\linearization)$ with in-degree~$0$ belongs to $\targetMonomials$.
This assumption is without loss of generality since \cref{thm:redundantMonomials} guarantees that the removal of nodes $m \in \properMonomials \setminus \targetMonomials$ with in-degree~$0$ does not affect integrality.
Hence, only Property~\ref{PropertyCycle} of \cref{thm:simpleLinearizationProjectionIntegral} is relevant for integrality, since Property~\ref{PropertyPathAbove} is satisfied automatically.

All successor and predecessor relations mentioned in the following results are w.r.t.\ the linearization graph~$D(\linearization)$.

We now prove a key lemma for understanding the construction of linearization whose digraph is acyclic. It shows that the intersection of two target monomials $m_1$ and $m_2$ has to be part of an acyclic linearization.

\begin{figure}[tb]
  \begin{tikzpicture}[scale=0.75]
    \node[nodeMonomial, label={above:$m_1$}] at (1,4) (m1) {};
    \node[nodeMonomial, label={above:$m_2$}] at (5,4.5) (m2) {};
    \node[nodeMonomial, label={above:$\ell$}] at (2.75,2.25) (l) {};
    \node[nodeSingleton, label={below:$s$}] at (2,0.5) (s) {};
    
    \draw[path] (m1)--(l);
    \draw[path] (m2)--(l);
    \draw[path] (m1)--(s);
    \draw[path] (m2)--(s);
  \end{tikzpicture}
  \qquad
  \begin{tikzpicture}[scale=0.75]
    \node[nodeMonomial, label={above:$m_1$}] at (1,4) (m1) {};
    \node[nodeMonomial, label={above:$m_2$}] at (5,4.5) (m2) {};
    \node[nodeMonomial, label={right:$u_1$}] at (1.5,3) (u1) {};
    \node[nodeMonomial, label={left:$u_2$}] at (4,3) (u2) {};
    \node[nodeMonomial, label={above:$\ell$}] at (2.75,2.25) (l) {};
    \node[nodeMonomial, label={below:$\hat{\ell}$}] at (2.5,1.5) (lh) {};
    \node[nodeSingleton, label={below:$s$}] at (2,0.5) (s) {};

    \draw[path] (m1)--(u1);
    \draw[path] (m2)--(u2);
    \draw[path] (u1)--(l);
    \draw[path] (u2)--(l);
    \draw[path] (u1)--(lh);
    \draw[path] (u2)--(lh);
    \draw[path] (lh)--(s);
  \end{tikzpicture}
  \caption{Construction from the proof of \cref{thm:intersectionLinearization}.}
  \label{fig:subsetLinearization}
\end{figure}

\begin{lemma}\label{thm:intersectionLinearization}
  Let~$m_1$, $m_2 \in \allMonomials$ with~$m_1 \cap m_2 \neq \varnothing$ and $m_1 \cap m_2 \notin \succ(m_1) \cap \succ(m_2)$.
  Then~$G(D(\linearization))$ contains a cycle.
\end{lemma}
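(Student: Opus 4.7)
The plan is to argue by contradiction: assume $G(D(\linearization))$ is acyclic (i.e., a forest) and derive $m_1 \cap m_2 \in \succ(m_1) \cap \succ(m_2)$. First, I would observe $\card{m_1 \cap m_2} \geq 2$: if $m_1 \cap m_2 = \{s\}$ were a single singleton, then $\{s\} \in \singletonMonomials$ would lie in $\succ(m_1) \cap \succ(m_2)$ by \cref{thm:consistencyPaths}, already contradicting the hypothesis.

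For each $s \in m_1 \cap m_2$, \cref{thm:consistencyPaths} provides directed paths $P_{m_1,s}$ from $m_1$ to $\{s\}$ and $P_{m_2,s}$ from $m_2$ to $\{s\}$. In the tree $G(D(\linearization))$, these are the unique undirected paths between their endpoints, and since both contain $\{s\}$ their intersection is a connected subgraph of a path, hence a subpath; let $v_s$ be its far end from $\{s\}$. The complementary subpaths, from $m_1$ to $v_s$ and from $m_2$ to $v_s$, share only the node $v_s$, so their concatenation is the unique tree path $\pi$ from $m_1$ to $m_2$; in particular $v_s$ lies on $\pi$. Inheriting directions from $P_{m_i,s}$, the arcs of $\pi$ between $m_i$ and $v_s$ are oriented from $m_i$ towards $v_s$ ($i \in [2]$), while the arcs from $v_s$ to $\{s\}$ point towards $\{s\}$.

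The crux—and the step I expect to be the main obstacle—is to prove that $v_s$ is independent of~$s$. Suppose there were distinct $s, s' \in m_1 \cap m_2$ with $v_s \neq v_{s'}$; without loss of generality, $v_s$ lies strictly between $m_1$ and $v_{s'}$ along $\pi$. Then $P_{m_1,s'}$ follows $\pi$ from $m_1$ through $v_s$ to $v_{s'}$ before leaving $\pi$, so the arcs of $\pi$ between $v_s$ and $v_{s'}$ point from $v_s$ towards $v_{s'}$. Symmetrically, $P_{m_2,s}$ follows $\pi$ from $m_2$ through $v_{s'}$ to $v_s$, forcing the very same arcs to point from $v_{s'}$ towards $v_s$—a contradiction. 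Hence $v \coloneqq v_s$ is the same node for every $s \in m_1 \cap m_2$.

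To finish, $v \in \succ(m_1) \cap \succ(m_2)$ and $\{s\} \in \succ(v)$ for every $s \in m_1 \cap m_2$. Since every arc of $D(\linearization)$ corresponds to an \AND-constraint whose children are subsets of the parent monomial, iterating along any directed path yields $v \subseteq m_1$ and $v \subseteq m_2$ as index sets, hence $v \subseteq m_1 \cap m_2$; analogously, reachability of $\{s\}$ from $v$ implies $s \in v$, giving $m_1 \cap m_2 \subseteq v$. Thus the monomial $v \in \allMonomials$ coincides with the set $m_1 \cap m_2$ and lies in $\succ(m_1) \cap \succ(m_2)$, contradicting the hypothesis. Therefore $G(D(\linearization))$ must contain a cycle.
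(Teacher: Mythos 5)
Your proof is correct, but it takes a genuinely different route from the paper's. The paper argues directly: it picks a common successor $\ell \in \succ(m_1) \cap \succ(m_2)$ of maximum cardinality, notes $\ell \subsetneq m_1 \cap m_2$, chooses a singleton $s \in (m_1 \cap m_2) \setminus \ell$, and then extracts a cycle from the four paths $P(m_1,\ell)$, $P(m_2,\ell)$, $P(m_1,s)$, $P(m_2,s)$ by locating two branch nodes $u_1, u_2$ and a merge node $\hat{\ell}$; the maximality of $\ell$ is what forces $u_1 \neq u_2$. You instead argue by contraposition: assuming $G(D(\linearization))$ is a forest, you use uniqueness of tree paths to define, for each $s \in m_1 \cap m_2$, the meeting point $v_s$ of $P(m_1,s)$ and $P(m_2,s)$ on the unique $m_1$-$m_2$ tree path $\pi$, and then show via an orientation clash (two directed paths would have to traverse the same arcs of $\pi$ in opposite directions, impossible in the DAG $D(\linearization)$) that $v_s$ is independent of $s$; the monotone shrinking of monomials along arcs then forces $v = m_1 \cap m_2$. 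Your approach buys a cleaner endgame --- you never have to verify that an assembled subgraph is a simple cycle, a point the paper's proof treats rather tersely --- and it yields the slightly stronger structural statement that in an acyclic linearization the node $m_1 \cap m_2$ is literally present as the common meeting point. What it gives up is the explicit cycle construction, which the paper reuses verbatim in the proof of \cref{thm:nonlaminarFractional} (``using the same arguments as in the proof of \cref{thm:intersectionLinearization}''). Two cosmetic points: you should note explicitly that $m_1 \neq m_2$ (otherwise $m_1 \cap m_2 = m_1 \in \succ(m_1) \cap \succ(m_2)$ and the hypothesis fails, so $\pi$ is well defined and nontrivial), and ``strictly between $m_1$ and $v_{s'}$'' should just read ``on the $m_1$-side of $v_{s'}$'', since $v_s = m_1$ is possible and harmless.
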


\begin{proof}
  Let~$m_1$, $m_2 \in \targetMonomials$ satisfy the requirements of the lemma.
  We first observe that if~$\card{m_1 \cap m_2} = 1$, \cref{thm:consistencyPaths} shows that there exists a path from $m_1$ and $m_2$ to the singleton in $m_1 \cap m_2$,
  contradicting the assumption that $m_1 \cap m_2 \notin \succ(m_1) \cap \succ(m_2)$.
  Thus, $\card{m_1 \cap m_2} \geq 2$.

  Consider $\ell \in \allMonomials$ of largest cardinality with $\ell \in \succ(m_1) \cap \succ(m_2)$.
  Then $\ell \subset m_1 \cap m_2$, because $\ell$ cannot contain singletons that are neither contained in $m_1$ nor in $m_2$; moreover, $\ell \neq m_1 \cap m_2$ by assumption. 
  Thus, there exists a singleton $s \in (m_1 \cap m_2) \setminus \ell$ (using $\card{m_1 \cap m_2} \geq 2$).
  Then there exist distinct paths $P(m_1,\ell)$, $P(m_2,\ell)$, $P(m_1,s)$, $P(m_2,s)$; see \cref{fig:subsetLinearization} for an illustration.

  Let $u_1$ and $u_2$ be the last common node of $P(m_1,\ell)$ and $P(m_1,s)$ as well as $P(m_2,\ell)$ and $P(m_2,s)$.
  Since $s \in u_1$, $s \in u_2$, neither $u_1$ nor $u_2$ is equal to $\ell$.
  Moreover, $\ell \subset u_1$, $\ell \subset u_2$ and thus $u_1$ and $u_2$ are distinct---otherwise $\ell$ would not be maximal.
  Furthermore, let $\hat{\ell}$ be the first common node of the paths $P(u_1,s)$ and $P(u_2,s)$.
  Then $\hat{\ell}$ is not equal to the other three nodes.
  In total there is a cycle with upper nodes~$u_1$ and~$u_2$ as well as lower nodes~$\ell_1$ and~$\ell_2$.
\end{proof}

Note that $m_1 \subset m_2$ is allowed in \cref{thm:intersectionLinearization} ($m_1 \notin \succ(m_2)$ in this case).

In summary, this shows that every integral linearization of~$\targetMonomials$ uses the monomials~$m_1 \cap m_2$ if $m_1$, $m_2 \in \allMonomials$ and~$m_1 \cap m_2 \neq \varnothing$ hold due to \cref{thm:intersectionLinearization} and \cref{thm:simpleLinearizationProjectionIntegral}.
This allows us to prove one direction of \cref{thm:RIP}:
the next lemma shows that every integral linearization cannot contain monomials that satisfy~\ref{MIP1}.
Then \cref{thm:nonlaminarFractional2} will show that they cannot satisfy~\ref{MIP2}.

\begin{lemma}
   \label{thm:nonlaminarFractional}
   Let~$m_1$, $m_2$, $m_3 \in \targetMonomials$ be pairwise different monomials satisfying Property~\ref{MIP1}, i.e., their intersection is nonempty and $m_{3} \cap (m_{1} \cup m_{2})$ is a proper superset of~$m_3 \cap m_{1}$ and~$m_3 \cap m_{2}$.
   Then $G(D(\linearization))$ contains a cycle.
\end{lemma}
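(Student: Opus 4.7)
The plan is to argue by contradiction: assume that $G(D(\linearization))$ is acyclic and derive two distinct directed paths between the same two endpoints in $D(\linearization)$, which automatically produces an undirected cycle.

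First, Property~\ref{MIP1} furnishes indices $a \in (m_1 \cap m_3) \setminus m_2$, $b \in (m_2 \cap m_3) \setminus m_1$, and $c \in m_1 \cap m_2 \cap m_3$; hence $m_1 \cap m_3$ and $m_2 \cap m_3$ are both nonempty and are incomparable as sets. Applying the contrapositive of \cref{thm:intersectionLinearization} to the pairs $(m_1, m_3)$ and $(m_2, m_3)$, I obtain that $m_1 \cap m_3$ and $m_2 \cap m_3$ both lie in $\allMonomials$, with $m_1 \cap m_3 \in \succ(m_3)$ and $m_2 \cap m_3 \in \succ(m_3)$. Since any successor of a monomial is a subset of it, the set-incomparability forces $m_1 \cap m_3 \neq m_2 \cap m_3$ and forbids either from being a successor of the other in $D(\linearization)$.

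Now apply \cref{thm:intersectionLinearization} a third time to the pair $(m_1 \cap m_3, m_2 \cap m_3)$, whose intersection equals $m_1 \cap m_2 \cap m_3 \ni c$ and is therefore nonempty. This yields $m_1 \cap m_2 \cap m_3 \in \allMonomials$ together with directed paths from $m_1 \cap m_3$ and from $m_2 \cap m_3$ to $m_1 \cap m_2 \cap m_3$. Concatenating with the $m_3$-$m_1 \cap m_3$ and $m_3$-$m_2 \cap m_3$ paths gives two directed paths $P_1, P_2$ from $m_3$ to $m_1 \cap m_2 \cap m_3$. The non-successor relationship established above ensures $m_1 \cap m_3 \notin V(P_2)$ and $m_2 \cap m_3 \notin V(P_1)$, so $P_1 \neq P_2$. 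In a DAG, two distinct $u$-$v$-paths always contain an undirected cycle (take the first point where they diverge after $u$ and the first point where they reconverge), contradicting the assumed acyclicity of $G(D(\linearization))$.

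The main obstacle is ensuring that the witnessing monomials $m_3$, $m_1 \cap m_3$, $m_2 \cap m_3$, and $m_1 \cap m_2 \cap m_3$ are genuinely distinct and genuinely separated along the two paths, rather than collapsing in some degenerate way (e.g.\ $m_3 \subseteq m_1$ or $m_1 \cap m_3 \subseteq m_2 \cap m_3$). Every such check reduces to the strict-superset hypotheses of Property~\ref{MIP1} via the witnesses $a$, $b$, $c$, so the argument is essentially bookkeeping once the three applications of \cref{thm:intersectionLinearization} are in place.
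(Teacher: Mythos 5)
Your proof is correct and follows essentially the same route as the paper's: both invoke \cref{thm:intersectionLinearization} to force $m_3 \cap m_1$, $m_3 \cap m_2$ and $m_1 \cap m_2 \cap m_3$ to be present as common successors, then build two distinct directed paths from $m_3$ to $m_1 \cap m_2 \cap m_3$ through the two incomparable pairwise intersections and extract an undirected cycle. The only cosmetic difference is that you package the argument as an explicit contradiction with acyclicity, whereas the paper states the same steps directly.
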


\begin{proof}
   Since~$m_3 \cap (m_{1} \cup m_{2})$ is a proper superset of~$m_3 \cap m_{1}$ and~$m_3 \cap m_{2}$, neither~$m_{1}$ is a subset of~$m_{2}$ and~$m_3$ nor~$m_{2}$ is a subset of~$m_{1}$ and~$m_3$.
   Consequently,
   \[
      m_3
      \supset
      m_3 \cap m_{1}
      \supset m_{3} \cap m_{1} \cap m_{2}
      \qquad\text{and}\qquad
      m_3
      \supset
      m_3 \cap m_{2}
      \supset
      m_{3} \cap m_{1} \cap m_{2}.
   \]
   If a linearization~$\linearization$ does not make use of one of the intersections~$m_{3} \cap m_{1}$, $m_{3} \cap m_{2}$, or $m_3 \cap m_{1} \cap m_{2}$, \cref{thm:intersectionLinearization} shows that~$G(D(\linearization))$ contains a cycle.
   For this reason, assume all these intersections appear as monomials
   in~$\linearization$.
   For~$i \in [2]$, let~$P_i$ be a path connecting~$m_3$ and~$m_1 \cap m_2
   \cap m_3$ that uses~$m_3 \cap m_i$ as intermediate node.
   Then~$m_3 \cap m_1$ is not a successor of~$m_3 \cap m_2$, because otherwise, $m_3 \cap (m_1 \cup m_2) = m_3 \cap m_2$, contradicting the assumption.
   Analogously we find that~$m_3 \cap m_2$ is not a successor of~$m_3 \cap m_1$.
   Thus, $P_1$ and~$P_2$ are different.
   Using the same arguments as in the proof of
   \cref{thm:intersectionLinearization}, we find that~$P_1 \cup P_2$ contains a
   cycle whose upper node is the first common predecessor of~$m_3 \cap
   m_{1}$ and~$m_{3} \cap m_{2}$ and whose lower node is the first common
   successor of these two intersections.
\end{proof}

\begin{figure}[tb]
  \begin{tikzpicture}
    \node[nodeMonomial, label={above:$m_i$}] at (1,3) (mi) {};
    \node[nodeMonomial, label={left:$u_i$}] at (1,2) (ui) {};
    \node[nodeMonomial, label={above:$m_{i+1}$}] at (3,3) (mip) {};
    \node[nodeMonomial, label={right:$u_{i+1}$}] at (3,2) (uip) {};
    \node[nodeMonomial, label={below:$\ell_{i-1}$}] at (0,1) (lim) {};
    \node[nodeMonomial, label={below:$\ell_i$}] at (2,1) (li) {};
    \node[nodeMonomial, label={below:$\ell_{i+1}$}] at (4,1) (lip) {};

    \draw[path] (-1,2)--(lim);
    \draw[path] (mi)--(ui);
    \draw[path] (ui)--(lim);
    \draw[path] (ui)--(li);
    \draw[path] (mip)--(uip);
    \draw[path] (uip)--(li);
    \draw[path] (uip)--(lip);
    \draw[path] (5,2)--(lip);
  \end{tikzpicture}
  \caption{Construction from the proof of \cref{thm:nonlaminarFractional2}.}
  \label{fig:nonlaminarFractional2}
\end{figure}

\begin{lemma}
   \label{thm:nonlaminarFractional2}
   Let~$m_1, \dots, m_k \in \targetMonomials$ with $k \geq 3$ be pairwise different monomials satisfying Property~\ref{MIP2}, i.e., for all~$i$, $j \in [k]$, we have~$m_i \cap m_j \neq \varnothing$ if and only if~$i$ and~$j$ differ by at most~$1$ modulo~$k$.
   Then $G(D(\linearization))$ contains a cycle.
\end{lemma}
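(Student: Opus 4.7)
The plan is to assume for contradiction that $G(D(\linearization))$ is acyclic and construct a cycle, generalizing the argument of \cref{ex:tripleMonomial} and \cref{thm:nonlaminarFractional} from a triangle to a cyclic chain of length $k$.

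First, I would set $\ell_i \define m_i \cap m_{i+1}$ for each $i \in [k]$ (indices cyclically modulo $k$, with $m_{k+1} \define m_1$), which is nonempty by Property~\ref{MIP2}. Applying \cref{thm:intersectionLinearization} to each pair $(m_i, m_{i+1})$ together with the acyclicity assumption forces $\ell_i \in \allMonomials$ and $\ell_i \in \succ(m_i) \cap \succ(m_{i+1})$, so there exist paths $P_i$ from $m_i$ to $\ell_i$ and $Q_i$ from $m_{i+1}$ to $\ell_i$ in $D(\linearization)$. I would then observe that any common node $v$ of $P_i$ and $Q_i$ is simultaneously a common descendant of $m_i, m_{i+1}$ (so $v \subseteq \ell_i$) and an ancestor of $\ell_i$ (so $v \supseteq \ell_i$), forcing $v = \ell_i$; hence $P_i$ and $Q_i$ meet only at the endpoint $\ell_i$.

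Next, for each $i$ I would compare the paths $Q_{i-1}$ (from $m_i$ to $\ell_{i-1}$) and $P_i$ (from $m_i$ to $\ell_i$), both of which start at $m_i$. Property~\ref{MIP2} says that non-cyclically-consecutive target monomials are disjoint, and for $k \geq 4$ this in turn forces the $\ell_j$'s to be pairwise disjoint, so in particular $\ell_{i-1} \neq \ell_i$ and the two paths eventually diverge at a last common node $u_i$. By construction $u_i$ is a descendant of $m_i$ and a common ancestor of both $\ell_{i-1}$ and $\ell_i$, and the portions of $P_i$ and $Q_{i-1}$ from $u_i$ onward are internally disjoint, as illustrated in \cref{fig:nonlaminarFractional2}.

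Finally, I plan to concatenate these $2k$ sub-paths cyclically into a closed walk visiting the vertices $u_1, \ell_1, u_2, \ell_2, \dotsc, u_k, \ell_k$ in $G(D(\linearization))$. The hard part will be verifying that this closed walk is actually a simple cycle: the $u_i$'s and $\ell_j$'s must be pairwise distinct, and the sub-paths must be pairwise internally disjoint. Each node on a sub-path anchored at $m_i$ is a successor of $m_i$, hence a monomial contained in $m_i$; combined with the disjointness of non-adjacent $m_j$'s from Property~\ref{MIP2}, this rules out any collision between sub-paths anchored at different target monomials, while collisions within sub-paths attached at a common anchor are excluded by the defining property of $u_i$ together with the earlier observation that $P_i$ and $Q_i$ meet only at $\ell_i$. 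The resulting cycle in $G(D(\linearization))$ contradicts the acyclicity assumption, finishing the proof.
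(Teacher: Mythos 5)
Your proposal is correct and takes essentially the same route as the paper's proof: both obtain the nodes $\ell_i = m_i \cap m_{i+1}$ via \cref{thm:intersectionLinearization}, take $u_i$ to be the last common node of the two paths leaving $m_i$ towards $\ell_{i-1}$ and $\ell_i$, and glue the resulting $2k$ sub-paths into a cycle; your verification that the closed walk is simple (via the containment argument that any common node of the two paths into $\ell_i$ must equal $\ell_i$, and that sub-paths anchored at non-adjacent $m_i$, $m_j$ cannot meet) is in fact more explicit than the paper's. The one loose end is $k=3$: there the sets $\ell_j$ need not be pairwise disjoint (in particular $\ell_{i-1}=\ell_i$ is possible), so the divergence point $u_i$ may degenerate; your write-up restricts the disjointness claim to $k\geq 4$, and the paper's own proof silently passes over this case as well.
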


\begin{proof}
  Let~$m$, $m' \in \targetMonomials$.
  If~$m \cap m' \neq \varnothing$, by \cref{thm:intersectionLinearization}, we can assume that~$D(\linearization)$ contains the node~$m \cap m'$, which is a successor of~$m$ and~$m'$.
  Therefore, for every~$i \in [k]$, $\ell_i \define m_i \cap m_{i+1}$ is a node of~$D(\linearization)$ and it is a successor of~$m_i$ and~$m_{i+1}$, where $m_{k+1} \define m_1$.
  (Note that $\ell_i$ might coincide with either $m_i$ or $m_{i+1}$ if~$k = 3$.)
  Thus, for every~$i \in [k]$, there exist directed paths~$P(m_i,\ell_{i})$ and~$P(m_{i+1}, \ell_{i})$ in~$D(\linearization)$, see \cref{fig:nonlaminarFractional2}.

  For $i \in [k]$, let~$u_i$ be the last common node of~$P(m_i,\ell_{i})$ and~$P(m_i,\ell_{i-1})$, where we set $\ell_0 \define m_k \cap m_1$.
  Consider the paths $P(u_i,\ell_{i})$ and~$P(u_{i+1}, \ell_{i})$.
  These paths are arc-disjoint, because~$\ell_{i}$ contains all elements that appear in both~$m_i$ and~$m_{i+1}$, and thus, all elements that appear in~$u_i$ and~$u_{i+1}$.
  The union of the paths~$P(u_i, \ell_{i})$, $P(u_i, \ell_{i-1})$ for every~$i \in [k]$ yields a cycle in~$G(D(\linearization))$.
\end{proof}

Combining the previous results, establishes the necessity of the
requirements in \cref{thm:RIP} for the existence of integral linearizations.

\begin{proof}[Necessity proof for \cref{thm:RIP}.]
   Let $\targetMonomials$ be a set of monomials with singletons $\singletonMonomials$ such that at least one of the properties~\ref{MIP1} and~\ref{MIP2} is satisfied.
   Assume, for the sake of contradiction that there exists a simple linearization $\linearization = (n, \allMonomials, \constraints)$ with $\targetMonomials \subseteq \properMonomials$ such that $\proj[\singletonMonomials \cup \targetMonomials][P(\linearization)]$ is integral.
   By \cref{thm:nonlaminarFractional} or \cref{thm:nonlaminarFractional2}, $G(D(\linearization))$ contains a subgraph $Z$ of $D(\linearization)$ such that $G(Z)$ is a cycle.
   By \cref{thm:redundantMonomials}, we can assume that each monomial $m \in \properMonomials$ with in-degree~$0$ belongs to $\targetMonomials$ since we could otherwise construct a linearization without this monomial retaining the same properties.
   Hence, $Z$ satisfies Property~\ref{PropertyPathAbove} of \cref{thm:simpleLinearizationProjectionIntegral}, contradicting the integrality assumption.
\end{proof}

\subsection{Constructing Acyclic Linearizations}
\label{sec:constructingAcyclicLinearizations}

For the sufficiency proof for \cref{thm:RIP} we explicitly construct a simple linearization that is acyclic, provided~$\targetMonomials$ has the monomial intersection property.
For a given family of target monomials~$\targetMonomials$ and singletons~$\singletonMonomials$, define the sets
\[
   \nonemptyIntersection(\targetMonomials)
   \define
   \Big\{
      I \subseteq \targetMonomials
      \suchthat
      \bigcap_{m \in I} m \neq \varnothing
   \Big\}
   \quad
   \text{and}
   \quad
   \allMonomials' = \allMonomials'(\targetMonomials)
   \define
   \Big\{
      \bigcap_{m \in I} m
      \suchthat
      I \in \nonemptyIntersection(\targetMonomials)
   \Big\}.
\]
The set~$\nonemptyIntersection(\targetMonomials)$ contains all subsets of monomials whose intersection is non-empty, whereas~$\allMonomials'$ contains the corresponding non-empty intersections.
Notice that $\allMonomials'$ may contain singletons.

\begin{remark}
  Note that the set $\nonemptyIntersection(\targetMonomials)$ forms an
  independence system (or simplicial complex). It is sometimes called the
  \emph{nerve} of the family $\targetMonomials$, see, e.g.,
  Bj\"orner~\cite{Bjo81}.
  The set $\allMonomials'$ ordered by inclusion forms a partially ordered
  set (poset), which is closed under intersection. In fact, if we add an
  artificial maximal element~$\hat{1}$ and a minimal element $\varnothing$,
  then it is a lattice. A relevant operation in this context is the
  \emph{closure} of a set of variables (singletons)
  $s \subseteq \singletonMonomials$:
  $\cl(s) \define \bigcap \{m \in \targetMonomials \suchthat s \subseteq m\}$
  (if no $m$ with $s \subseteq m$ exists, we return $\hat{1}$); see Caspard
  and Monjardet~\cite{CasM03} for an overview of such structures. This can
  be exploited algorithmically to construct $\allMonomials'$, see
  \cite{KaibelPfetsch2002}.
\end{remark}

Using the above sets, we define $\allMonomials^\star \define \allMonomials' \cup
\singletonMonomials$ and the digraph~$D^{\star} = (\allMonomials^\star, A^{\star})$, where for~$m_1$, $m_2 \in \allMonomials^\star$, the set~$A^\star$ contains an arc~$(m_1, m_2)$ if and only if~$m_2 \subset m_1$ and there does not exist~$m_3 \in \allMonomials^\star$ satisfying~$m_2 \subset m_3 \subset m_1$.
That is, the graph~$D^\star = (\allMonomials^\star, A^\star)$ represents the Hasse diagram of~$\allMonomials^\star$, endowed with the subset relation as partial order.
We denote the undirected version of~$D^\star$ by~$G^\star$ and define, for
every node~$m \in \allMonomials^\star$, $\delta^+(m) \define \{m' \in \allMonomials^\star \suchthat (m,m') \in A^\star\}$.

Since~$\singletonMonomials \cup \targetMonomials \subseteq \allMonomials^\star$ and~$m = \bigcup_{m' \in \delta^+(m)} m'$ for every~$m \in \allMonomials^\star \setminus \singletonMonomials$, there exists a linearization~$\poset = (n, \allMonomials^\star, \constraints^\star)$ of~$\targetMonomials$ such that~$D^\star$ is the digraph associated with~$\poset$.
\Cref{fig:runningExamplePoset} shows the linearization graph of~$\poset$ for the running example.

\begin{figure}[tb]
   \centering
   \begin{tikzpicture}
      \node[nodeSingleton,label=below:\footnotesize{$\{1\}$}] (1) at (0,0) {};
      \node[nodeSingleton,label=below:\footnotesize{$\{2\}$}] (2) at (2,0) {};
      \node[nodeSingleton,label=below:\footnotesize{$\{3\}$}] (3) at (4,0) {};
      \node[nodeSingleton,label=below:\footnotesize{$\{4\}$}] (4) at (6,0) {};
      \node[nodeSingleton,label=below:\footnotesize{$\{5\}$}] (5) at (8,0) {};
      \node[nodeSingleton,label=below:\footnotesize{$\{6\}$}] (6) at (10,0) {};

      \node[nodeMonomial,label=right:\footnotesize{$\{3,4\}$}] (34) at (5,1) {};
      \node[nodeMonomial,label=right:\footnotesize{$\{4,5\}$}] (45) at (7,1) {};

      \node[nodeTarget,label=left:\footnotesize{$\{1,2,3,4\}$}] (1234) at (2,2) {};
      \node[nodeTarget,label=left:\footnotesize{$\{3,4,5\}$}] (345) at (6,2) {};
      \node[nodeTarget,label=left:\footnotesize{$\{4,5,6\}$}] (456) at (8,2) {};

      \foreach \a/\b in {%
         1234/1, 1234/2, 1234/34,
         345/34, 345/45,
         456/45, 456/6,
         34/3, 34/4,
         45/4, 45/5%
      }{
         \draw[arc] (\a) -- (\b);
      }
  \end{tikzpicture}
  \caption{The linearization graph~$D^{\star}$ for \cref{ex:runningExample}.
  \label{fig:runningExamplePoset}}
\end{figure}

\begin{lemma}
  \label{lem:posetCycle1}
  If $D^\star$ contains a subgraph $Z$ for which $G(Z)$ is a cycle and $\card{\upperNodes(Z)} = 1$, then~\ref{MIP1} holds.
\end{lemma}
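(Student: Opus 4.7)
The plan is to read off three target monomials witnessing Property~\ref{MIP1} directly from the cycle. Since $G(Z)$ is a cycle, $|\upperNodes(Z)|=|\lowerNodes(Z)|$, so $Z$ also has a unique lower node; denote the unique upper and lower nodes by $u$ and $\ell$. Then $Z$ is the union of two internally node-disjoint directed paths from $u$ to $\ell$ in $D^\star$. Let $a\neq b$ be the first children of $u$ along these two paths. Since every arc of $D^\star$ is strictly subset-decreasing, $\varnothing\neq\ell\subseteq a\cap b\subsetneq u$. In particular $u$ has proper descendants in $\allMonomials^\star$, so $u\notin\singletonMonomials$ and hence $u\in\allMonomials'$. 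Setting $I\define\{m\in\targetMonomials : m\supseteq u\}$ yields $u=\bigcap_{m\in I}m$ and $I\neq\varnothing$; any $m_3\in I$ will serve as the third monomial.

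The key step, and the one I expect to be the main obstacle, is to realize each of $a$ and $b$ as $u\cap m$ for some $m\in\targetMonomials$ with $m\not\supseteq u$. This is delicate because a child of $u$ in $D^\star$ could be a singleton $\{i\}$ that is not itself an intersection of target monomials, and such a singleton does not directly factor through a target monomial. I would treat each child $c\in\{a,b\}$ by cases. If $c\in\allMonomials'$, write $c=\bigcap_{m\in J}m$; after enlarging $J$ to contain $I$ (which does not change the intersection, since $c\subseteq u$), pick $m^*\in J\setminus I$ (nonempty because $c\subsetneq u$), so that $c\subseteq u\cap m^*\subsetneq u$ with $u\cap m^*\in\allMonomials'$; maximality of $c$ as a child of $u$ in $\allMonomials^\star$ then forces $c=u\cap m^*$. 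If $c=\{i\}\in\singletonMonomials\setminus\allMonomials'$, the inclusion $\ell\subseteq c$ forces $\ell=\{i\}$, so the other child of $u$ also contains $i$; being distinct from $\{i\}$ it must lie in $\allMonomials'$, and if every $m\in\targetMonomials$ with $i\in m$ satisfied $m\supseteq u$, this other child---a nonempty intersection of such $m$---would contain $u$, contradicting its being a proper subset of $u$. Hence some $m^*\in\targetMonomials$ with $i\in m^*$ and $m^*\not\supseteq u$ exists, and maximality of $\{i\}$ forces $u\cap m^*=\{i\}$.

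Writing $a=u\cap m_1$ and $b=u\cap m_2$, the three monomials $m_1,m_2,m_3$ are pairwise distinct: $m_1\neq m_2$ because $a\neq b$, and $m_3\notin\{m_1,m_2\}$ because $m_3\supseteq u$ while $m_1,m_2\not\supseteq u$. Distinct children of a common node in a Hasse diagram must be incomparable, so I can pick $x\in a\setminus b$ and $y\in b\setminus a$; both lie in $u\subseteq m_3$, and $x\in m_1\setminus m_2$ and $y\in m_2\setminus m_1$, which yields $m_3\cap(m_1\cup m_2)\supsetneq m_3\cap m_i$ for both $i\in[2]$. Finally $m_1\cap m_2\cap m_3\supseteq u\cap m_1\cap m_2=a\cap b\supseteq\ell\neq\varnothing$, so all requirements of Property~\ref{MIP1} are satisfied.
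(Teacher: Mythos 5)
Your proof is correct and follows essentially the same route as the paper's: both extract the three witnesses for Property~\ref{MIP1} from the unique upper node~$u$ of the cycle and its two incomparable children, using the fact that every node of~$\allMonomials'$ is the intersection of all target monomials containing it. The only cosmetic differences are that you realize the children exactly as $u \cap m_1$ and $u \cap m_2$ via Hasse-maximality (where the paper instead picks elements $s_1 \in v_1 \setminus v_2$, $s_2 \in v_2 \setminus v_1$ and then target predecessors avoiding them), and that you separately treat the singleton-child case, which in fact cannot occur because distinct children of~$u$ are incomparable yet both contain~$\ell$.
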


\begin{proof}
   Suppose there exists a subgraph~$Z$ of~$D^{\star}$ such that~$G(Z)$ is a cycle and~$\card{\upperNodes(Z)} = 1$.
   We can assume that among all such subgraphs the unique upper node of~$Z$ has maximum cardinality.
   Then there exist monomials~$u$, $\ell \in \allMonomials^\star$ such that~$\upperNodes(Z) = \setdef{u}$ and~$\lowerNodes(Z) = \setdef{\ell}$.
   Moreover, there are two distinct arc-disjoint paths~$P_1$ and~$P_2$ in~$D^\star$ connecting~$u$ and~$\ell$.
   Note that both paths consist of at least two arcs: On the one hand, we cannot have $\card{P_1} = \card{P_2} = 1$, since this would imply~$P_1 = P_2$.
   On the other hand, if one path, say~$P_1$, consists of a single arc, no~$m \in \allMonomials^\star$ exists with~$\ell \subset m \subset u$, contradicting the existence of~$P_2$ by definition of the arc set $A^\star$.

   Let~$v_1$ and~$v_2$ be the first successor of~$u$ in~$P_1$ and~$P_2$, respectively.
   Choose a target monomial $t_3 \in \pred(u) \cap \targetMonomials$, which exists by definition of $D^{\star}$.
   Moreover, let $s_1 \in v_1 \setminus v_2$ and~$s_2 \in v_2 \setminus v_1$, noting that both sets are non-empty since $v_1$ and~$v_2$ are different children of~$u$ and neither is a successor of the other.
   Finally, choose $t_1 \in \pred(v_1) \cap \targetMonomials$ such that $s_2 \notin t_1$.
   This is possible, since if each monomial in $\pred(v_1) \cap \targetMonomials$ contained $s_2$, this would have implied $s_2 \in v_1$ as well, contradicting the choice of $s_2$.
   Similarly, choose $t_2 \in \pred(v_2) \cap \targetMonomials$ such that $s_1 \notin t_2$.

   This implies that $\ell \in \succ(t_1) \cap \succ(t_2) \cap \succ(t_3)$, from which we obtain $t_1 \cap t_2 \cap t_3 \neq \varnothing$.
   Furthermore, $s_1 \in t_3 \cap (t_1 \setminus t_2)$ and $s_2 \in t_3 \cap (t_2 \setminus t_1)$ establish \ref{MIP1} for $t_1$, $t_2$ and $t_3$, which concludes the proof.
\end{proof}

\begin{lemma}
  \label{lem:posetCycle2}
  $D^\star$ does not contain a subgraph $Z$ for which $G(Z)$ is a cycle and $\card{\upperNodes(Z)} = 2$.
\end{lemma}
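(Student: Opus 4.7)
The plan is to argue by contradiction. Suppose $Z \subseteq D^{\star}$ exists with $G(Z)$ a cycle and $\card{\upperNodes(Z)} = 2$. Write $\upperNodes(Z) = \setdef{u_1,u_2}$ and $\lowerNodes(Z) = \setdef{\ell_1,\ell_2}$, and let $P_{i,j}$ denote the directed $u_i$-to-$\ell_j$ subpath of $Z$ for $i,j \in [2]$. These four paths are pairwise arc-disjoint and pairwise internally node-disjoint, since any shared internal node would constitute an additional branch point and thus lie in $\upperNodes(Z) \cup \lowerNodes(Z)$, contradicting $\card{\upperNodes(Z)} = \card{\lowerNodes(Z)} = 2$.

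First I would establish the structural setup. Since every arc of $D^{\star}$ encodes a strict inclusion, the existence of $P_{i,j}$ gives $\ell_j \subsetneq u_i$ for all $i,j$, whence $\ell_1 \cup \ell_2 \subseteq u_1 \cap u_2 \eqqcolon m^{\star}$. Because $\ell_1$ is nonempty and $\allMonomials'$ is closed under nonempty intersection (as the family of nonempty intersections of target monomials), $m^{\star} \in \allMonomials^{\star}$. Next I would show that $u_1$ and $u_2$ are $\subseteq$-incomparable: were e.g. $u_1 \subsetneq u_2$, then $u_1$ would be a source of $Z$ (in-degree zero, as upper nodes of the cycle have in-degree $0$ in $Z$), yet $\ell_1,\ell_2 \subsetneq u_1 \subsetneq u_2$ combined with the arc-disjointness of $P_{2,1}$ and $P_{2,2}$ from the paths leaving $u_1$ would, via the Hasse structure of $D^{\star}$, force one of $P_{2,1}, P_{2,2}$ to traverse $u_1$ as an internal node, contradicting $u_1$'s in-degree being $0$ in $Z$. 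Hence $m^{\star} \subsetneq u_1$ and $m^{\star} \subsetneq u_2$.

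The core of the argument would then examine the two first arcs $u_1 \to a_1$ and $u_1 \to a_2$ of $P_{1,1}$ and $P_{1,2}$; by assumption these are distinct Hasse arcs of $D^{\star}$. For each $j$, the intersection $a_j \cap u_2$ contains $\ell_j$ and thus lies in $\allMonomials^{\star}$, and we have $a_j \cap u_2 \subseteq m^{\star} \subsetneq u_1$. In the favourable case $a_j \subseteq m^{\star}$, the chain $a_j \subseteq m^{\star} \subsetneq u_1$ together with the Hasse property of $u_1 \to a_j$ (no $\allMonomials^{\star}$-element strictly between $a_j$ and $u_1$) would force $a_j = m^{\star}$. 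Applying this to both $j=1$ and $j=2$ yields $a_1 = a_2 = m^{\star}$, contradicting $a_1 \neq a_2$ and concluding the proof.

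The hard part, and the main obstacle I anticipate, is ruling out the alternative case $a_j \not\subseteq m^{\star}$. Here one must exhibit an $\allMonomials^{\star}$-element strictly between $a_j$ and $u_1$ to violate the Hasse condition. The natural candidate is a join-style construction inside the intersection closure defining $\allMonomials'$, for instance the smallest intersection of targets containing $a_j \cup m^{\star}$; showing that this element is a proper subset of $u_1$ and a proper superset of $a_j$ requires exploiting both the incomparability of $u_1,u_2$ and the symmetric first-arc analysis applied to the two out-arcs $u_2 \to c_1, u_2 \to c_2$ of $Z$. A careful combination of these two symmetric analyses should rule out all possibilities and complete the contradiction.
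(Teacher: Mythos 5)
Your proposal is incomplete at exactly the point you flag, and the gap cannot be closed in the way you suggest. In the case $a_j \not\subseteq m^\star$ you need an element of $\allMonomials^\star$ strictly between $a_j$ and $u_1$, but $\allMonomials'$ is closed only under intersections, not under unions, so ``the smallest intersection of targets containing $a_j \cup m^\star$'' may well be $u_1$ itself. Worse, the configuration you are unable to exclude genuinely occurs: take $n=6$ and $\targetMonomials = \big\{\{1,2,3,4\},\{1,2,5,6\},\{1,3\},\{2,4\},\{1,5\},\{2,6\}\big\}$. Then $\allMonomials^\star$ consists of these six sets together with $\{1,2\}$ and all singletons, and $D^\star$ contains the cycle through $\{1,2,3,4\}$, $\{1,3\}$, $\{1\}$, $\{1,5\}$, $\{1,2,5,6\}$, $\{2,6\}$, $\{2\}$, $\{2,4\}$, whose only upper nodes are $u_1=\{1,2,3,4\}$ and $u_2=\{1,2,5,6\}$. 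Here $m^\star=\{1,2\}$, the first children $a_1=\{1,3\}$ and $a_2=\{2,4\}$ are incomparable to $m^\star$, and no element of $\allMonomials^\star$ lies strictly between $a_j$ and $u_1$. Hence no unconditional argument can succeed; the statement is only provable under the additional hypothesis---made explicit in the companion lemma for $k\geq 3$ but left implicit here---that $Z$ minimizes the number of upper nodes over all cycles of $D^\star$, i.e.\ that no cycle with a single upper node exists. (In the instance above such a cycle does exist, e.g.\ $u_1\to\{1,3\}\to\{1\}$ versus $u_1\to\{1,2\}\to\{1\}$, which is the case handled by the preceding lemma; that is why the combined corollary survives.)

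The paper's proof also runs along a different line than your first-arc analysis: it shows that \emph{both} paths $P_{1,1}$ and $P_{1,2}$ leaving $u_1$ must traverse the node $m^\star = u_1\cap u_2$ (which lies in $\allMonomials^\star$ because it contains $\ell_1$), contradicting their internal node-disjointness. The reason the crossing point must be $m^\star$ itself, rather than some node incomparable to it, is precisely the no-single-upper-node-cycle hypothesis: it forces directed paths between comparable monomials to be unique, so the unique $u_1$-$\ell_j$-path is the concatenation of the unique $u_1$-$m^\star$-path with the unique $m^\star$-$\ell_j$-path. Your observation that $a_j\subseteq m^\star$ forces $a_j=m^\star$ is correct but is not where the contradiction lives: even under the minimality hypothesis the two paths need not meet $m^\star$ at their first arc, so the argument has to be carried out at the node $m^\star$ rather than at the children of $u_1$.
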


\begin{proof}
  Suppose there exists a subgraph~$Z$ of~$D^{\star}$ such that~$G(Z)$ is a cycle and~$\card{\upperNodes(Z)} = 2$.
  Let~$u_1$ and~$u_2$ be the upper nodes and~$\ell_1$ and~$\ell_2$ be the lower nodes of~$Z$.
  Then, by the definition of $\upperNodes(Z)$ and $\lowerNodes(Z)$, there exist pairwise internally node disjoint paths~$P^i_j$ connecting~$u_i$ and~$\ell_j$ for~$i$, $j \in [2]$ whose union yields~$Z$.
  Since $u_1 \cap u_2 \supseteq \ell_1 \neq \varnothing$, $u_1 \cap u_2$ is a node of~$D^\star$.

  Observe that both~$P^1_1$ and~$P^1_2$ have to traverse~$u_1 \cap u_2$:
  Assume that this is not the case. Then, for~$i \in [2]$, let~$\bar{u}_i$ be the smallest superset of~$u_1 \cap u_2$ in~$P^1_i$ and~$\bar{\ell}_i$ be the largest subset of~$u_1 \cap u_2$ in~$P^1_i$;
  these nodes exist because $u_i$ or $\ell_i$ are candidates, respectively.
  This implies~$\bar{\ell}_i \subset u_1 \cap u_2 \subset \bar{u}_i$. Thus, $G^{\star}$ does not contain the edge~$\{\bar{u}_i, \bar{\ell}_i\}$.
  Hence, both~$P^1_1$ and~$P^1_2$ have to use~$u_1 \cap u_2$ as an intermediate node.
  This, however, is a contradiction to the internal node disjointness of~$P^1_1$ and~$P^1_2$,
  and the result follows.
\end{proof}

\begin{lemma}
  \label{lem:posetCycle3}
  Let $k$ be the smallest integer such that $D^\star$ contains a subgraph $Z$ for which $G(Z)$ is a cycle and $\card{\upperNodes(Z)} = k$.
  If $k \geq 3$, then~\ref{MIP2} holds.
\end{lemma}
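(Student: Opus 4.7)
The plan is to extract~\ref{MIP2} from a minimum-size cycle in $D^\star$. Fix $Z$ with $G(Z)$ a cycle and $\card{\upperNodes(Z)} = k \geq 3$ minimal, and order its upper nodes $u_1, \dotsc, u_k$ and lower nodes $\ell_1, \dotsc, \ell_k$ cyclically so that $Z$ contains directed paths from $u_i$ to both $\ell_{i-1}$ and $\ell_i$ (indices modulo $k$). Since each $u_i$ must lie in $\allMonomials'$ and is therefore an intersection of target monomials, I pick some $t_i \in \targetMonomials$ with $u_i \subseteq t_i$ and claim that $t_1, \dotsc, t_k$ is the desired family. Intersection of consecutive monomials is immediate: the two directed paths in $Z$ ending at $\ell_i$ give $\ell_i \subseteq u_i \cap u_{i+1}$, so $t_i \cap t_{i+1} \supseteq \ell_i \neq \varnothing$.

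The real work lies in showing that the $t_i$ are pairwise distinct and that $t_i \cap t_j = \varnothing$ for $i, j$ that are non-adjacent modulo $k$. I plan to argue both by contradiction, exploiting minimality of $k$. Suppose $t_i \cap t_j \neq \varnothing$ for some pair $(i,j)$ violating the required pattern, and set $w \define t_i \cap t_j$, which lies in $\allMonomials^\star$ because this set is closed under nonempty intersections. The Hasse diagram $D^\star$ then provides directed paths $t_i \to u_i$, $t_i \to w$, $t_j \to u_j$ and $t_j \to w$. Splicing these four paths at $t_i, t_j, w$ and concatenating with whichever of the two sides of $Z$ between $u_i$ and $u_j$ carries fewer upper $Z$-nodes yields a new cycle $Z'$ in $G(D^\star)$. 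A direct count gives $\card{\upperNodes(Z')} \leq r$, where $r$ is the number of upper $Z$-nodes on the retained side (including the endpoints $u_i, u_j$): the interior $Z$-nodes contribute $r-2$, while $t_i$ and $t_j$ each contribute at most one ``forking'' node, and $u_i, u_j$ become non-upper because they inherit only a single outgoing arc. For the adjacent-pair case with $t_i = t_j$ the forking collapses to a single node and the count drops to $1$; for the non-adjacent case with $k \geq 4$, the shorter $Z$-side carries at most $\lfloor k/2 \rfloor + 1 \leq k - 1$ upper nodes. Either way $\card{\upperNodes(Z')} < k$, contradicting the minimality of $k$ (and, for counts $\leq 2$, additionally \cref{lem:posetCycle2}).

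The main obstacle I foresee is certifying that $Z'$ is a simple cycle. The four auxiliary Hasse paths may share internal nodes among themselves or intersect the retained portion of $Z$, and in degenerate situations such as $t_i = u_i$, $w \in \{t_i, t_j\}$, or $u_i \subseteq u_j$ some of these paths become trivial. I plan to handle these via the standard shortcut trick: at every shared node I pass to the sub-cycle through that node, which can only further decrease the upper-node count and so preserves the contradiction. In the degenerate cases the upper-node bookkeeping is unchanged up to swapping the role of $t_i$ with $u_i$ (and $t_j$ with $u_j$), so the same arithmetic yields $\card{\upperNodes(Z')} < k$ and completes the contradiction.
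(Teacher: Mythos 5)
Your proof follows essentially the same route as the paper's: pick a target monomial $t_i$ above each upper node $u_i$ of a minimum cycle, obtain the adjacent nonempty intersections from the lower nodes, and refute any forbidden intersection $t_i \cap t_j$ by splicing Hasse-diagram paths through the node $t_i \cap t_j$ with one side of the cycle so as to produce a cycle with strictly fewer upper nodes, contradicting minimality. The only organizational difference is that you fold pairwise distinctness of the $t_i$ into the same splicing construction, whereas the paper disposes of it separately by showing $t_i \notin \succ(t_j)$; your treatment of the degenerate/non-simple-walk cases via passing to a sub-cycle is at the same level of detail as the paper's own ``shortcut'' appeal.
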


\begin{proof}
  Suppose that~$Z$ is chosen such that $G(Z)$ is a cycle and $k \define \card{\upperNodes(Z)} \geq 3$ is minimal.
  Let~$\upperNodes(Z) = \{u_1, \dots, u_k\}$
  and~$\lowerNodes(Z) = \{\ell_1, \dots, \ell_k\}$ be such that there
  exist pairwise internally node disjoint paths in~$D^\star$ connecting~$u_i$
  and~$\ell_i$ as well as~$u_i$ and~$\ell_{i+1}$, where we set $\ell_{k+1} = \ell_1$.

  For every~$i \in [k]$, let~$t_i \in \targetMonomials$ be such
  that~$u_i \in \succ(t_i)$. Oberserve that~$t_i \notin \succ(t_j)$ for all
  distinct~$i$, $j \in [k]$: Otherwise, one of the following two cases
  occurs.
  \begin{enumerate*}[label=(\roman{*})]
  \item There exist two distinct paths from~$t_j$ to~$u_i \cap u_j$,
    one using~$u_i$ and the other~$u_j$ as intermediate node; note that $u_i \cap u_j$ is a node by definition of~$D^\star$. The
    union of these paths contains a cycle with a single upper node,
    contradicting the minimality assumption on~$Z$.
  \item Every path in~$D^\star$
    from~$t_j$ to~$u_i \cap u_j$ that traverses~$u_i$ also traverses~$u_j$;
    w.l.o.g.\ $u_i$ is a proper successor of $u_j$.
    Thus, there exists a path $P(u_j,u_i)$ not completely contained in~$Z$, since otherwise~$u_i$ cannot be an upper node.
    This shows that there exists a ``shortcut'', which produces a cycle with less upper nodes, a contradiction to the minimality of~$Z$.
  \end{enumerate*}

  Next, observe that~$t_i \cap t_j \neq \varnothing$ if~$i$ and~$j$ differ
  by at most~1 (modulo~$k$), because~$u_i \subseteq t_i$
  and~$u_j \subseteq t_j$ and~$u_i \cap u_j \neq \varnothing$ in this
  case.

  Thus, it remains to show that~$t_i \cap t_j = \varnothing$ if~$i$
  and~$j$ differ by at least~$2$ (modulo~$k$).
  If this was false, there would exist~$i$, $j \in [k]$ that differ by at
  least~$2$ (modulo~$k$) such that~$t_i \cap t_j \neq \varnothing$.
  Then, there exist two internally node disjoint paths~$P_i$ and~$P_j$
  connecting~$t_i$ and~$t_i \cap t_j$ as well as~$t_j$ and~$t_i \cap t_j$,
  respectively. Consider the subgraph~$G'$ of~$G^\star$ that is obtained by
  \begin{itemize}
  \item starting a walk in~$t_i$,
  \item following~$P_i$ to reach~$t_i \cap t_j$,
  \item going along~$P_j$ in reverse order to approach~$t_j$,
  \item following a path to reach~$u_j$ (which exists since~$u_j \in
    \succ(t_j)$),
  \item continuing along a path contained in~$Z$ to reach~$u_i$, and
  \item going back to~$t_i$ via a (reversed) path
    between~$t_i$ and~$u_i$ (which exists since~$u_i \in
    \succ(t_i)$).
  \end{itemize}
  If~$G'$ is a cycle, then~$\card{\upperNodes(G')} < k$, because the
  upper nodes~$u_i$ and~$u_j$ in~$Z$ are replaced by the upper
  nodes~$t_i$ and~$t_j$ and the remaining nodes
  in~$\upperNodes(G')$ form a proper subset of~$\upperNodes(Z)$, since we
  traverse only one path connecting~$u_i$ and~$u_j$ in~$Z$ passing through at least one other node in $\upperNodes(Z)$.
  If~$G'$ is not a cycle, there exists a short cut~$Z'$ in~$G'$ being a cycle and
  fulfilling~$\card{\upperNodes(Z')} < k$ by the same arguments. Thus, we
  obtain a contradiction to the minimality assumption on~$Z$. As a
  consequence, $t_i \cap t_j \neq \varnothing$ if and only
  if~$i$ and~$j$ differ by at most~$1$, which finally shows the assertion.
\end{proof}

Combining \cref{lem:posetCycle1,lem:posetCycle2,lem:posetCycle3}, we have proved:

\begin{corollary}
  \label{thm:posetCycle}
  If~$G^\star$ contains a cycle, then~\ref{MIP1} or~\ref{MIP2} hold.
\end{corollary}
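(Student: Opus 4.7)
The plan is to carry out a short case distinction on the minimum number of upper nodes that can appear in any cycle of $G^\star$. Specifically, assume $G^\star$ contains a cycle; then $D^\star$ contains at least one subgraph $Z$ whose underlying undirected graph $G(Z)$ is a cycle. Since $D^\star$ is the Hasse diagram of a poset under inclusion, it is acyclic as a directed graph, so any such $Z$ must contain at least one node of out-degree $\geq 2$ inside $Z$; equivalently, $\upperNodes(Z) \neq \varnothing$. This lets me define $k$ as the smallest integer such that $D^\star$ contains some subgraph $Z$ with $G(Z)$ a cycle and $\card{\upperNodes(Z)} = k$, with $k \geq 1$.

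I would then split into three cases. If $k = 1$, \cref{lem:posetCycle1} directly implies that \ref{MIP1} holds. The case $k = 2$ is ruled out by \cref{lem:posetCycle2}, so it cannot occur. If $k \geq 3$, \cref{lem:posetCycle3} directly implies that \ref{MIP2} holds (note that the statement of \cref{lem:posetCycle3} is exactly about the minimum value $k$, which matches our setup). In each occurring case, one of \ref{MIP1} and \ref{MIP2} is obtained, concluding the proof.

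There is really no substantive obstacle here: the three preceding lemmas have already done all the work of linking structural properties of cycles in $D^\star$ to the monomial intersection properties. The only thing to be careful about is the observation that $\upperNodes(Z)$ is always non-empty for a cycle $Z$ in the Hasse diagram $D^\star$, which justifies that the minimum $k$ is well-defined and positive; after that, the corollary follows immediately by combining the three lemmas.
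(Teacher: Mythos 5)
Your proof is correct and follows essentially the same route as the paper, which simply combines \cref{lem:posetCycle1,lem:posetCycle2,lem:posetCycle3}; you merely make explicit the case distinction on the minimal number of upper nodes and the (correct) observation that acyclicity of $D^\star$ forces $\card{\upperNodes(Z)} \geq 1$ for any undirected cycle $Z$.
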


We are now ready to provide the missing part of the proof of \cref{thm:RIP}.

\begin{proof}[Sufficiency proof for \cref{thm:RIP}]
   Assume that~$\targetMonomials$ has the monomial intersection property.
   Let $\poset$ be the linearization described above.
   Then \cref{thm:posetCycle} shows that~$G^{\star}$ does not contain a cycle.
   Hence, the linearization graph $G(D(\poset))$ is acylic, showing that $\proj[\singletonMonomials \cup \targetMonomials](P(\poset))$ is integral by \cref{thm:simpleLinearizationProjectionIntegral}.
\end{proof}

\begin{remark}\label{rem:PropertyIntersectionPoset}
  A frequently used property in polynomial optimization is the running
  intersection property, see, e.g., Lasserre~\cite{Lasserre2006} and Kojima
  and Muramatsu~\cite{KojimaMuramatsu2009}.
  A set of monomials~$\targetMonomials$ has the
  running intersection property if there exists an
  ordering~$t_1, \dots, t_k$ of the sets in~$\targetMonomials$ such that for
  each~$j \in \{2, \dots, k\}$ there exists~$i \in [j-1]$ such that
  $t_j \cap \bigcup_{r = 1}^{j-1} t_r = t_j \cap t_i$.  While the running
  intersection property excludes~\ref{MIP2}, cf.\ Beeri et
  al.~\cite[Theorem~3.4(1)]{BeeriEtAl1983}, it is still possible
  that~\ref{MIP1} occurs, see \cref{ex:runningExample} with~$t_1 =
  \{1,2,3,4\}$, $t_2 = \{3,4,5\}$, and~$t_3 = \{4,5,6\}$.
\end{remark}

\section{Algorithmic Consequences}
\label{SectionAlgorithmicConsequences}

Consider a set $\targetMonomials$ that satisfies the monomial intersection property, i.e., neither \ref{MIP1} nor \ref{MIP2} hold.
In this section we discuss three approaches to solve an instance of~\eqref{eq:polyMin} in polynomial time.
While the first approach is due to Del Pia und Khajavirad~\cite{DelPiaKhajavirad2018}, the two further approaches follow from \cref{sec:integrality} and \cref{sec:existence}.

\subsection{Relation to Flower Inequalities}

Del Pia and Khajavirad~\cite{DelPiaKhajavirad2018} derived a complete
linear description of the multilinear polytope provided the intersections
of target monomials have a certain structure.
Using our notation, their result reads as follows.

\begin{theorem}[Del Pia and Khajavirad~\cite{DelPiaKhajavirad2018}]
  Let~$\targetMonomials$ be a set of monomials in variables~$x_1, \dots,
  x_n$.
  Then, $\BMP_n(\targetMonomials)$ is completely described by box constraints, the
  inequalities of the standard linearization, as well as a (potentially)
  exponentially large class of so-called flower inequalities, which can be
  separated in time~$\orderO(n^2 \card{\targetMonomials}^2 + n
  \card{\targetMonomials}^3)$, if and only if neither of the following two
  conditions hold:
  \begin{enumerate}[label=(\Alph*$'$)]
  \item
    \label{DelPia1}
    there exist pairwise distinct~$i_1, i_2, i_3 \in [n]$ such
    that
    \[
    \big\{ \{i_1, i_3\}, \{i_2, i_3\}, \{i_1, i_2, i_3\} \big\}
    \subseteq
    \{m \cap \{i_1, i_2, i_3\} \suchthat m \in \targetMonomials\},
    \]

  \item
    \label{DelPia2}
    there exist pairwise different~$i_1, \dots, i_k \in [n]$ and
    pairwise different~$m_1, \dots, m_k \in \targetMonomials$ with~$k \geq
    3$ such that
    \begin{itemize}
    \item $i_j \in m_j, m_{j+1}$ for every~$j \in [k-1]$,

    \item $i_k \in m_1 \cap m_k$, and

    \item $i_j$, $j \in [k]$, is not contained in any further target monomial.
    \end{itemize}
  \end{enumerate}
\end{theorem}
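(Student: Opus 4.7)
My plan is to prove the theorem in two directions: necessity via explicit fractional points, and sufficiency via a structural/decomposition argument that goes beyond \cref{thm:RIP}.

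For necessity, I would produce a fractional point in the relaxation $Q$ defined by the box constraints, standard linearization, and all flower inequalities that lies outside $\BMP_n(\targetMonomials)$. If \ref{DelPia1} holds, the three relevant target monomials have intersections with $\setdef{i_1,i_2,i_3}$ equal to $\setdef{i_1,i_3}$, $\setdef{i_2,i_3}$, and $\setdef{i_1,i_2,i_3}$; a half-integral point on these variables works, and only finitely many flower inequalities touch them so feasibility can be checked directly. If \ref{DelPia2} holds, I would mimic \cref{thm:SimpleLinearizationCyclicCardinalityHigher}: singletons outside $\setdef{i_1,\dots,i_k}$ to $1$, each $i_j$ and each $m_j$ to $\tfrac12$, and the remaining target monomials to the induced product. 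The hypothesis that each $i_j$ appears in no further target monomial isolates the cyclic interaction and makes the flower inequalities tractable to verify.

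For sufficiency, I would first observe that \ref{DelPia1} is equivalent to \ref{MIP1}: choose witnesses $i_3 \in m_1\cap m_2\cap m_3$, $i_1 \in (m_3\cap m_1)\setminus m_2$, and $i_2 \in (m_3\cap m_2)\setminus m_1$, which exist precisely because $m_3\cap(m_1\cup m_2)$ properly contains both $m_3\cap m_1$ and $m_3\cap m_2$. However, $\neg\ref{DelPia2}$ is strictly weaker than $\neg\ref{MIP2}$, so \cref{thm:RIP} does not directly apply: there are families that admit no integral simple linearization yet are cut out by flower inequalities. My approach is to argue by contradiction on a fractional vertex $y^\star$ of $Q$. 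The absence of \ref{DelPia1} forces the target monomials containing any fixed singleton to form a locally laminar structure, which allows a Fourier--Motzkin-style reduction in the spirit of \cref{thm:fourierMotzkin} to peel off one such monomial at a time; iterating yields a reduced instance in which the absence of \ref{DelPia2} rules out any residual cyclic configuration of singleton witnesses, and one concludes that $y^\star$ was already integral. The polynomial-time separation claim then follows from enumerating the relevant flowers around each target monomial.

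The main technical obstacle is the sufficiency direction, specifically the decoupling step that relies on the absence of \ref{DelPia1}. While the necessity constructions directly reuse the machinery of \cref{sec:integralityNecessity}, sufficiency genuinely requires more than what \cref{thm:RIP} provides, because flower inequalities can cut off fractionalities that no auxiliary-monomial linearization can. I expect the correct formulation to involve induction on $\card{\targetMonomials}$ with a careful case analysis of the ``centre'' of a maximal flower around a target monomial, using the characterization \ref{DelPia1} $\Leftrightarrow$ \ref{MIP1} to obtain laminarity locally and the flower inequalities themselves to account for the residual cyclic overlaps that \cref{thm:RIP} cannot handle.
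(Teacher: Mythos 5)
This statement is not proved in the paper at all: it is quoted verbatim from Del Pia and Khajavirad~\cite{DelPiaKhajavirad2018} as an external result, and the paper's own contribution in this section is only the pair of lemmas showing that the disjunction ``\ref{DelPia1} or \ref{DelPia2}'' is equivalent to the disjunction ``\ref{MIP1} or \ref{MIP2}''. So there is no in-paper proof to compare your attempt against; you are attempting to reprove the cited theorem from scratch.

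As a standalone proof attempt, there is a genuine gap, and it starts from a false premise. You assert that there are families that admit no integral simple linearization yet are cut out by flower inequalities, and you build your sufficiency strategy around compensating for this supposed discrepancy. But the two characterizations coincide exactly: \cref{thm:RIP} says an integral simple linearization exists if and only if neither \ref{MIP1} nor \ref{MIP2} holds, and the two lemmas following the quoted theorem show that $\neg\ref{MIP1} \wedge \neg\ref{MIP2}$ is equivalent to $\neg\ref{DelPia1} \wedge \neg\ref{DelPia2}$ --- your observation that $\neg\ref{DelPia2}$ is weaker than $\neg\ref{MIP2}$ in isolation does not survive passing to the disjunctions, because an instance satisfying \ref{DelPia2} but not \ref{MIP2} necessarily satisfies \ref{MIP1} (this is exactly the $k=3$ case analysis in the paper's second equivalence lemma). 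Beyond that, the sufficiency direction is a plan rather than a proof: the claim that absence of \ref{DelPia1} yields ``locally laminar'' structure permitting a Fourier--Motzkin peeling, and that the residual system is precisely the flower inequalities, is nowhere substantiated; note also that even granting the existence of an integral extended formulation via \cref{thm:RIP}, one must still identify its projection onto the original space with the box, standard-linearization, and flower inequalities, which is the actual technical content of Del Pia and Khajavirad's theorem and is not supplied here. The necessity constructions and the separation bound $\orderO(n^2 \card{\targetMonomials}^2 + n \card{\targetMonomials}^3)$ are likewise asserted without the verifications that would make them proofs.
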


In the following, we show that~$\targetMonomials$ satisfies
Conditions~\ref{DelPia1} and~\ref{DelPia2} if and only if it satisfies
Properties~\ref{MIP1} and~\ref{MIP2}.
Thus, the results presented in this article complement the results by Del
Pia and Khajavirad~\cite{DelPiaKhajavirad2018}, because its now possible to
avoid the exponentially many inequalities in the description of~$\BMP_n(\targetMonomials)$ by
using a polynomial size extended formulation that is given by~$P(\poset)$.
In fact, we will see in the next section that the extended formulation has
linear size provided Properties~\ref{MIP1} and~\ref{MIP2}
(resp.\ref{DelPia1} and~\ref{DelPia2}) hold.

\begin{lemma}
  Let~$\targetMonomials$ be a set of monomials in variables~$x_1, \dots,
  x_n$ that satisfies Property~\ref{MIP1} or~\ref{MIP2}.
  Then, Condition~\ref{DelPia1} or~\ref{DelPia2} holds as well.
\end{lemma}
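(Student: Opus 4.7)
My plan is to split along the two cases of the hypothesis and, in each, produce the required witnesses.

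In Case 1, where \ref{MIP1} holds with target monomials $m_1, m_2, m_3$, I would construct the witnessing indices for \ref{DelPia1} directly from the proper-superset conditions, using the same three monomials. First pick $i_3 \in m_1 \cap m_2 \cap m_3$, which is non-empty by hypothesis. Since $m_3 \cap (m_1 \cup m_2) = (m_3 \cap m_1) \cup (m_3 \cap m_2)$ properly contains each of $m_3 \cap m_1$ and $m_3 \cap m_2$, neither is a subset of the other; this lets me pick $i_1 \in (m_3 \cap m_1) \setminus m_2$ and $i_2 \in (m_3 \cap m_2) \setminus m_1$. Membership in $m_1$ versus $m_2$ forces $i_1, i_2, i_3$ to be pairwise distinct, and a short verification shows $m_1 \cap \{i_1, i_2, i_3\} = \{i_1, i_3\}$, $m_2 \cap \{i_1, i_2, i_3\} = \{i_2, i_3\}$, and $m_3 \cap \{i_1, i_2, i_3\} = \{i_1, i_2, i_3\}$, establishing \ref{DelPia1}.

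In Case 2, suppose \ref{MIP2} holds and let $m_1, \dots, m_k$ be a cyclic family realizing \ref{MIP2} with $k$ minimal. For each $j \in [k]$ I would pick a representative $i_j \in m_j \cap m_{j+1}$ (indices mod $k$), which is non-empty by hypothesis. I would first argue that these can be chosen pairwise distinct: for $k \geq 4$ this follows from the ``iff'' clause of \ref{MIP2}, since any element shared between $m_j \cap m_{j+1}$ and $m_{j'} \cap m_{j'+1}$ with $|j-j'|>1$ would lie in two non-adjacent cycle monomials, contradicting the cycle pattern; for $k = 3$ the analogous obstruction is $m_1 \cap m_2 \cap m_3 \neq \varnothing$, in which case I would branch to Case 1 after checking that the proper-superset structure of \ref{MIP1} is forced by the distinctness of the three monomials. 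If none of the chosen $i_j$'s lies in a further target monomial, \ref{DelPia2} holds directly. Otherwise some $i_j$ lies in a third monomial $m' \in \targetMonomials$, and I would analyze the interaction of $m'$ with the cycle to either extract a \ref{MIP1}-witness from the triple $m_j, m_{j+1}, m'$ (invoking Case 1) or to produce a strictly shorter cyclic family by rerouting through $m'$, contradicting the minimality of $k$.

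The main obstacle will be the ``extra monomial'' step in Case 2. The non-adjacency pattern of \ref{MIP2} strongly constrains how $m'$ can meet $m_1, \dots, m_k$, but turning such an incidence into either a \ref{MIP1}-configuration or a \ref{MIP2}-shortcut is most delicate for small $k$, where fewer cycle edges are available to reroute through $m'$. Here I expect to exploit the fact that $i_j$ lies in at least three distinct target monomials, combined with the cycle-intersection structure, to force the proper-superset configuration required by \ref{MIP1}, thereby exhibiting \ref{DelPia1} via the Case 1 analysis.
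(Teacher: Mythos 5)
Your Case~1 is correct and is essentially the paper's own argument for \ref{MIP1}~$\Rightarrow$~\ref{DelPia1}; in fact your choice $i_1 \in (m_3 \cap m_1) \setminus m_2$ and $i_2 \in (m_3 \cap m_2) \setminus m_1$ is slightly more careful than the paper's ``$i_1 \in m_1 \setminus m_2$'', since one genuinely needs $i_1, i_2 \in m_3$ for the restriction $m_3 \cap \{i_1,i_2,i_3\}$ to equal $\{i_1,i_2,i_3\}$. There is nothing to object to in that half.

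Case~2 contains a genuine gap, and it sits exactly where you flagged difficulty. Your claim that, for $k=3$, the obstruction to choosing distinct representatives is $m_1 \cap m_2 \cap m_3 \neq \varnothing$ and that this, together with the distinctness of the three monomials, forces the proper-superset structure of \ref{MIP1}, is false. Take $m_1=\{1,2\}$, $m_2=\{1,3\}$, $m_3=\{1,4\}$: these are pairwise distinct, every pairwise (and the triple) intersection equals $\{1\}$, so \ref{MIP2} holds with $k=3$ as literally stated; yet $m_3\cap(m_1\cup m_2)=\{1\}=m_3\cap m_1=m_3\cap m_2$, so \ref{MIP1} fails for every ordering of the triple, and neither \ref{DelPia1} (no monomial meets a three-element set in three elements) nor \ref{DelPia2} (all the relevant intersections are the single element $1$, so no pairwise distinct $i_j$ exist) can be produced. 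So your reduction to Case~1 does not go through, and indeed this example suggests the implication cannot be rescued for \ref{MIP2} read literally. For what it is worth, the paper's own two-sentence treatment of this case (``no element in $m_i\cap m_j$ can be contained in a further target monomial, which implies \ref{DelPia2}'') silently assumes both that pairwise distinct representatives $i_j$ exist and that ``further target monomial'' ranges only over $m_1,\dots,m_k$ rather than over all of $\targetMonomials$; your instinct to supply the missing distinctness argument (which is correct for $k\ge 4$ via the if-and-only-if clause) and to worry about monomials outside the cycle is sound and more honest than the published proof. But the $k=3$ branch as you propose it does not close, and the ``extra monomial'' rerouting step remains an unproven sketch rather than an argument.
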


\begin{proof}
  Suppose~$\targetMonomials$ satisfies Property~\ref{MIP1}, where~$m_1, m_2,
  m_3 \in \targetMonomials$ are the corresponding monomials.
  Let~$i_3 \in m_1 \cap m_2 \cap m_3$ and~$i_1 \in m_1 \setminus m_2$ as
  well as~$i_2 \in m_2 \setminus m_1$.
  The elements~$i_1$, $i_2$, $i_3$ are distinct and exist, because~$m_1
  \cap m_2 \cap m_3$ is non-empty and $m_3 \cap (m_1 \cup m_2)$ is a
  proper superset of both~$m_3 \cap m_1$ and~$m_3 \cap m_2$.
  Consequently, $\big\{ \{i_1, i_3\}, \{i_2, i_3\}, \{i_1, i_2, i_3\}
  \big\} \subseteq \big\{m \cap \{i_1, i_2, i_3\} \suchthat m \in \{m_1,
  m_2, m_3\}\big\}$, showing Condition~\ref{DelPia1}.

  If~$\targetMonomials$ satisfies Property~\ref{MIP2} for monomials~$m_1,
  \dots m_k$, $k \geq 3$, then~$m_i \cap m_j \neq \varnothing$ if and only
  if~$i$ and~$j$ differ by at most~$1$ modulo~$k$.
  Hence, no element in~$m_i \cap m_j$ can be contained in a further target
  monomial, which implies Condition~\ref{DelPia2}.
\end{proof}

\begin{lemma}
  Let~$\targetMonomials$ be a set of monomials in variables~$x_1, \dots,
  x_n$ that satisfies Condition~\ref{DelPia1} or~\ref{DelPia2}.
  Then, Property~\ref{MIP1} or~\ref{MIP2} hold as well.
\end{lemma}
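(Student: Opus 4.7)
My plan is to handle the two conditions separately. If $\targetMonomials$ satisfies~\ref{DelPia1}, I pick monomials $m_a, m_b, m_c \in \targetMonomials$ with $m_a \cap \{i_1,i_2,i_3\} = \{i_1,i_3\}$, $m_b \cap \{i_1,i_2,i_3\} = \{i_2,i_3\}$, and $m_c \cap \{i_1,i_2,i_3\} = \{i_1,i_2,i_3\}$. Since these traces are pairwise distinct, so are $m_a$, $m_b$, $m_c$. I then verify Property~\ref{MIP1} for the triple $(m_a, m_b, m_c)$: the common intersection contains $i_3$ and is hence non-empty; furthermore $i_2 \in m_c \cap m_b \subseteq m_c \cap (m_a \cup m_b)$ while $i_2 \notin m_a$, and symmetrically $i_1 \in m_c \cap m_a \subseteq m_c \cap (m_a \cup m_b)$ while $i_1 \notin m_b$; therefore $m_c \cap (m_a \cup m_b)$ strictly contains both $m_c \cap m_a$ and $m_c \cap m_b$.

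If $\targetMonomials$ satisfies~\ref{DelPia2}, the hypothesis furnishes pairwise distinct $m_1, \ldots, m_k \in \targetMonomials$ with $k \geq 3$ whose consecutive intersections (indices modulo~$k$) are non-empty. My plan is a shortest-cycle argument: among all sequences $(n_1, \ldots, n_{k'})$ of pairwise distinct target monomials with $k' \geq 3$ and $n_r \cap n_{r+1} \neq \varnothing$ for every $r$ modulo~$k'$, select one of minimum length~$k'$; such a sequence exists by hypothesis. I claim this minimal cycle witnesses Property~\ref{MIP2}. For $k' = 3$, every pair of indices is at distance at most $1$ modulo~$3$, so the biconditional in~\ref{MIP2} reduces to the pairwise non-emptiness that already holds. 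For $k' \geq 4$, suppose for contradiction that $n_a \cap n_b \neq \varnothing$ for some $a < b$ with $2 \leq b - a \leq k' - 2$; then $(n_a, n_{a+1}, \ldots, n_b)$ and $(n_b, n_{b+1}, \ldots, n_{k'}, n_1, \ldots, n_a)$ are cycles of pairwise distinct target monomials with consecutive intersections non-empty, of lengths $b - a + 1$ and $k' - (b - a) + 1$, each at least $3$ and strictly smaller than $k'$, contradicting minimality.

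I expect Case~\ref{DelPia1} to be a direct unpacking of definitions. The substantive step is the shortest-cycle dissection in Case~\ref{DelPia2}; the only subtlety is verifying that the two sub-cycles arising from a chord are themselves valid competitors in the minimum (pairwise distinct monomials, consecutive intersections non-empty, length at least three), which follows immediately from the chord-length bounds.
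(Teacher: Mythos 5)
Your treatment of Condition~\ref{DelPia1} is correct and coincides with the paper's: pick three monomials realizing the three traces (pairwise distinct because the traces are), note that $i_3$ lies in the triple intersection, and use $i_1$ and $i_2$ to witness the two proper inclusions required by~\ref{MIP1}.

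For Condition~\ref{DelPia2} you take a genuinely different route. The paper minimizes over witnesses of~\ref{DelPia2} itself, retaining the elements $i_1,\dots,i_k$ and their privacy property, and in the case $k=3$ with $m_1\cap m_2\cap m_3\neq\varnothing$ it falls back to Property~\ref{MIP1}, using $i_3\in(m_3\cap m_1)\setminus m_2$ and $i_2\in(m_3\cap m_2)\setminus m_1$. You instead discard the elements entirely, minimize over cycles in the intersection graph of $\targetMonomials$, and always conclude~\ref{MIP2}. Your chord argument for $k'\geq 4$ is correct and, frankly, more carefully executed than the paper's one-line version; note also that chordlessness then forces the consecutive intersections $n_r\cap n_{r+1}$ to be pairwise disjoint, so that case is robust under any reasonable reading of~\ref{MIP2}. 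The $k'=3$ case is where you and the paper part ways. Read literally, Property~\ref{MIP2} with $k=3$ collapses to ``three pairwise distinct target monomials with pairwise non-empty intersections,'' since all index pairs differ by at most $1$ modulo $3$; under that reading your step is formally valid and the lemma is proved as stated. But be aware that this literal reading is in tension with the rest of the paper: it would make~\ref{MIP2} hold for $\targetMonomials=\{\{1,2\},\{1,3\},\{1,4\}\}$, for which $D^\star$ is a tree, so that \cref{thm:RIP} would fail, and the cycle built in \cref{thm:nonlaminarFractional2} degenerates when all three pairwise intersections coincide in a single node. This is precisely why the paper routes the $k=3$ case through~\ref{MIP1}: the privacy of $i_1,i_2,i_3$ supplies the two proper inclusions needed there, information your minimal intersection-cycle no longer carries (its three pairwise intersections could all be the same singleton, in which case neither~\ref{MIP1} nor any non-degenerate form of~\ref{MIP2} holds for that particular triple). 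So: correct against the definitions as written, but if~\ref{MIP2} is meant in the stronger sense the surrounding results require, your $k'=3$ step has a gap, and the repair is the paper's move of keeping the~\ref{DelPia2} elements and concluding~\ref{MIP1} when the triple intersection is non-empty.
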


\begin{proof}
  Suppose~$\targetMonomials$ satisfies Condition~\ref{DelPia1} for~$i_1$,
  $i_2$, $i_3 \in [n]$.
  Then, there exist~$m_1$, $m_2$, $m_3 \in \targetMonomials$ such that
  their restrictions to~$\{m_1, m_2, m_3\}$ are~$\{i_1, i_3\}$, $\{i_2,
  i_3\}$, and~$\{i_1, i_2, i_3\}$.
  Since~$i_3$ is contained in all these monomials, $m_1 \cap m_2 \cap m_3
  \neq \varnothing$.
  Moreover, $m_3 \cap (m_1 \cup m_2)$ contains~$i_2$, which is not
  contained in~$m_1$.
  Hence, $m_3 \cap (m_1 \cup m_2)$ is a proper superset of~$m_3 \cap m_1$.
  Analogously one shows~$m_3 \cap (m_1 \cup m_2) \supset m_3 \cap m_2$,
  which implies Property~\ref{MIP1}.

  If~$\targetMonomials$ satisfies Condition~\ref{DelPia2}, choose the
  corresponding elements~$i_1, \dots, i_k$ and monomials~$m_1, \dots, m_k$
  such that~$k$ is minimal.
  If~$m_i \cap m_j \neq \varnothing$ if and only if~$\card{i - j} \leq 1$
  for every~$i$, $j \in [k]$, Property~\ref{MIP2} follows immediately.
  For this reason, assume there exist distinct~$m_1, m_2, m_3 \in \targetMonomials$ such
  that~$m_{1} \cap m_{2} \cap m_{3} \neq \varnothing$.
  If~$k > 3$, observe that there exists a proper subset of~$\{i_1,
  \dots i_k\}$ and~$\{m_1, \dots, m_k\}$ that fulfills
  Condition~\ref{DelPia2}, contradicting the minimality assumption on~$k$.
  Hence, if~$k > 3$, Property~\ref{MIP2} holds.

  If~$k = 3$, the elements~$i_1$ and~$i_3$ from~\ref{DelPia2} fulfill~$i_3
  \in (m_3 \cap m_1) \setminus m_2$ and~$i_2 \in (m_3 \cap m_2) \setminus m_1$.
  Hence, $m_3 \cap (m_1 \cup m_2)$ is a proper superset of both~$m_3 \cap
  m_1$ and~$m_3 \cap m_2$.
  Because $m_1 \cap m_2 \cap m_3 \neq \varnothing$ by assumption,
  \ref{MIP1} follows.
\end{proof}

By combining these two lemmata, we obtain that the inequalities presented
by Del Pia and Khajavirad are a complete linear description of~$\BMP_n(\targetMonomials)$
in its original space if and only if~$P(\poset)$ is an integral polytope.

\subsection{Construction of the Extended Formulation}

As mentioned above, our results give rise to an extended formulation for $\BMP_n(\targetMonomials)$ in case $P(\poset)$ is an integral polytope.
We now discuss the verification of this condition and the computation of $\poset$ algorithmically.

\begin{theorem}\label{thm:RIPalgorithm}
  Let~$\targetMonomials$ be a set of monomials in variables~$x_1,\dots,x_n$.
  Using the RAM model,

  \begin{enumerate}[label=(\alph{*})]
  \item\label{it:optComplexA} it can be verified in time polynomial
    in~$\card{\targetMonomials}$ and~$n$ whether~$\targetMonomials$
    satisfies the monomial intersection property.

  \item\label{it:optComplexB} if~$\targetMonomials$ has the monomial
    intersection property, Problem~\eqref{eq:polyMin} with coefficients~$a
    \in \Q^{\targetMonomials}$ can be solved in polynomial time.
  \end{enumerate}
\end{theorem}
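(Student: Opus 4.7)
The plan is to address the two parts of \cref{thm:RIPalgorithm} separately, leveraging the structural results from \cref{sec:integrality} and the explicit construction of \cref{sec:constructingAcyclicLinearizations}.

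\emph{Part~\ref{it:optComplexA}.} To decide whether $\targetMonomials$ satisfies the monomial intersection property, I would verify the negations of \ref{MIP1} and \ref{MIP2} directly. For \ref{MIP1}, enumerate all ordered triples $(m_1,m_2,m_3)$ of pairwise distinct monomials in $\targetMonomials$; there are at most $\card{\targetMonomials}^3$ such triples, and representing monomials as bit vectors in $\{0,1\}^n$, each check ``$m_1 \cap m_2 \cap m_3 \neq \varnothing$ and $m_3 \cap (m_1 \cup m_2) \supsetneq m_3 \cap m_i$ for $i \in \{1,2\}$'' costs $O(n)$ time, giving total runtime $O(n\card{\targetMonomials}^3)$. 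For \ref{MIP2}, build the intersection graph $H$ on vertex set $\targetMonomials$ with an edge between two monomials iff they have non-empty intersection, which takes $O(n\card{\targetMonomials}^2)$ time. Since a shortest cycle in any graph is necessarily chordless, $H$ contains an induced cycle realizing the structure in \ref{MIP2} if and only if $H$ contains any cycle, and this is decided in $O(\card{\targetMonomials}^2)$ time by depth-first search.

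\emph{Part~\ref{it:optComplexB}.} Assuming the monomial intersection property holds, I would construct the simple linearization $\poset$ introduced in \cref{sec:constructingAcyclicLinearizations} and solve the linear program
\begin{equation*}
  \min\Big\{\sum_{m \in \targetMonomials} a_m y_m \;:\; y \in P(\poset)\Big\}.
\end{equation*}
The sufficiency proof of \cref{thm:RIP} shows that $G(D(\poset))$ is acyclic, so by \cref{thm:simpleLinearizationIntegral} the polytope $P(\poset)$ is integral. A polynomial-time LP solver that returns a vertex (for instance, the ellipsoid method combined with a basis-completion step, or simplex with Bland's rule) thus produces an integer optimum $y^\star$ in polynomial time. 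Setting $x^\star_i \define y^\star_{\{i\}}$, \cref{thm:linearizationConsistency} guarantees that $x^\star$ is a feasible binary assignment for~\eqref{eq:polyMin} and its objective value coincides with the LP optimum, yielding an optimal solution.

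\emph{Main obstacle.} The most delicate step is showing that $\allMonomials^\star = \allMonomials' \cup \singletonMonomials$ has size polynomial in $n$ and $\card{\targetMonomials}$ and can be constructed in polynomial time under the monomial intersection property. For a generic family of monomials the closure lattice $\allMonomials'$ may grow exponentially; what rescues us is that $\neg$\ref{MIP1} forces pairwise intersections of any three monomials sharing a common element to be nested, while $\neg$\ref{MIP2} forbids long chordless cycles in the intersection graph. Translating these structural restrictions into a polynomial size bound for $\allMonomials^\star$, and into a closure-based enumeration procedure in the spirit of Kaibel and Pfetsch~\cite{KaibelPfetsch2002} referenced in the remark preceding the construction of $\poset$, is the technical heart of the argument.
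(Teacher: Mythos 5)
Your part~\ref{it:optComplexA} matches the paper's argument (triple enumeration for~\ref{MIP1}, intersection graph plus cycle detection for~\ref{MIP2}), and your reduction of~\ref{MIP2} to mere cycle existence via shortest-cycle-is-chordless is the same observation the paper uses. The problem is part~\ref{it:optComplexB}: you correctly identify that everything hinges on $\card{\allMonomials^\star}$ being polynomial and on $\poset$ being computable in polynomial time, but you then explicitly leave that step open (``is the technical heart of the argument''). That is precisely the content of the theorem, so as written the proof has a genuine gap.

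The missing idea is that you do not need to ``translate'' $\neg$\ref{MIP1} and $\neg$\ref{MIP2} into a size bound at all; the acyclicity of $G(D(\poset))$, which you already have from \cref{thm:posetCycle}, does the work directly. Since $G(D^\star)$ is a forest, the subgraph induced by a monomial $m$ and its successors is a directed tree, and the children $\delta^+(m)$ must partition $m$ (a common element in two children would create an undirected cycle). A short induction then gives $\card{\succ(m)} \leq 2\card{m} - 1$: indeed $\card{\succ(m)} = 1 + \sum_{m' \in \delta^+(m)} \card{\succ(m')} \leq 1 + 2\card{m} - \card{\delta^+(m)} \leq 2\card{m} - 1$ because every proper monomial has at least two children. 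Summing over $\targetMonomials$ yields $\card{\allMonomials^\star} \in \orderO(n \cdot \card{\targetMonomials})$, and the Hasse-diagram algorithm of Kaibel and Pfetsch runs in time polynomial in this output size, so the LP~\eqref{EquationLinearization} for $\poset$ has size $\orderO(n \cdot \card{\targetMonomials})$ and can be generated and solved in polynomial time. With that bound in place, the rest of your part~\ref{it:optComplexB} (integrality of $P(\poset)$ via acyclicity, a vertex-returning LP solver, and \cref{thm:linearizationConsistency} to read off the binary solution) is correct and is essentially the paper's route.
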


\begin{proof}
  To rule out that \ref{MIP1} holds requires to iterate over all triples
  in~$\targetMonomials$ and computing as well as comparing their (pairwise)
  intersections, which can be done in polynomial time. Property \ref{MIP2}
  can be verified by constructing a graph that contains for each monomial
  in~$\targetMonomials$ a node as well as an edge~$\{t_1, t_2\}$ for each pair of
  distinct~$t_1$, $t_2 \in \targetMonomials$ with~$t_1 \cap t_2 \neq \varnothing$. Then, deciding whether \ref{MIP2}
  holds, reduces to deciding whether the constructed graph contains a
  cycle, which can be done in polynomial time by BFS. Thus,
  the statement in~\ref{it:optComplexA} holds.

  For part~\ref{it:optComplexB}, recall that the
  linearization~$\poset$ is acyclic if the monomial
  intersection property holds. Hence, Problem~\eqref{eq:polyMin} can be
  solved by solving the linear program~\eqref{EquationLinearization} corresponding
  to~$\poset$ by \cref{thm:simpleLinearizationProjectionIntegral}. This linear program
  has~$\card{\allMonomials^\star}$ variables
  and~$\orderO(n\card{\allMonomials^\star})$ constraints. Hence,
  \ref{it:optComplexB} follows if we can show
  that~$\card{\allMonomials^\star}$ is polynomial in~$n$
  and~$\card{\targetMonomials}$ and~$\poset$ can be computed
  in time polynomial in~$n$ and~$\card{\targetMonomials}$.
  To this end, we consider a single monomial~$m \in \allMonomials^\star$
  first.

  \begin{claim}
    The number of successors of~$m$ in~$D^\star$ is at most~$2\,\card{m} - 1$.
  \end{claim}

  \begin{claimproof}
    Since~$\poset$ is acyclic, the subgraph of~$D^\star$
    containing~$m$ and all of its successors is a directed tree. Because
    this structure holds recursively for all successors of~$m$, we can use
    induction to prove the claim. Obviously, the claim is true for the
    base case~$\card{m} = 1$.

    For the inductive step, we assume that for a monomial~$m'$
    with~$\card{m'} \leq k$ the number of successors satisfies
    $\card{\succ(m')} \leq 2\,\card{m'} - 1$. If we are given a
    momomial~$m$ with~$\card{m} = k + 1$, the child nodes of~$m$ form a
    partition of the set~$m$, because otherwise, $G(\succ(m))$ contains a
    cycle. Then, the number of successors of~$m$ can be bounded by
    \[
    \card{\succ(m)} = 1 + \sum_{m' \in \delta^+(m)} \card{\succ(m')}
    \leq 1 + \sum_{m' \in \delta^+(m)} (2\, \card{m'} - 1)
    =
    1 + 2 (k+1) - \card{\delta^+(m)}.
    \]
    Since every non-trivial monomial has at least two successors, this
    expression is at most $2k + 1$, proving the claim.
  \end{claimproof}

  As a consequence, an upper bound for the number of nodes in~$D^{\star}$ is
  given by adding these upper bounds for all~$m \in \targetMonomials$
  (since $\succ(\targetMonomials) = \allMonomials^\star$) to get
  $\sum_{m \in \targetMonomials} (2\, \card{m} - 1)
  \in \orderO(n \cdot \card{\targetMonomials})$.

  To see that~$\poset$ can be computed in time polynomial
  in~$n$ and~$\card{\targetMonomials}$, recall
  that~$D(\poset)$ corresponds to the Hasse diagram
  of~$\allMonomials^\star$, endowed with the subset relation as partial
  order. For this reason, the algorithm presented in~\cite{KaibelPfetsch2002} can be used to
  compute~$\poset$, see \cref{rem:PropertyIntersectionPoset}. Because the running time of this
  algorithm is polynomial in the output size, the linear
  program~\eqref{EquationLinearization} can be generated in time polynomial
  in~$n \cdot \card{\targetMonomials}$ and has size~$\orderO(n \cdot \card{\targetMonomials})$. Thus, the optimization
  problem~\eqref{eq:polyMin} can be solved in polynomial time.
\end{proof}

\subsection{Combinatorial Algorithm}

In the previous section we saw that if $\targetMonomials$ satisfies the monomial intersection property, then the acyclic linearization $\linearization^\star$ can be computed in polynomial time.
We now present a combinatorial algorithm that, given such an acyclic linearization $\linearization = (n, \allMonomials, \constraints)$, solves the optimization problem~\eqref{eq:polyMin} by means of dynamic programming.
We consider coefficients $a \in \Q^\allMonomials$, i.e., in addition to~\eqref{eq:polyMin} we allow costs on the singletons as well.

\begin{theorem}
  \label{TheoremCombinatorialAlgorithm}
  Given a linearization $\linearization = (n, \allMonomials, \constraints)$
  for which $G(D(\linearization))$ is acyclic and an objective vector $a
  \in \Q^{\allMonomials}$, the optimization problem~\eqref{eq:polyMin} can
  be solved in time $\orderO(|\allMonomials|^2)$ (in the RAM model).
\end{theorem}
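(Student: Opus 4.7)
The plan is to exploit the forest structure of $G(D(\linearization))$ implied by acyclicity via tree dynamic programming. Since $G(D(\linearization))$ is acyclic, every connected component is a tree, and each undirected edge has the form $\{m^\star, m\}$ for a unique pair with $m \in \constraint$ and $m^\star = \bigcup\constraint$; hence every \AND-constraint appears in the forest as a star of $|\constraint|+1$ nodes centered at its result. By \cref{thm:simpleLinearizationIntegral} the polytope $P(\linearization)$ is already integral, so an optimal vertex corresponds directly to an optimal solution of~\eqref{eq:polyMin}.

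I would root each tree of the forest at an arbitrary node, traverse it in post-order, and for every $m \in \allMonomials$ and each $b \in \{0,1\}$ maintain the DP value
\[
   \phi_m(b) \define \min\Big\{\sum_{m' \in T_m} a_{m'}\,y_{m'} \suchthat y \in \{0,1\}^{T_m},\ y_m = b,\ y \text{ satisfies every constraint fully inside } T_m\Big\},
\]
where $T_m$ is the subtree rooted at $m$. Singletons serve as the base case with $\phi_{\{i\}}(b) = b\,a_{\{i\}}$. At an internal node $m$ with tree-children $c_1,\dotsc,c_\ell$, at most one constraint incident to $m$ straddles the edge to $m$'s tree-parent and is therefore deferred, while every other constraint incident to $m$ lies entirely in $\{m\}\cup T_{c_1}\cup\dotsb\cup T_{c_\ell}$ and is enforced explicitly when combining the $\phi_{c_j}$'s: for each such constraint with result $m^\star$ and operands $m_1,\dotsc,m_r$, the two cases "$y_{m^\star}=1$ forces every $y_{m_i}=1$" and "$y_{m^\star}=0$ requires at least one $y_{m_i}=0$" can be handled in $\orderO(r)$ time by picking the pointwise best choice per operand and, if needed, flipping the cheapest one to $0$. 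Summing $\min(\phi_r(0),\phi_r(1))$ over the roots $r$ of the forest yields the optimum of~\eqref{eq:polyMin}, and a corresponding $x \in \{0,1\}^n$ is reconstructed by a standard DP traceback; correctness rests on \cref{thm:linearizationConsistency}.

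The main subtlety concerns a \emph{flipped} straddling constraint, i.e., one whose result $m^\star$ sits below some operand $o$ in the rooted tree. In this case $o$ is necessarily the tree-parent of $m^\star$ (any operand of a constraint centered at $m^\star$ that lies above $m^\star$ must be adjacent to $m^\star$ in the tree), so the DP at $m^\star$ is upgraded to carry the two conditional quantities $\phi_{m^\star}(b\mid y_o=0)$ and $\phi_{m^\star}(b\mid y_o=1)$ for $b\in\{0,1\}$; this adds only $\orderO(1)$ state across the straddling edge and postpones the constraint's enforcement to the step that processes $o$. The work at each node $m$ is proportional to its tree-degree, so the overall running time is $\orderO\big(\sum_{m\in\allMonomials}\deg(m)\big) = \orderO(|E|) = \orderO(|\allMonomials|)$, well within the claimed $\orderO(|\allMonomials|^2)$ bound, which leaves room for indexing the possibly large set representations of the monomials in the input.
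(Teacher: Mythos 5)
Your proposal is correct and follows essentially the same route as the paper's proof: root each tree of the acyclic linearization graph, propagate two optimal values per node toward the root, and handle the unique constraint straddling each parent edge by conditioning the child's table on the parent's value in the ``flipped'' orientation, which is exactly the paper's bookkeeping that distinguishes monomials whose parent edge points downward from those whose parent edge points upward. The remaining differences are cosmetic (an arbitrary root versus a root of in-degree zero, scalar tables with traceback versus storing full argmin vectors), and both variants stay within the claimed $\orderO(\card{\allMonomials}^2)$ bound.
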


\begin{proof}
Let $D = D(\linearization)$ be the digraph of $\linearization$ with node set $\allMonomials$ and arc set $A$ and let $G = G(D)$ be the underlying undirected graph with edge set $E$.
Clearly, if $G$ is not connected, then the optimization problem decomposes into subproblems that can be solved independently.
Hence, we can assume that $G$ is connected, i.e., it is a tree.
We can assume that $\properMonomials \neq \varnothing$ since otherwise the problem is trivial.

\DeclareDocumentCommand\upMonomials{}{\allMonomials^{\text{up}}}
\DeclareDocumentCommand\downMonomials{}{\allMonomials^{\text{down}}}

Let the \emph{root monomial} $r \in \properMonomials$ be a monomial that is not contained in any \AND-constraint, i.e., $r$ has in-degree $0$ in $D(\linearization)$.
The map $\alpha\colon \allMonomials \setminus \{r\} \to \allMonomials$ shall denote the direction towards $r$ by setting $\alpha(m)$ (for each $m \in \allMonomials \setminus \{r\}$) to be the second node on the unique $m$-$r$-path.
Our dynamic program will propagate information within $G$ towards $r$.
We partition the set of non-root monomials into two sets $\upMonomials$ and $\downMonomials$ as follows.
Let $\upMonomials \coloneqq \{ m \in \allMonomials \setminus \setdef{r} \suchthat (\alpha(m),m) \in A \}$
and $\downMonomials \coloneqq \{ m \in \allMonomials \setminus \setdef{r} \suchthat (m,\alpha(m)) \in A \}$.

\DeclareDocumentCommand\zeroUp{}{\text{zero}^{\text{up}}}
\DeclareDocumentCommand\zeroDown{}{\text{zero}^{\text{down}}}
\DeclareDocumentCommand\oneUp{}{\text{one}^{\text{up}}}
\DeclareDocumentCommand\oneDown{}{\text{one}^{\text{down}}}

We now define the information that is propagated through the dynamic program.
For a node $m \in \allMonomials \setminus \{r\}$, define $\allMonomials(m)$ to be the node set of the connected component of $G \setminus \{\{m,\alpha(m)\}\}$ that contains $m$.
Define $\allMonomials(r) \coloneqq \allMonomials$.
For monomial subsets $\allMonomials' \subseteq \allMonomials$ define $Y(\allMonomials') \subseteq \{0,1\}^{\allMonomials'}$ as the set of vectors that satisfy all constraints of $\linearization$ that only involve monomials in $\allMonomials'$.
Moreover, we write $a(y) \coloneqq \sum_{m \in \allMonomials'} a_m y_m$ for vectors $y \in \R^{\allMonomials'}$.
Let $\zeroUp,\oneUp\colon \upMonomials \cup \{r\} \to \{0,1\}^\star$ be defined such that
\begin{align*}
  \zeroUp(m) & \in \arg\min\big\{ a(y) \suchthat y \in Y(\allMonomials(m)), ~y_m = 0 \big\} \text{ and} \\
  \oneUp(m) & \in \arg\min\big\{ a(y) \suchthat y \in Y(\allMonomials(m)), ~y_m = 1 \big\}
\end{align*}
hold.
Let $\zeroDown,\oneDown\colon \downMonomials \to \{0,1\}^\star$ be defined such that
\begin{align*}
  \zeroDown(m) & \in \arg\min\big\{ a(y) \suchthat y \in \{0,1\}^{\allMonomials(m)} \text{ and } (y,0) \in Y(\allMonomials(m) \cup \{\alpha(m)\} ) \big\} \text{ and} \\
  \oneDown(m) & \in \arg\min\big\{ a(y) \suchthat y \in \{0,1\}^{\allMonomials(m)} \text{ and } (y,1) \in Y(\allMonomials(m) \cup \{\alpha(m)\} ) \big\}
\end{align*}
hold, where the last coordinate in $(y,0)$ and $(y,1)$ shall correspond to monomial $m$.
Note that in particular the coefficient $a_{\alpha(m)}$ does not play a role in the definitions of $\zeroDown$ and $\oneDown$.

Clearly, the computation of $\zeroUp(r)$ and $\oneUp(r)$ solves the problem since the optimum value is $\min \{ a(\zeroUp(r)), a(\oneUp(r)) \}$.

We now describe how $\zeroUp$, $\oneUp$, $\zeroDown$ and $\oneDown$ can be computed recursively.
To this end, observe that $\zeroUp(m) = 0$ and $\oneUp(m) = 1$ holds for all $m \in \allMonomials$ with $\allMonomials(m) = \{m\}$.
The other nodes are considered in order of decreasing distance to $r$ (in $G$), which ensures that $\zeroUp$ and $\oneUp$ or $\zeroDown$ and $\oneDown$ are known for each of the neighbors with a larger distance to $r$.
Let $m \in \allMonomials$ be such a monomial, which may be the resultant of an \AND-constraint $\constraint = \{m_1, \dotsc, m_k \}$.
If this is not the case, e.g., if $m$ is a singleton that is contained in two \AND-constraints, then we let $\constraint = \varnothing$ and $k = 0$.
Moreover, let $\bar{\constraint}_1, \bar{\constraint}_2, \dotsc, \bar{\constraint}_\ell \in \constraints$ be those constraints in which $m$ is contained and whose resultant is not $\alpha(m)$.
Let $\bar{m}_i \coloneqq \bigcup \bar{\constraint}_i$ for $i \in [\ell]$ be their respective resultant monomials.
Clearly, $\bar{m}_i \in \downMonomials$ holds for all $i \in [\ell]$.
We distinguish two cases, depending on the propagation direction of $m$.

\medskip

\noindent
\textbf{Case 1:} $m \in \upMonomials \cup \{r\}$. \\
The vector $\oneUp(m)$ is constructed as follows.
First, set $\oneUp(m)_m \coloneqq 1$.
Second, set $\oneUp(m)_{\allMonomials(\bar{m}_i)} \coloneqq \oneDown(\bar{m}_i)$ for $i \in [\ell]$.
Third, set $\oneUp(m)_{\allMonomials(m_i)} \coloneqq \oneUp(m_i)$ for $i \in [k]$.
By induction, we only have to verify feasibility with respect to the constraints that involve $m$ (and do not involve $\alpha(m)$).
Clearly, if $\constraint \neq \varnothing$, then this constraint is satisfied since all participating coordinates of $\oneUp(m)$ are equal to $1$.
By definition of $\oneDown$, for $i \in [\ell]$, constraint $\bar{\constraint}_i$ is satisfied by $\oneDown(\bar{m}_i)$ extended with a $1$-entry at index $m$, and thus also by $\oneUp(m)$.
By induction, $a(\oneUp(m))$ is minimum.

The vector $\zeroUp(m)$ is constructed as follows.
First, set $\zeroUp(m)_m \coloneqq 0$.
Second, set $\zeroUp(m)_{\allMonomials(\bar{m}_i)} \coloneqq \zeroDown(\bar{m}_i)$ for $i \in [\ell]$.
As above, this ensures that, for $i \in [\ell]$, constraint $\bar{\constraint}_i$ is satisfied since $\zeroDown(\bar{m}_i)$ extended with a $0$-entry at index $m$ satisfies this constraint.
Third, for $i \in [k]$, an optimal choice for $\zeroUp(m)_{\allMonomials(m_i)}$ is determined by choosing the better (with respect to $a$) among $\zeroUp(m_i)$ and $\oneUp(m_i)$.
However, if $\constraint \neq \varnothing$, we have to be careful if $a(\oneUp(m_i)) \leq a(\zeroUp(m_i))$ holds for all $i \in [k]$, since a solution with $y_{m_i} = 1$ for all $i$ and $y_m = 0$ clearly violates constraint $\constraint$.
In this case it is feasible and optimal to select $\zeroUp(m)_{\allMonomials(m_i)} \coloneqq \zeroUp(m_i)$ for some index $i \in [k]$ for which $a(\zeroUp(m_i)) - a(\oneUp(m_i))$ is minimum (and $a(\oneUp(m_i))_{\allMonomials(m_j)} \coloneqq \oneUp(m_j)$ for all other indices).

\medskip

\noindent
\textbf{Case 2:} $m \in \downMonomials$. \\
Let $I \coloneqq \{ i \in [k] \suchthat m_i \neq \alpha(m) \}$ index those monomials in $\constraint$ of which we have computed $\zeroUp$ and $\oneUp$.
To construct the vector $\oneDown(m)$, we have to decide whether $\oneDown(m)_m$ should be equal to $0$ or equal to $1$.
To this end, we construct a candidate vector for each case.

We set $y^0_m \coloneqq 0$ and $y^0_{\allMonomials(\bar{m}_i)} \coloneqq \zeroDown(\bar{m}_i)$ for $i \in [\ell]$.
For $i \in I$, an optimal choice for $y^0_{\allMonomials(m_i)}$ is determined by choosing the better (with respect to $a$) among $\zeroUp(m_i)$ and $\oneUp(m_i)$.
However, we have to be careful if $a(\oneUp(m_i)) \leq a(\zeroUp(m_i))$ holds for all $i \in I$ since a solution with $y^0_{m_i} = 1$ for all $i \in I$ and $y^0_m = 0$, extended with a $1$-entry at index $\alpha(m)$, clearly violates constraint $\constraint$.
In this case it is optimal (and feasible) to select $y^0_{\allMonomials(m_i)} \coloneqq \zeroUp(m_i)$ for some index $i \in I$ for which $a(\zeroUp(m_i)) - a(\oneUp(m_i))$ is minimum (and $a(\oneUp(m_i))_{\allMonomials(m_j)} \coloneqq \oneUp(m_j)$ for all other indices).

We set $y^1_m \coloneqq 1$, $y^1_{\allMonomials(\bar{m}_i)} \coloneqq \oneDown(\bar{m}_i)$ for $i \in [\ell]$, and $y^1_{\allMonomials(m_i)} \coloneqq \oneUp(m_i)$ for $i \in I$.
By induction, we only have to verify feasibility with respect to the constraints that involve $m$.
Clearly, $\constraint$ is satisfied by $y^1$ extended with a $1$-entry at index $\alpha(m)$ since all participating coordinates are equal to $1$.
By definition of $\oneDown$, for $i \in [\ell]$, constraint $\constraint_i$ is satisfied by $\oneDown(\bar{m}_i)$ extended with a $1$-entry at index $m$, and thus also by $y^1$.
By induction, $a(\oneUp(m))$ is minimum (among those vectors with $y_m = 1$).

We can now select $\oneDown(m) \in \argmin\{ a(y^0), a(y^1) \}$.

The vector $\zeroDown(m)$ is constructed as follows.
First, set $\zeroDown(m)_m \coloneqq 0$.
Second, set $\zeroDown(m)_{\allMonomials(\bar{m}_i)} \coloneqq \zeroDown(\bar{m}_i)$ for $i \in [\ell]$.
This again ensures that, for $i \in [\ell]$, constraint $\constraint_i$ is satisfied since $\zeroDown(\bar{m}_i)_2$ extended with a $0$-entry at index $m$ satisfies this constraint.
Third, for all $i \in I$, the optimal choice for $\zeroDown(m)_{\allMonomials(m_i)}$ is determined by choosing the better (with respect to $a$) among $\zeroUp(m_i)$ and $\oneUp(m_i)$.
Constraint $\constraint$ is satisfied by the vector $\zeroDown(m)$ extended with a $0$-entry at index $\alpha(m)$ since $\zeroDown(m)_m = 0$.

This concludes the case distinction, showing how a vector $y \in Y(\allMonomials)$ with mininum $a(y)$ can be constructed.
To bound the running time, observe that we can account for all computations involving some vector $y \in \{\zeroUp(m), \oneUp(m), \zeroDown(m), \oneDown(m) \}$ at the step in which it is constructed.
This yields a running time of $\orderO(\card{\allMonomials})$ per propagation step.
Since less than $\card{\allMonomials}$-many such steps are required, the quadratic running time follows.
\end{proof}

Using our knowledge on the construction of $\linearization^\star$ for a given set $\targetMonomials$ that satisfies the monomial intersection property, we obtain the following strengthening of \cref{thm:RIPalgorithm}~\ref{it:optComplexB}.

\begin{corollary}
  Let~$\targetMonomials$ be a set of monomials in variables~$x_1,\dots,x_n$ that satisfy the monomial intersection property.
  Then, Problem~\eqref{eq:polyMin} with coefficients $a\in
  \Q^{\targetMonomials}$ can be solved in time $\orderO(\min(n^3 \cdot
  |\targetMonomials|^2, n^2 \cdot |\targetMonomials|^3))$ (in the RAM model).
\end{corollary}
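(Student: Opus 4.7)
The plan is to apply the combinatorial dynamic program of \cref{TheoremCombinatorialAlgorithm} to the explicit acyclic linearization $\poset$ constructed in \cref{sec:constructingAcyclicLinearizations}; the two expressions inside the minimum will arise from two complementary upper bounds on $|\allMonomials^\star|$ and a matching pair of budgets for constructing $\poset$. By the sufficiency direction of \cref{thm:RIP}, the monomial intersection property guarantees that $G(D(\poset))$ is acyclic, so \cref{TheoremCombinatorialAlgorithm} applies to $\poset$ and solves~\eqref{eq:polyMin} in $\orderO(|\allMonomials^\star|^2)$ time once $D^\star$ has been built.

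For the first term I would re-use the bound $|\allMonomials^\star|\in\orderO(n\cdot|\targetMonomials|)$ already established in the proof of \cref{thm:RIPalgorithm}. This immediately bounds the dynamic program by $\orderO(n^2\cdot|\targetMonomials|^2)$; constructing $D^\star$ can be done by scanning, for every monomial and every variable it contains, a candidate list of at most $\orderO(|\allMonomials^\star|)$ potential direct successors in $\orderO(n)$ per comparison, which fits into $\orderO(n\cdot|\allMonomials^\star|^2)=\orderO(n^3\cdot|\targetMonomials|^2)$. For the second term I would establish the complementary bound $|\allMonomials^\star|\in\orderO(n+|\targetMonomials|^2)$: the failure of Property~\ref{MIP1} forces that whenever $t_1,t_2,t_3\in\targetMonomials$ share a common element, the sets $t_1\cap t_3$ and $t_2\cap t_3$ are nested, hence $t_1\cap t_2\cap t_3$ coincides with one of them, and an induction on the number of intersected target monomials then shows that every element of $\allMonomials^\star\setminus(\targetMonomials\cup\singletonMonomials)$ is already a pairwise intersection of two elements of $\targetMonomials$. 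Under this structural fact, enumerating and de-duplicating all pairwise intersections costs $\orderO(n\cdot|\targetMonomials|^2)$, and the Hasse diagram can be extracted by comparing each monomial against only the $\orderO(|\targetMonomials|)$ candidate generators of its direct successors rather than all of $\allMonomials^\star$, fitting into $\orderO(n\cdot|\targetMonomials|\cdot|\allMonomials^\star|)=\orderO(n^2\cdot|\targetMonomials|^3)$.

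Adding the dominated $\orderO(n^2\cdot|\targetMonomials|^2)$ cost of the dynamic program to whichever of the two construction bounds is smaller then yields the claimed $\orderO(\min(n^3\cdot|\targetMonomials|^2,\,n^2\cdot|\targetMonomials|^3))$ running time. The main obstacle I anticipate is the second route: both the structural collapse of higher-order intersections to pairwise ones under the monomial intersection property and its algorithmic exploitation to trim an $|\allMonomials^\star|/|\targetMonomials|$ factor from the Hasse-diagram extraction require careful book-keeping, without which one would only obtain the naive $\orderO(n\cdot|\allMonomials^\star|^2)$ bound and thereby lose the advantage of the second term whenever $|\targetMonomials|$ is large compared to $n$.
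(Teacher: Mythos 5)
Your proposal shares the paper's skeleton exactly: build the acyclic linearization $\poset$, run the dynamic program of \cref{TheoremCombinatorialAlgorithm} in $\orderO(\card{\allMonomials^\star}^2) = \orderO(n^2\card{\targetMonomials}^2)$ time, and observe that this is dominated by the cost of constructing $\poset$, which accounts for the stated bound. Where you diverge is in justifying the construction cost $\orderO(\min(n^3\card{\targetMonomials}^2, n^2\card{\targetMonomials}^3))$: the paper simply delegates this to the algorithm of \cite{KaibelPfetsch2002} (where the $\min$ arises as a factor $\min(n,\card{\targetMonomials})$ in that algorithm's running time), whereas you attempt a self-contained derivation via two complementary constructions. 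Your structural observation for the second branch is correct and worth recording: if Property~\ref{MIP1} fails, then for any triple with common element the pairwise intersections with the third monomial are nested, and an induction shows every element of $\allMonomials'$ is already a pairwise intersection, giving $\card{\allMonomials^\star}\in\orderO(n+\card{\targetMonomials}^2)$; the candidate set $\{m\cap t : t\in\targetMonomials\}$ for the children of $m$ is likewise correct. The genuine gap is in the runtime accounting for extracting the \emph{covering} relations: in both branches you only budget for generating candidate subsets, not for deciding which candidates are maximal (i.e., are direct successors rather than merely successors). A naive maximality check compares candidates pairwise, which in the first branch costs up to $\orderO(n\,\card{\allMonomials^\star}^3)$ and in the second up to $\orderO(n(\card{\targetMonomials}+n)^2)$ per node, and neither fits the claimed budgets without a further idea (e.g., exploiting that the children of each node partition it, or the incremental scheme of \cite{KaibelPfetsch2002}). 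Since the bound is attainable by the cited algorithm the corollary stands, but as written your construction step does not yet establish it.
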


\begin{proof}
  The problem can be solved by computing $\linearization^\star$ and running the algorithm from \cref{TheoremCombinatorialAlgorithm}.
  As in the proof of \cref{thm:RIPalgorithm}, the number of monomials of $\linearization^\star$ is at most $n \cdot |\targetMonomials|$.
  This leads to a running time of $\orderO(n^2 \cdot |\targetMonomials|^2)$ for the algorithm from \cref{TheoremCombinatorialAlgorithm}.
  However, this is dominated by the running time $\orderO(\min(n^3 \cdot |\targetMonomials|^2, n^2 \cdot |\targetMonomials|^3))$ that is required to compute $\linearization^\star$~\cite{KaibelPfetsch2002}.
\end{proof}

\section{Outlook}

The focus of this paper is on simple linearizations.
The only direct implication of our results on non-simple linearizations~$\linearization = (n, \allMonomials, \constraints)$, i.e., those for which proper monomials may be the resultant of several \AND-constraints, is the following sufficient criterion for deciding whether~$P(\linearization)$ is integral:
we can think of the set~$\constraints$ as the union of sets~$\constraints_i$ for $i \in \mathcal{I}$, corresponding to several simple linearizations~$\linearization_i$.
These simple linearizations can be found by selecting one \AND-constraint for every proper monomial being the resultant of at least two \AND-constraints.
We thus have
\[
  \proj[\singletonMonomials \cup \targetMonomials][P(\linearization)]
  =
  \proj[\singletonMonomials \cup \targetMonomials]\bigg(\bigcap_{i \in \mathcal{I}} P(\linearization_i)\bigg)
  \subseteq
  \bigcap_{i \in \mathcal{I}} \proj[\singletonMonomials \cup \targetMonomials][P(\linearization_i)].
\]
Consequently, if one of the simple linearizations~$\linearization_i$ is integral, also~$\proj[\singletonMonomials \cup \targetMonomials][P(\linearization)]$ is integral.
If the set of target monomials, however, does not have the monomial intersection property, no integral simple linearization exists.
An interesting question is whether there exists an integral non-simple linearization for a set $\targetMonomials$ not having the monomial intersection property.

More generally, it is interesting to investigate the strength of (simple or non-simple) linearizations in case no integral linearization exists.
Clearly, applying the linearization $\linearization^\star$ from \cref{sec:constructingAcyclicLinearizations} intuitively seems to be a good idea, but there is no clear dominance relationship if $\targetMonomials$ does not satisfy the monomial intersection property.
To see this, consider the instance with $n=3$ and $\targetMonomials = \{ \{1,2\}, \{2,3\}, \{1,2,3\} \}$, depicted in \cref{fig:hasseBad}.

\begin{figure}[htb]
  \centering
  \begin{tikzpicture}
    \node[nodeSingleton,label=below:\footnotesize{$\{1\}$}] (1) at (1,0) {};
    \node[nodeSingleton,label=below:\footnotesize{$\{2\}$}] (2) at (2,0) {};
    \node[nodeSingleton,label=below:\footnotesize{$\{3\}$}] (3) at (3,0) {};

    \node[nodeTarget,label=left:\footnotesize{$\{1,2\}$}] (12) at (1,1) {};
    \node[nodeTarget,label=right:\footnotesize{$\{2,3\}$}] (23) at (3,1) {};
    \node[nodeTarget,label=right:\footnotesize{$\{1,2,3\}$}] (123) at (2,2) {};

    \draw[arc] (123) -- (12);
    \draw[arc] (123) -- (23);
    \draw[arc] (12) -- (1);
    \draw[arc] (12) -- (2);
    \draw[arc] (23) -- (2);
    \draw[arc] (23) -- (3);
  \end{tikzpicture}
  \caption{Example for which $P(\linearization^\star)$ does not dominate all relaxations that arise from simple linearizations.}
  \label{fig:hasseBad}
\end{figure}

The relaxation $P(\linearization^\star)$ does not imply inequality~\eqref{eq:linearizationSum} for $\constraint = \{ \{1,2\}, \{3\} \}$, which reads $y_{\{1,2\}} + y_{\{3\}} \leq y_{\{1,2,3\}} + 1$.
Hence, it is interesting to study how to select a good simple linearization depending on the given polynomial.

\medskip

\paragraph{\textbf{Acknowledgments}.}
We thank Monique Laurent for fruitful discussions.

\bibliographystyle{siam}
\bibliography{polynomialLinearization}

\end{document}